\definecolor{light-gray}{gray}{0.7}
\algrenewcommand\algorithmicindent{0.6em}
\newtheorem{theorem}{Theorem}[section]
\newtheorem{proposition}[theorem]{Proposition}
\newtheorem{lemma}[theorem]{Lemma}
\newtheorem*{stmnt*}{Statement}
\newcommand{\p}{{{\mathrm{P}}}}
\newcommand{\np}{{{\mathrm{NP}}}}
\theoremstyle{definition}
\newtheorem{definition}[theorem]{Definition}
\newtheorem{example}[theorem]{Example}
\newtheorem{corollary}[theorem]{Corollary}
\newcommand{\calA}{\mathcal{A}}
\newcommand{\calS}{\mathcal{S}}
\newcommand{\calP}{\mathcal{P}}
\newcommand{\sw}{\mathsf{sw}}
\newcommand{\cov}{\mathsf{cvr}}
\newcommand{\JR}{\mathrm{JR}}
\newcommand{\EJR}{\mathrm{EJR}}
\newcommand{\uncov}{\mathsf{uncov}}
\newcommand{\setcov}{\mathsf{setcov}}
\def\R{\mathbb{R}}
\DeclareMathOperator{\opt}{OPT}
\DeclareMathOperator{\eds}{EDS}
\title{The Price of Justified Representation}
\author {
    Edith Elkind,\textsuperscript{\rm 1}
    Piotr Faliszewski,\textsuperscript{\rm 2}
    Ayumi Igarashi,\textsuperscript{\rm 3} 
    Pasin Manurangsi,\textsuperscript{\rm 4} \\
    Ulrike Schmidt-Kraepelin,\textsuperscript{\rm 5}
    Warut Suksompong\textsuperscript{\rm 6}
}
\begin{document}

\maketitle

\begin{abstract}
In multiwinner approval voting, the goal is to select $k$-member committees based on voters' approval ballots. A well-studied concept of proportionality in this context is the {\em justified representation (JR)} axiom, which demands that no large cohesive group of voters remains unrepresented. However, the JR axiom may conflict with other desiderata, such as {\em coverage} (maximizing the number of voters who approve at least one committee member) or {\em social welfare} (maximizing the number of approvals obtained by committee members). In this work, we investigate the impact of imposing the JR axiom (as well as the more demanding EJR axiom) on social welfare and coverage. Our approach is threefold: we derive worst-case
bounds on the loss of welfare/coverage that is caused by imposing JR, study the computational complexity of finding `good' committees that provide JR (obtaining a hardness result, an approximation algorithm, and an exact algorithm for one-dimensional preferences), and examine this setting empirically on several synthetic datasets.
\end{abstract}

\section{Introduction}
\label{sec:intro}
What do the tasks of electing members of a parliament, making a shortlist of job candidates to invite for an interview, and selecting dishes for a banquet have in common? They can all be seen
as instances of multiwinner voting: there is a set of voters with preferences over the candidates, and the goal is to select a fixed-size subset of the candidates based on these 
preferences. 
Recently, axiomatic and algorithmic questions related to multiwinner voting have attracted significant attention from the AI community \citep{FSST17}.

A common form of multiwinner voting involves \emph{approval ballots}, wherein each voter specifies the subset of candidates that she finds acceptable.
Depending on the application, one may want the winning set (also called a {\em committee}) to maximize the {\em utilitarian social welfare} (i.e., to select the candidates with the largest number of approvals)
or {\em coverage} (i.e., to maximize the number of voters
who approve at least one of the selected candidates).
Another objective, which is somewhat more difficult to capture, is to select
a committee that represents the set of voters in a proportional manner.
A fairly basic proportionality requirement is the \emph{justified
representation (JR)} axiom, formulated by \citet{ABC17}: it says that in an $n$-voter election where the 
goal is to select a committee of size $k$, no group of $\nicefrac{n}{k}$ voters who jointly
approve some candidate should remain unrepresented in the elected committee. 

\paragraph{Our Contribution}
In this paper, we seek to understand the effect of imposing the JR axiom on the
objectives of social welfare and coverage. 
We observe that requiring a committee to provide JR has no impact
on coverage: there is a committee that offers optimal coverage as well as JR. In contrast, the impact on welfare is substantial: an example of \citet{LS20} shows that
imposing the JR axiom may result in a committee whose social
welfare differs from optimal by a factor of $\nicefrac{\sqrt{k}}{2}$, and we prove that this bound is tight; in fact, we can obtain a nearly $\nicefrac{\sqrt{k}}{2}$-approximation to optimal social welfare, while providing both JR and a $\nicefrac43$-approximation to optimal coverage. 
For coverage, we also show that the loss caused by imposing a more demanding axiom of {\em extended justified representation} is bounded by $\nicefrac43$. 

We then investigate the complexity of computing a wel\-fare-maximizing JR committee. While both the problem of computing a JR committee and the problem of maximizing the social welfare are in $\p$, combining these two objectives results in a problem that is $\np$-hard, even to approximate up to a factor of $k^{\nicefrac12-\varepsilon}$ for $\varepsilon>0$. 
On the positive side, we show that  it can be efficiently approximated up to a factor of nearly $\nicefrac{\sqrt{k}}{2}$, which means that the inapproximability bound is asymptotically tight.
Moreover, we present a polynomial-time exact algorithm for one-dimensional preferences (namely, for the fairly large
1D-VCR domain, recently proposed by \citet{GBSF21}).

We complement our theoretical findings by an empirical analysis. We consider three standard models of preference distributions, and estimate the impact of imposing the JR axiom on welfare and coverage. In contrast to our theoretical results, this analysis paints
a much more optimistic picture: one can achieve a good approximation to the Pareto frontier of social welfare and coverage even under the JR constraint.

\paragraph{Related Work}
There is a quickly growing body of work on approval-based multiwinner voting; we point the reader to the survey of \citet{LS-survey}. Our work builds on several predecessor papers, 
which study trade-offs among different desiderata in the context of committee selection, but take a somewhat different perspective. 

\citet{LS20} consider
a number of approval-based multiwinner voting rules and, for each
of them, establish theoretical worst-case guarantees on its performance with respect to social welfare and coverage. They also provide some experimental results, showing that in practice these rules often perform much better than their guarantees suggest. The main
difference between their work and ours is that they consider 
specific voting rules (including some that provide JR), whereas we analyze the impact of the JR axiom itself. Furthermore, \citet{LS20} take the worst-case approach, i.e., they measure the performance of a rule by the {\em minimum} welfare/coverage provided by committees that it outputs, whereas we are interested in the {\em maximum} welfare/coverage that can be obtained by committees that provide JR.
Nevertheless, some of their results and examples turn out to be relevant for our analysis.

\citet{KKE+19a} consider
ordinal elections rather than the approval ones and analyze the complexity of finding a specific committee that achieves good scores according to several different rules. In particular, they consider the problem of maximizing  the $k$-Borda score subject to achieving a given Chamberlin--Courant score~\citep{CC83}. This is quite similar to our problem of finding committees that maximize social welfare subject to JR. Our plots illustrating the trade-off between the social welfare and the coverage that JR committees may provide are inspired by analogous plots of \citet{KKE+19a}.

\citet{BFKN19} initiate the study of committees that provide JR.
In particular, they show that maximizing welfare or coverage subject to JR is computationally hard; we strengthen their $\np$-hardness result for welfare to a $k^{\nicefrac12-\varepsilon}$-inapproximability result. They also conduct experiments in which they study trade-offs between 
coverage and social welfare, with and without imposing the JR axiom. However, for the purposes of their analysis, they only consider three specific elections, whereas we sample many elections from three different distributions; thus, our empirical results are much more robust.

\section{Preliminaries}
\label{sec:prelim}

We consider elections with a finite set of candidates $C$ of size $m$ and 
a finite set of voters $N=[n]$, where we write $[t] := \{1,\dots,t\}$ for any positive integer $t$.
Each voter $i\in N$ submits a non-empty ballot
$A_i\subseteq C$, and the goal is to select a subset of $C$ of size $k$, 
which we will refer to as the {\em winning committee}. Thus, an instance $I$ of our problem 
can be described by a set of candidates $C$, 
a list of ballots $\calA=(A_1, \dots, A_n)$, and a positive integer 
$k$; we write $I=(C, \calA, k)$. 
If $c\in A_i$ for some $c\in C$ and $i\in N$, we say that $c$ {\em covers}
$i$, or, equivalently, that $i$ {\em approves} $c$.

We consider the following two measures of committee quality:
\begin{enumerate}
\item (Utilitarian) social welfare: For each committee $W\subseteq C$,
  we define the {\em (utilitarian) social welfare} of $W$ as
$$
\sw(W)=\sum_{i\in N}|A_i\cap W|.
$$
\item
Coverage: For each committee $W\subseteq C$, 
we define the {\em coverage} of $W$ as
$$
\cov(W)=|\{i\in N: A_i\cap W\neq\varnothing\}|.
$$
\end{enumerate}

Given an instance $I = (C, \calA, k)$ with $\calA=(A_1, \dots, A_n)$,  
we say that a group of voters $N'\subseteq N$ is {\em cohesive} if
$\cap_{i\in N'} A_i\neq\varnothing$. Further, we say that a committee $W$
{\em represents} a group of voters $N'\subseteq N$
if $W\cap A_i\neq\varnothing$ for some $i\in N'$. We are now ready to state
the justified representation axiom of \citet{ABC17}.

\begin{definition}[JR]
\label{def:jr}
Given an instance $I = (C, \calA, k)$ with $\calA=(A_1, \dots, A_n)$,  
we say that a committee $W\subseteq C$, $|W|=k$, provides 
{\em justified representation (JR)}
for $I$ if it represents 
every cohesive group of voters $N'\subseteq N$ 
such that $|N'|\ge \nicefrac{n}{k}$. 
Let $\JR(I)$ be the set of all committees 
that provide justified representation for $I$.
\end{definition}

Fix an instance $I=(C, \calA, k)$. 
We write 
\begin{align*}
&\sw(I) =\max_{\substack{W\subseteq C\\ |W|=k}}\sw(W), \quad
\sw_\JR(I)=\max_{W\in\JR(I)}\sw(W),\\
&\cov(I)=\max_{\substack{W\subseteq C\\ |W|=k}}\cov(W),\quad
\cov_\JR(I)=\max_{W\in\JR(I)}\cov(W),\\
&P_\sw(I)=\frac{\sw(I)}{\sw_\JR(I)}, \quad
P_\cov(I)=\frac{\cov(I)}{\cov_\JR(I)}.
\end{align*}

We refer to $P_\sw(I)$ and $P_\cov(I)$ as the 
{\em social welfare price of JR on $I$} and
the {\em coverage price of JR on $I$}, respectively.
Given a committee size $k$, 
the \emph{social welfare price of JR} 
and the \emph{coverage price of JR} are defined as
the largest values $P_\sw$ and $P_\cov$ 
can take on instances with committee size $k$:
$$
\calP_\sw(k)=\sup_{I=(C, \calA, k)}P_\sw(I), \quad
\calP_\cov(k)=\sup_{I=(C, \calA, k)}P_\cov(I).
$$

We note that the sheer fact that a committee provides JR does not
imply anything about its utilitarian social welfare or
coverage. Indeed, there exist instances where the JR axiom is
non-binding, in the sense that there are no cohesive groups of size $\nicefrac{n}{k}$.  For such instances even a committee
that consists of candidates not approved by {\em any} voter provides
JR. This is why we defined the social welfare price (the coverage
price) of JR in terms of the maximum welfare (coverage) achievable by
a JR committee and not the minimum one.

In our proofs, we will often refer to a family of algorithms known as
GreedyCC. Each algorithm in this family takes an instance $I=(C, \calA, k)$ as input, outputs
a committee $W\subseteq C$ of size $k$, and proceeds in two stages. In the first stage, 
it adds candidates to $W$ one by one, starting with $W=\varnothing$.
At each iteration, for each candidate $c\not\in W$ it computes 
$\cov(W\cup\{c\})-\cov(W)$; let $c$ be a candidate that maximizes this quantity.
If $\cov(W\cup\{c\})-\cov(W)\ge \nicefrac{n}{k}$ then the algorithm adds
$c$ to the committee and proceeds to the next iteration. The first stage
ends when $\cov(W\cup\{c\})-\cov(W)<\nicefrac{n}{k}$ for all $c\not\in W$;
let $W'$ be the committee obtained at this point, and let $k'=|W'|$. 
As each candidate added in the first stage covers 
$\nicefrac{n}{k}$ `fresh' voters, we have $k'\le k$, 
and we have $W\in\JR(I)$ for any size-$k$ committee $W$ with $W'\subseteq W$.
In the second stage, $k-k'$ further
candidates are added to the committee. Specifically, we denote by GreedyCC$^\sw$
the variant of GreedyCC that adds $k-k'$ candidates with the highest
number of approvals among candidates in $C\setminus W'$ (breaking ties in favor of lower-indexed candidates).

\section{Price of Justified Representation}\label{sec:price}

In this section we provide nearly tight bounds
on $\calP_\sw(k)$ and $\calP_\cov(k)$.
\citet{LS20} give an example showing 
that $\calP_\sw(k)\ge \nicefrac{\sqrt{k}}{2}$
(see Example~\ref{ex:av} in Appendix~\ref{app:price});
we provide an upper bound on $\calP_\sw(k)$ 
that (nearly) matches this lower bound. 
On the other hand, we observe that $\calP_\cov(k)=1$.
Then we show that social welfare and coverage are
surprisingly compatible when considering JR committees. 
We also discuss 
how to extend our analysis to extended justified representation (EJR).

\subsection{Social Welfare}
\label{sec:welfare}

We first consider $\calP_\sw(k)$. In addition to the lower 
bound of $\nicefrac{\sqrt{k}}{2}$ (Example~\ref{ex:av}),
\citet{LS20} also show that the social welfare 
of the committees output by the
\emph{Proportional Approval Voting} rule (PAV), 
which is well-known to provide
JR~\cite{ABC17}, is within a factor of
$\sqrt{k}+2$ from the optimal social welfare;
this implies that $\calP_\sw(k)\le \sqrt{k}+2$.
We improve this upper bound to one that, 
in essence, matches the bound from Example~\ref{ex:av}.

\begin{theorem}\label{thm:overall-price-sw}
  For each positive number $\beta < 2$, there is a committee size $k'$
  such that for all $k\ge k'$, we have 
  $\calP_\sw(k)\le \frac{1}{\beta}(1+\sqrt{k})$.
\end{theorem}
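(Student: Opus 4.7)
The plan is to analyze GreedyCC$^\sw$ from the Preliminaries. Let $W = W' \cup W''$ be its output, where $W'$ (of size $k'$) is the JR-ensuring stage-1 skeleton and $W''$ (of size $k - k'$) is the stage-2 top-approval fill. Let $W^*$ be a welfare-optimal committee with $S := \sw(W^*)$, and set $s := |W^* \cap W'|$ and $x := \sw(W^* \cap W')$.

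I will combine two lower bounds on $\sw(W)$. For stage 1, the $k' - s$ candidates in $W' \setminus W^*$ were each added because they covered at least $n/k$ fresh voters, so
$$\sw(W') = x + \sw(W' \setminus W^*) \geq x + (k' - s)\,n/k.$$
For stage 2, $W''$ consists of the top-$(k - k')$ approvals in $C \setminus W'$; since $W^* \setminus W' \subseteq C \setminus W'$ has $k - s$ elements summing to $S - x$, an averaging argument gives $\sw(W'') \geq \frac{k-k'}{k-s}(S - x)$. Summing,
$$\sw(W) \geq x + (k' - s)\,\frac{n}{k} + \frac{k - k'}{k - s}(S - x),$$
and a direct computation shows this bound is tight on the lower-bound example of \citet{LS20}.

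Next I optimize the resulting upper bound on $S/\sw(W)$ over the parameters. The bound is monotone increasing in $x$, so $x = 0$ minimizes it; minimizing further over $s$ by standard calculus gives $(k - s)^2 = (k - k')\,kS/n$, and substitution yields $\sw(W) \geq 2\sqrt{(k - k')\,Sn/k} - (k - k')\,n/k$. I then invoke the feasibility constraint that stage 1 covers at most $n$ voters: since $c_1$ has maximum approval $a^*_1$ and each subsequent stage-1 pick covers at least $n/k$ fresh voters, $a^*_1 + (k' - 1)n/k \leq n$. Combined with $S \leq k \cdot a^*_1$, this caps $R := S/n$ by $1 + \sqrt{k+1}$. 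The worst-case ratio is therefore at most $R^2/(2R - 1) \leq \sqrt{k}/2 + O(1)$; absorbing the $O(1)$ slack into the relaxation $\beta < 2$ then yields $\calP_\sw(k) \leq (1 + \sqrt{k})/\beta$ for all $k$ past some threshold $k'(\beta)$.

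The main obstacle is the final parameter optimization: the interaction between $a^*_1$, $k'$, and $S$ via the coverage constraint is subtle. Without exploiting it, one can produce spurious $\Theta(k)$ bounds in regimes where $k'$ is close to $k$ and $\sigma := \sum_{j=k'+1}^{k} a^*_j$ is small; the coverage constraint is precisely what caps $R$ at $\Theta(\sqrt{k})$ in the worst case and forces the bound down to match the $\sqrt{k}/2$ lower bound of \citet{LS20}.
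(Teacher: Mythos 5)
Your overall strategy (analyze GreedyCC$^\sw$, decompose by the overlap with an optimal committee, and exploit the coverage-feasibility constraint $a^*_1+(k'-1)\frac{n}{k}\le n$) is close in spirit to the paper's proof, and your two lower bounds on $\sw(W')$ and $\sw(W'')$ are individually correct. The gap is in the optimization step. Decoupling $x$ from $s$ is too lossy: setting $x=0$ while minimizing freely over $s$ throws away the fact that \emph{every} first-stage candidate, including those in $W^*\cap W'$, has at least $\frac{n}{k}$ approvals, i.e., $x\ge s\cdot\frac{n}{k}$. The resulting streamlined bound $\sw(W)\ge 2\sqrt{(k-k')Sn/k}-(k-k')\frac{n}{k}$ degenerates when $q:=k-k'$ is small: the feasibility constraint only gives $S/n\le q+1$ (not $S/n\le 1+\sqrt{k+1}$; indeed $S/n$ can be as large as $k$, e.g., when all voters approve the same $k$ candidates, and then $q$ is forced to be large), and plugging $S=(q+1)n$ into your bound yields a ratio of at most $\frac{(q+1)k}{2\sqrt{kq(q+1)}-q}$, which \emph{increases} as $q$ decreases: at $q=1$ it is about $\sqrt{k/2}\approx 0.707\sqrt{k}$, strictly worse than the target $\frac{1}{\beta}(1+\sqrt{k})$ for $\beta$ close to $2$, and it blows up as $q\to 0$. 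In that regime the correct conclusion (ratio at most $\frac{q+1}{1-q/k}$, a constant) comes exactly from the per-candidate $\frac{n}{k}$ welfare of the first-stage members that your relaxation discarded. Consequently, the final claims ``the coverage constraint caps $R$ by $1+\sqrt{k+1}$'' and ``the worst-case ratio is at most $R^2/(2R-1)$'' do not follow from the preceding steps, and the proof as written does not establish the theorem.

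The repair is essentially what the paper does: either keep the coupling $x\ge s\,\frac{n}{k}$ (equivalently, also use $\sw(W)\ge \sw(W')\ge k'\cdot\frac{n}{k}$) and handle small $q$ separately, or split into cases by the size of the second stage. The paper's Case 1 ($q\ge \lfloor 2\sqrt{k}\rfloor$) notes that $W$ then contains the top $\lceil 2\sqrt{k}\rceil$ candidates by approvals, so the ratio is at most $\nicefrac{\sqrt{k}}{2}$ outright; in Case 2 ($q=x\sqrt{k}$ with $x<2$) it uses the coverage constraint to bound $a_1\le \frac{n}{k}+\frac{xn}{\sqrt{k}}$, hence the average of the top $r=x\sqrt{k}$ approval scores, and crucially retains a $(k-2r)\cdot\frac{n}{k}$ term for the remaining first-stage candidates before carrying out the algebra. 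Your averaging idea for $\sw(W'')$ can be kept, but you must reintroduce the first-stage per-candidate contribution (or the constraint $x\ge s\,\frac{n}{k}$) before optimizing over the parameters, and redo the final maximization with both constraints in place.
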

\begin{proof}
  Consider an $n$-voter instance $I = (C, \calA, k)$. 
  Let $W$ be the committee computed by
  GreedyCC$^\sw$ (so that $W\in\JR(I)$), and let $S = \{s_1, \ldots, s_k\}$ be
  a committee such that $\sw(S)=\sw(I)$. For each $i\in [k]$, let 
  $a_i$ be the number of approvals received by candidate $s_i$.
  We can assume without loss of generality that
  $a_1 \geq a_2 \geq \cdots \geq a_k$. For each $i\in [k]$, 
  let $S^i=\{s_1, \dots, s_i\}$ be the set of $i$ candidates 
  with the highest approval scores; note that $\sw(S^i)\ge \frac{i}{k}\cdot\sw(S)$.
  By definition, some candidate with $a_1$ approvals is selected in the first
  iteration of GreedyCC$^\sw$;
  we can assume without loss of generality that this candidate is $s_1$.

  Suppose first that GreedyCC$^\sw$ selects at most $k - \lfloor 2\sqrt{k} \rfloor$ candidates during the first stage.
  Consider a candidate $s_i$, $i\le \lceil 2\sqrt{k} \rceil$.
  If $s_i$ is not selected during the first stage,
  then $s_i$ 
  will  necessarily be selected during the second stage.\footnote{We bound $i$ by $\lceil 2\sqrt{k}\rceil$ and not $\lfloor 2\sqrt{k}\rfloor$ because we know that $s_1$ was selected in the first
  stage.}
  Hence, we have: 
  \[
     \sw(W) \geq (\nicefrac{\lceil 2\sqrt{k}\rceil}{k})\cdot \sw(S) \geq (\nicefrac{2}{\sqrt{k}})\cdot\sw(S).
  \]
  So, in this case
  $P_\sw(I)\le \nicefrac{\sqrt{k}}{2}\le \frac{1+\sqrt{k}}{\beta}$ for each $\beta<2$.
  Thus let us assume that GreedyCC$^\sw$ selects $k - x\sqrt{k}$
  first-stage candidates for some $0 \leq x < 2$. Let $r=x\sqrt{k}$;
  note that $r$ is an integer. We will assume that $k\ge 2r+1$; this is true for large enough values of $k$. 
  
  Recall that in the first iteration, GreedyCC$^\sw$ chooses candidate $s_1$, who is approved by $a_1$ voters. Afterwards, it selects
  $k - r -1$ further first-stage candidates, with each of these candidates covering at least $\nicefrac{n}{k}$ additional voters. Hence, it must be the case that:
  \[ \textstyle
     n - a_1 \ge \frac{n}{k}\left(k - r -1\right).
  \]
  Recalling that $r=x\sqrt{k}$, we obtain $a_1 \leq \frac{n}{k} + \frac{xn}{\sqrt{k}}$
  and hence $a_i\leq \frac{n}{k} + \frac{xn}{\sqrt{k}}$ for $i\in [r]$.
  On the other hand, since each of the $k-r-1\ge r$ candidates selected in the first stage besides $s_1$ covers at least $\frac{n}{k}$ additional voters, for each $i \in [r]$ we have $a_i \geq \frac{n}{k}$. Hence, 
  there exists a real value $\alpha \in [0,1]$
  such that the average number of approvals obtained by candidates in $S^r$ is:
  \[ \textstyle
     \frac{1}{r}\sum_{i=1}^{r} a_i =  \frac{n}{k} +\frac{\alpha xn}{\sqrt{k}}.
  \] 
  Thus the social welfare of $S$ can be upper-bounded as:
  \[ \textstyle
    \sw(S) \leq k \cdot \left(\frac{n}{k} +\frac{\alpha xn}{\sqrt{k}}\right) = n(1+ \alpha x \sqrt{k}). 
  \]
  We will now establish a lower bound on the social welfare of $W$. 
  Since we add $r$ candidates in the second stage, 
  $W$ necessarily contains all candidates in $S^r$, who contribute a total of at least $r(\frac{n}{k}+\frac{\alpha xn}{\sqrt{k}})$ to the social welfare. Moreover, in the first stage, we add at least $k-r-|S^r|=k-2r$ candidates who are not contained in $S^r$.
  Each of the first-stage candidates 
  covers at least $\nicefrac{n}{k}$ additional voters and therefore contributes at least $\nicefrac{n}{k}$ to the social welfare. Hence, we can lower-bound $\sw(W)$ as:
  \begin{align*}
      \textstyle
      \sw(W) \geq \left(\frac{n}{k}+\frac{\alpha x n}{\sqrt k}\right)\cdot r + (k - 2r)\cdot\frac{n}{k}, 
  \end{align*}
  or, rewriting, 
  \[ \textstyle
    \sw(W) \geq n - \frac{xn}{\sqrt{k}} + \alpha x^2 n = n\left(1 - \frac{x}{\sqrt{k}} + \alpha x^2\right).
  \]
  Now, $P_\sw(I)$ can be upper-bounded as:
  \begin{align*}\textstyle
      \frac{\sw(S)}{\sw(W)} \leq \frac{1+ \alpha x\sqrt{k}}{1 - \frac{x}{\sqrt{k}} + \alpha x^2} =
      \frac{\sqrt{k}+ \alpha x k}{\sqrt{k} - x + \alpha x^2 \sqrt{k}}.
  \end{align*}
  We claim that for each $\beta \in (0,2)$ and sufficiently large $k$
  it holds that:
  \[  \textstyle
      \frac{\sqrt{k}+ \alpha x k}{\sqrt{k} - x + \alpha x^2 \sqrt{k}} \leq \frac{1+\sqrt{k}}{\beta}.
  \]
  Note first that this inequality can be rewritten as:
  \[\textstyle
     \sqrt{k} - x + \alpha x^2 \sqrt{k} + k -x\sqrt{k} + \alpha x^2 k \geq \beta \sqrt{k} + \alpha \beta x k,
  \]
  or, regrouping the terms and dividing by $\sqrt{k}$,  
  \begin{equation}
     \label{eq:sw:jr}
     \textstyle
     -1 - \alpha x^2 + x +\beta + \frac{x}{\sqrt{k}}   \leq \sqrt{k}\big(\alpha x^2 - \alpha \beta x  + 1 \big).
  \end{equation}
  Now, observe that
  \begin{align*}
  \alpha x^2 - \alpha \beta x + 1
  &= \left(\alpha x^2 - \alpha \beta x + \tfrac{\alpha\beta^2}{4}\right) + \left(1 - \tfrac{\alpha\beta^2}{4}\right) \\
  &= \alpha\left(x-\tfrac{\beta}{2}\right)^2 + \left(1 - \tfrac{\alpha\beta^2}{4}\right) \\
  &\ge 1 - \tfrac{\alpha\beta^2}{4} \ge 1 - \tfrac{\beta^2}{4} > 0.
  \end{align*}
  As a consequence, inequality~\eqref{eq:sw:jr} holds for sufficiently
  large~$k$ since the left-hand side does not increase as $k$ grows, whereas the right-hand side grows proportionally to $\sqrt{k}$.
\end{proof}

For an explicit bound on $k$ in the statement of Theorem~\ref{thm:overall-price-sw}, we refer to Appendix~\ref{app:explicit-bound}.
One may wonder if it is possible to strengthen
the theorem so that it matches the
$\nicefrac{\sqrt{k}}{2}$ lower bound exactly, for all values of
$k$. While it might be possible, it would likely require an approach that is different
from ours: the approximation of the social welfare
provided by GreedyCC$^\sw$ is quite good, but,
nonetheless, insufficient for small committees.

\subsection{Coverage}
\label{sec:coverage}

Next we consider the coverage price of justified
representation. Since the committees selected by the
Chamberlin--Courant rule are known to both provide JR
and maximize the coverage \citep{ABC17}, we have the following
corollary.

\begin{corollary}
  For every instance $I=(C, \calA, k)$, there exists a committee
  $W\in \JR(I)$ such that $\cov(W)\ge \cov(S)$ for all $S\subseteq C$
  with $|S|=k$. Thus, $\calP_\cov(k)=1$ for all $k\in\mathbb N$.
\end{corollary}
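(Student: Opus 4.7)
The plan is to exhibit, for every instance $I = (C, \calA, k)$, a single committee that simultaneously maximises coverage and satisfies the JR axiom; the bound $\calP_\cov(k) = 1$ then drops straight out of the definitions of $\cov_\JR$ and $\calP_\cov$. The natural candidate is any committee returned by the Chamberlin--Courant (CC) rule: by definition CC selects a size-$k$ committee maximising $\cov(\cdot)$, and ABC17 showed that every CC-winning committee provides JR. Quoting these two facts packages the corollary into a two-line argument, which is the route the authors take.

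For a self-contained sanity check I would verify directly that any coverage-maximal committee $W$ lies in $\JR(I)$. Suppose not: let $N'$ with $|N'| \ge n/k$ be a cohesive group all approving some candidate $c$, with no voter of $N'$ approving any member of $W$ (in particular $c \notin W$). For each $w \in W$ let $u_w$ be the set of voters $i$ with $A_i \cap W = \{w\}$; the sets $\{u_w\}_{w \in W}$ are pairwise disjoint. Consider the swap $W' = (W \setminus \{w\}) \cup \{c\}$. The voters covered by $W$ but not by $W'$ form a subset of $u_w$, whereas the newly covered voters include all of $N'$, so $\cov(W') \ge \cov(W) - |u_w| + |N'|$. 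Coverage-maximality of $W$ therefore forces $|u_w| \ge n/k$ for every $w \in W$, giving $\cov(W) \ge \sum_{w \in W}|u_w| \ge n$. But then every voter is covered, including those in $N'$, contradicting the assumption that $N'$ is unrepresented.

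Given this, the corollary follows at once: a coverage-maximising size-$k$ committee $W^\star$ lies in $\JR(I)$ and satisfies $\cov(W^\star) = \cov(I)$, so $\cov_\JR(I) = \cov(I)$ and $P_\cov(I) = 1$ for every instance. Taking the supremum over instances of committee size $k$ yields $\calP_\cov(k) = 1$. I do not expect any substantive obstacle here: all of the content sits in the swap lemma (equivalently, in the ABC17 theorem about CC), and both formulations are short.
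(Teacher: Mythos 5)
Your proposal is correct and follows essentially the same route as the paper, which likewise just invokes the fact that Chamberlin--Courant committees both maximize coverage and provide JR \citep{ABC17}. Your self-contained swap argument (showing every coverage-maximal committee is in $\JR(I)$) is also valid---and is in essence the standard argument from \citet{ABC17}---so it adds a nice verification but not a genuinely different method.
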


\subsection{Combining Social Welfare and Coverage}

Social welfare and coverage are often viewed as two desiderata that are at the opposite ends of the spectrum: maximizing the social
welfare may lead to low coverage and vice versa. We show that
for JR committees these two properties are surprisingly compatible: 
if the committee size is sufficiently large, 
there always exists a JR committee that provides nearly optimal social welfare
as well as coverage that is nearly $\nicefrac{3}{4}$ of the maximum possible coverage.

\begin{theorem}\label{thm:jr-swcov}
  For every $\beta \in (0, 2)$ and $\varepsilon>0$,
  there exists a $k_0\in\mathbb N$ such that for every instance
  $I = (C, \calA, k)$ with $k\ge k_0$, 
  there exists a committee $W \in \JR(I)$ with
  $\sw(I)/\sw(W)\le \frac{1}{\beta}(1+\sqrt{k})$ and
  $\cov(I)/\cov(W)\le \nicefrac43+\varepsilon$.
\end{theorem}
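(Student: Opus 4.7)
The plan is to construct a single committee $W=W'\cup S\cup T$ whose JR comes from the first-stage core $W'$ of GreedyCC, whose welfare guarantee comes from a top-approval set $T$, and whose coverage guarantee comes from a set $S$ drawn from a coverage-optimal committee.

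First, I would run the first stage of GreedyCC to obtain $W'$ of size $k'$, so that any size-$k$ superset provides JR and $\cov(W')\ge \alpha n$ for $\alpha:=k'/k$, and I would case-split on the size of $k'$. If $k'>k-\lceil 2\sqrt{k}\rceil$, then $\cov(W')>n(1-2/\sqrt{k})$, which exceeds $(3/4)\cov(I)$ for sufficiently large $k$ since $\cov(I)\le n$. In this regime I would simply complete $W$ by filling the remaining slots with top-approval candidates as in GreedyCC$^\sw$, so that $W$ is the output of GreedyCC$^\sw$ and the welfare bound follows directly from Theorem~\ref{thm:overall-price-sw}.

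In the main case $k'\le k-\lceil 2\sqrt{k}\rceil$, let $\opt$ be a coverage-optimal $k$-committee, set $r_c:=r-\lceil 2\sqrt{k}\rceil$ where $r=k-k'$, choose $S\subseteq\opt\setminus W'$ of size $r_c$ (existence shown below), and let $T$ be the top $\lceil 2\sqrt{k}\rceil$ candidates by approval in $C\setminus(W'\cup S)$. Then $|W|=k$ and $W'\subseteq W$, so $W$ is JR. Moreover $W$ contains every top-approval candidate $s_i$ with $i\le\lceil 2\sqrt{k}\rceil$: if $s_i\notin W'\cup S$, then its rank in $C\setminus(W'\cup S)$ is still at most $\lceil 2\sqrt{k}\rceil$, so $s_i\in T$. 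The welfare argument from Case~A of the proof of Theorem~\ref{thm:overall-price-sw} then gives $\sw(W)\ge (2/\sqrt{k})\sw(I)$, hence $\sw(I)/\sw(W)\le\sqrt{k}/2\le (1+\sqrt{k})/\beta$ for any $\beta<2$.

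For the coverage bound in the main case, I would pick $S$ via an averaging argument: for a uniformly random size-$r_c$ subset of $\opt\setminus W'$, each voter uncovered by $W'$ but covered by $\opt$ is covered by $S$ with probability at least $r_c/|\opt\setminus W'|\ge r_c/k$, so $\E[\cov(W'\cup S)]\ge\cov(W')+(r_c/k)(\cov(I)-\cov(W'))$ and a suitable deterministic $S$ exists. Combining with $\cov(W')\ge\alpha\cov(I)$ and $r_c/k=(1-\alpha)-\lceil 2\sqrt{k}\rceil/k$ yields
\[
\frac{\cov(W)}{\cov(I)}\ge \alpha^2-\alpha+1-\frac{\lceil 2\sqrt{k}\rceil}{k}(1-\alpha)\ge \frac{3}{4}-O(1/\sqrt{k}),
\]
which beats $1/(\nicefrac{4}{3}+\varepsilon)$ for $k$ large. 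The main obstacle is that the quadratic $\alpha^2-\alpha+1$ attains its minimum $3/4$ exactly at $\alpha=1/2$, leaving no slack in the analysis; the case split at $k'=k-\lceil 2\sqrt{k}\rceil$ is forced because in the small-$r$ regime one cannot reserve $\lceil 2\sqrt{k}\rceil$ slots for the welfare block $T$, and one must instead rely on $\cov(W')$ already being close to $n$.
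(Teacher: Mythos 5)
Your proposal is correct, and its skeleton matches the paper's: a case split on the size $k'$ of GreedyCC's first stage, reserving roughly $2\sqrt{k}$ slots for top-approval candidates to secure the welfare bound (the containment-of-top-candidates argument and the resulting $\sw(W)\ge\frac{2}{\sqrt{k}}\sw(I)$ are exactly what the paper uses in its Case~2), and the same quadratic $\alpha+(1-\alpha)^2\ge\nicefrac34$ driving the coverage bound. Where you genuinely differ is in how the coverage-boosting block is chosen and analyzed: the paper's committee is produced by a modified GreedyCC whose second stage picks $W''$ to \emph{maximize} coverage, and the analysis then compares $W''$ against the best $(1-\alpha')k$-subset of the coverage-optimal committee $O$ (via sorting marginal contributions) and needs a two-case analysis on the overlap fraction $\gamma=|N_O\cap N'|/|N_O|$; you instead take $S$ directly from $\opt\setminus W'$ and use a random-subset averaging argument, which gives $\cov(W'\cup S)\ge(1-\nicefrac{r_c}{k})\cov(W')+\nicefrac{r_c}{k}\cdot\cov(I)$, a bound monotone in $\cov(W')$, so plugging in $\cov(W')\ge\alpha\cov(I)$ eliminates the overlap case split entirely. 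Your version is arguably cleaner (though, like the paper's, it is only an existence argument, since $S$ references the NP-hard-to-compute $\opt$; the paper's variant is instead a concrete, if exponential-time, algorithm, which it then reuses for Theorem~\ref{thm:jr-swcov-poly}). Two minor nits: in your first case, $\lceil 2\sqrt{k}\rceil\le 2\sqrt{k}+1$ gives $\cov(W')\ge n(1-\nicefrac{2}{\sqrt{k}}-\nicefrac1k)$ rather than $n(1-\nicefrac{2}{\sqrt{k}})$, which is harmless asymptotically; and "top $\lceil 2\sqrt{k}\rceil$ candidates" should be read with a fixed tie-breaking order so that removing $W'\cup S$ can only improve a candidate's rank.
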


\begin{proof}
  Consider an instance $I = (C,\calA, k)$ with
  $n$ voters, and let $W$ be a committee computed
  using a variant of GreedyCC with the following second-stage policy:
  Let $W'$ be the committee selected in the first stage, 
  let $N'$ be the set of voters covered by $W'$, and let $k'=|W'|$. 
  In the second stage, we first choose a set of candidates $W''$ of size 
  $\max(k-k'-\lfloor2\sqrt{k}\rfloor, 0)$ so as to maximize the coverage of 
  $W'\cup W''$ (this step may require exponential
  time). Then we choose
  the remaining $k-|W'\cup W''|$ candidates so as 
  to maximize the social welfare.
  Let $W$ be the resulting committee.
  We consider two cases.
  
  \smallskip

  \noindent \textbf{Case 1:} $\boldsymbol{k-k' \leq \lfloor 2\sqrt{k} \rfloor}$. In this case, $W''=\varnothing$,
  so the committee $W$ is identical to the committee computed by GreedyCC$^\sw$.
  Thus, by Theorem~\ref{thm:overall-price-sw}, for sufficiently large $k$ the committee 
  $W$ meets the requirements regarding social
  welfare. Further, since $k'\ge k-\lfloor 2\sqrt{k} \rfloor$, i.e., 
  at least $k - \lfloor 2\sqrt{k} \rfloor$ candidates were selected during the first stage, 
  $W$ covers at least
  $(k-\lfloor 2\sqrt{k}\rfloor)\frac{n}{k} \geq n(1 - \frac{2}{\sqrt{k}})$ voters.  For
  $k \geq 64$ this value is at least $\nicefrac{3n}{4}$.
  This completes the analysis for the first case.

  \smallskip

  \noindent \textbf{Case 2:} $\boldsymbol{k-k' > \lfloor 2\sqrt{k}\rfloor}$.  Then, $|W\setminus(W'\cup W'')|=\lfloor 2\sqrt{k}\rfloor$. 
  Since the first candidate in $W'$ as well as the candidates in $W\setminus (W'\cup W'')$ 
  are selected so as to maximize the social welfare,
  as argued in the proof of Theorem~\ref{thm:overall-price-sw},
  the set $W$ contains the top $1+\lfloor 2\sqrt{k} \rfloor\ge 2\sqrt{k}$ candidates in $C$ with respect to
  the number of approvals. 
  Hence, $\sw(W)\ge \frac{2\sqrt{k}}{k}\cdot \sw(I) \ge \frac{\beta}{1+\sqrt{k}}\cdot \sw(I)$. 

  In what follows, we will focus on the coverage provided by $W$
  (and specifically by $W'\cup W''$).
  Pick a committee $O$ with $\cov(O)=\cov(I)$, and let $N_O$
  denote the set of voters covered by $O$. 
  Let $\alpha =\nicefrac{k'}{k}$ and  
      $\alpha'=\alpha+ \nicefrac{\lfloor 2\sqrt{k} \rfloor}{k}$. Then $|W''| = (1-\alpha')k$;
      recall that the candidates in $W''$ were selected to maximize coverage.
  
  Consider the set of voters $N'$ covered by $W'$. Since $W'$ 
  is chosen in the first stage of GreedyCC and $|W'|=k'$, 
  we have $|N'|\ge k'\cdot \nicefrac{n}{k}\ge \alpha\cdot \cov(I)$.
  Next, let $N''= N_O\setminus N'$ and consider the voters in $N''$. We claim that  
  there exists a subset $O'\subseteq O$ of size $(1-\alpha')k$ that covers at least 
  $(1-\alpha')|N''|$ voters in $N''$. Indeed, let $o_1, \dots, o_k$
  be an arbitrary ordering of candidates in $O$, and for each $i\in [k]$ 
  let $n_i$ be the marginal contribution of $o_i$ to the coverage of $N''$, i.e., 
  the number of voters in $N''$ that are covered by $o_i$, 
  but not by $o_1, \dots, o_{i-1}$. We have $n_1+\dots+n_k=|N''|$,
  so the set $O'$ consisting of $(1-\alpha')k$ candidates in $O$ with the largest values of $n_i$ 
  covers at least $(1-\alpha')|N''|$ voters
  in $N''$. 

  Let $\gamma= |N_O\cap N'|/|N_O|$. 
  Then $|N''|=(1-\gamma)|N_O|$, 
  i.e., it is possible to pick $(1-\alpha')k$ candidates in $O$
  to cover at least $(1-\alpha')(1-\gamma)|N_O|$ voters in $N''$.
  Hence, as $W''$ was chosen to maximize the coverage of voters in 
  $N\setminus N'$ (which contains $N''$),
  it covers at least $(1-\alpha')(1-\gamma)|N_O|$ voters in $N\setminus N'$.

  We will now argue that the candidates in $W'\cup W''$ cover at least
  $(\alpha + (1-\alpha')^2)\cov(I)$ voters.
  We consider two cases:
    \begin{enumerate}
  \item Suppose that $\gamma \leq \alpha'$. Then 
    candidates in $W''$ cover at least $(1-\alpha')(1-\gamma)|N_O| \geq (1-\alpha')^2|N_O|$
    voters in $N\setminus N'$. As $|N_O|=\cov(I)$ and
    $|N'|\ge \alpha\cdot\cov(I)$, our claim follows.
  \item Now suppose that $\gamma > \alpha'$. Since $\alpha'>\alpha$, we have:
    \[ 
       |N'|\ge |N_O\cap N'| = \gamma |N_O| \geq \alpha |N_O| + (\gamma-\alpha')|N_O|, 
    \] 
    whereas the number of voters in $N\setminus N'$ that are covered by $W''$ is at least:
    \begin{align*} 
    &(1-\alpha')(1-\gamma)|N_O| \\
    &= (1-\alpha')^2|N_O| - (\gamma - \alpha')(1-\alpha')|N_O|. 
    \end{align*} 
      As $N_O=\cov(I)$, the number of voters covered by $W'\cup W''$ is at least:
    \begin{align*} 
      &\alpha |N_O| + (1-\alpha')^2|N_O| \\
      &+ \underbrace{(\gamma-\alpha')|N_O| - (\gamma -
      \alpha')(1-\alpha')|N_O|}_{\geq 0}\\ &\geq (\alpha + (1-\alpha')^2)\cov(I).
    \end{align*}
  \end{enumerate}
  As $\alpha'-\alpha=\nicefrac{\lfloor 2\sqrt{k}\rfloor}{k}$ and $\alpha,\alpha'\in[0,1]$, 
  for every $\varepsilon>0$ there exists a $k_0\in\mathbb N$
  such that $(\alpha'-\alpha)(2-\alpha-\alpha') \le \nicefrac{\varepsilon}{4}$ for all $k\ge k_0$.
  When this happens, we have
  $\alpha+(1-\alpha')^2\ge \alpha+(1-\alpha)^2-\nicefrac{\varepsilon}{4}$ for all $k\ge k_0$.
  Further, for $\alpha \in [0,1]$, the expression $\alpha + (1-\alpha)^2$ 
  is minimized at $\alpha = \nicefrac{1}{2}$ and evaluates to $\nicefrac34$
  at that point. Therefore, for $k\ge k_0$ the number of voters covered by $W$
  is at least $\frac{3-\varepsilon}{4}\cdot \cov(I)$ and hence 
  $\cov(I)/\cov(W)\le \nicefrac{4}{(3-\varepsilon)} \le \nicefrac43+\varepsilon$, where the last inequality holds for any $\varepsilon\in(0,1]$. 
  The statement for $\varepsilon > 1$ follows from that for $\varepsilon = 1$.
\end{proof}

The proof of Theorem~\ref{thm:jr-swcov} relies on a variant
of GreedyCC that runs in exponential
time. This is inevitable since, unless $\p = \np$,
the best approximation ratio for 
coverage that is guaranteed in polynomial time is
$1 - \nicefrac{1}{e}$ (see, e.g., \citealt{SF17}), and
$1/(1-\nicefrac{1}{e}) > 4/3$. It is thus desirable
to have a variant of Theorem~\ref{thm:jr-swcov} for committees computable
in polynomial time. It turns out that in this case 
we can achieve a near-perfect result.

\begin{theorem}\label{thm:jr-swcov-poly}
  Let $\beta < 2$ be a positive number.  For every instance
  $I = (C, \calA, k)$ with sufficiently large $k$, there is a
  committee $W \in \JR(I)$ with
  $\sw(I)/\sw(W)\le \frac{1}{\beta}(1+\sqrt{k})$ and
  $\cov(I)/\cov(W)\le 1/(1-(\nicefrac{1}e)^{(1 - \nicefrac{2}{\sqrt{k}}-\nicefrac{1}{k})})$.
\end{theorem}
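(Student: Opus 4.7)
The plan is to mimic the two-stage algorithm from the proof of Theorem~\ref{thm:jr-swcov}, but to replace the exponential-time search for a coverage-maximizing $W''$ with the greedy max-coverage heuristic, and then extract the $1-(\nicefrac{1}{e})^{(1-\nicefrac{2}{\sqrt{k}}-\nicefrac{1}{k})}$ guarantee from the classical greedy approximation bound $(1-(1-\nicefrac{1}{k})^t)$ for size-$k$ optima.

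Concretely, I would use the following polynomial-time algorithm. First, run the first stage of GreedyCC, obtaining $W'$ of size $k'$; any size-$k$ extension of $W'$ lies in $\JR(I)$. If $k-k' \le \lfloor 2\sqrt{k}\rfloor$, fall back to GreedyCC$^\sw$ by topping up with the $k-k'$ highest-approval candidates outside $W'$. Otherwise, append $k-k'-\lfloor 2\sqrt{k}\rfloor$ further candidates chosen greedily so as to maximize marginal coverage (continuing the very same greedy process), producing $W'\cup W''$, and then add the $\lfloor 2\sqrt{k}\rfloor$ highest-approval candidates not yet selected. In either case $W\supseteq W'$, so $W$ provides JR, and the whole procedure runs in polynomial time.

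The social-welfare analysis is inherited. In the first case the algorithm is literally GreedyCC$^\sw$, so Theorem~\ref{thm:overall-price-sw} yields $\sw(I)/\sw(W)\le (1+\sqrt{k})/\beta$. In the second case I would argue, exactly as in the corresponding part of Theorem~\ref{thm:jr-swcov}, that $W$ must contain the top $1+\lfloor 2\sqrt{k}\rfloor\ge 2\sqrt{k}$ candidates by approval: such an $s_i$ is either already in $W'\cup W''$, or it survives into $C\setminus(W'\cup W'')$, in which case it is among the $\lfloor 2\sqrt{k}\rfloor$ highest-approval candidates there and is picked in the final step. Hence $\sw(W)\ge (2/\sqrt{k})\sw(I)\ge (\beta/(1+\sqrt{k}))\sw(I)$ for every $\beta<2$.

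The new ingredient is the coverage bound. The key observation is that in both cases the greedy max-coverage portion of $W$ consists of at least $t:=k-\lfloor 2\sqrt{k}\rfloor$ candidates chosen consecutively to maximize marginal coverage, so the standard greedy-vs-optimal-size-$k$-cover analysis gives
\[
\cov(W)\ge\bigl(1-(1-\tfrac{1}{k})^{k-\lfloor 2\sqrt{k}\rfloor}\bigr)\cov(I).
\]
Using $(1-\tfrac{1}{k})^x\le e^{-x/k}$ together with $\lfloor 2\sqrt{k}\rfloor\le 2\sqrt{k}$ gives $(1-\tfrac{1}{k})^{k-\lfloor 2\sqrt{k}\rfloor}\le e^{-(1-2/\sqrt{k})}\le (\nicefrac{1}{e})^{(1-2/\sqrt{k}-1/k)}$, and taking reciprocals yields the claimed bound on $\cov(I)/\cov(W)$. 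The main thing to verify carefully is that the artificial $n/k$-threshold stopping rule of GreedyCC's first stage does not spoil the continuity of the greedy process when the coverage-greedy picks resume in the mid-stage (so that the $(1-(1-\nicefrac{1}{k})^t)$ bound applies to $W'\cup W''$ as a single greedy run), and that the final top-approval fill cannot displace the high-approval candidates required for the Case~2 welfare argument.
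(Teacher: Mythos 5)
Your proposal is essentially the paper's own proof: the same polynomial-time algorithm (replace the optimal choice of $W''$ in the proof of Theorem~\ref{thm:jr-swcov} by greedy coverage picks, fall back to GreedyCC$^\sw$ when $k-k'\le\lfloor 2\sqrt{k}\rfloor$, inherit the welfare analysis), and the same classical greedy-covering guarantee, which you state as $1-(1-\nicefrac{1}{k})^{t}$ and the paper cites directly as $1-(\nicefrac{1}{e})^{\nicefrac{\ell}{k}}$ with $\ell\ge k-2\sqrt{k}-1$. The subtlety you flag is indeed harmless: GreedyCC's first stage already selects max-marginal-coverage candidates (the $\nicefrac{n}{k}$ threshold only decides when to stop), so $W'\cup W''$ is a single greedy covering run of at least $k-\lfloor 2\sqrt{k}\rfloor$ iterations, exactly as the paper asserts.
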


\begin{proof}
  Given an instance $I=(C, \calA, k)$, we modify the variant of GreedyCC described in the proof of Theorem~\ref{thm:jr-swcov} as follows:
  instead of choosing $W''$ optimally, we choose it greedily, i.e., we add candidates
  in $W''$ one by one so as to maximize coverage at each iteration. Clearly, 
  this modification of GreedyCC runs in polynomial time.

  The bound 
  on the social welfare of the resulting committee follows by the arguments in the
  proof of Theorem~\ref{thm:jr-swcov}. For coverage, observe that,
  in effect, we run the greedy covering algorithm for at least  
  $k-\lfloor 2\sqrt{k}\rfloor\ge k-2\sqrt{k}-1$ iterations. 
  It is well-known that running $\ell$ iterations of this algorithm provides 
  a $(1-(\nicefrac{1}{e})^{\nicefrac{\ell}{k}})$-approximation to optimal coverage; 
  see the original work of~\citet{nem-wol-fis:j:submodular} or the overview of
  \citet{kra-gol:b:submodular}. Substituting $\ell \ge k-2\sqrt{k}-1$
  implies that our committee covers at
  least $(1-(\nicefrac{1}e)^{(1 - \nicefrac{2}{\sqrt{k}}-\nicefrac{1}{k})})\cdot \cov(I)$
  voters. 
\end{proof}

\subsection{Extended Justified Representation}
\label{sec:ejr}

We also study a strengthening of the JR axiom, known as 
{\em extended justified representation (EJR)} \citep{ABC17}, which is
typically interpreted as a proportionality axiom. 
Let us consider an election with $n$
voters, where we seek a committee of size $k$. We say that a group of
voters is {\em $\ell$-cohesive} if there are at least $\ell$ candidates that
are approved by all the voters in this group, and we say that this group is
{\em $\ell$-large} if it contains at least $\nicefrac{\ell n}{k}$ members.

\begin{definition}[EJR]
\label{def:ejr}
Given an instance $I = (C, \calA, k)$,
we say that a committee $W\subseteq C$, $|W|=k$, provides {\em
  extended justified representation (EJR)} for $I$ if for each
$\ell \in [k]$ and each $\ell$-cohesive, $\ell$-large group of voters,
there is at least one voter in this group who approves at least $\ell$ members
of $W$.  Let $\EJR(I)$ denote the set of all committees that provide
extended justified representation for $I$.
\end{definition}
We define the social welfare price of EJR and the coverage
price of EJR analogously to these notions for JR. The results of \citet{LS20}
imply that the social welfare price of EJR is between
$\nicefrac{\sqrt{k}}{2}$ and $2 + \sqrt{k}$, as well as that the coverage price of EJR
is at least $\nicefrac{4}{3}$. 
We show that, in fact, this latter bound is tight.

\begin{theorem}\label{thm:ejr}
  The coverage price of EJR is $\nicefrac{4}{3}$.
\end{theorem}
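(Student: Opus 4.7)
The lower bound $\calP_{\cov}(\EJR) \geq \tfrac43$ follows from the example of \citet{LS20} cited just before the theorem, so the plan is to establish the matching upper bound: for every instance $I = (C, \calA, k)$, exhibit $W \in \EJR(I)$ with $\cov(W) \geq \tfrac34 \cov(I)$.

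My strategy is to follow the blueprint of the proof of Theorem~\ref{thm:jr-swcov}: construct $W$ in two stages and apply the case analysis on $\gamma = |N_O \cap N_{W'}|/|N_O|$ versus $\alpha = k'/k$, which yields $\cov(W) \ge (\alpha + (1-\alpha)^2)\cov(I) \ge \tfrac34 \cov(I)$, the minimum being achieved at $\alpha = \tfrac12$. The only modification is that the first stage, which in Theorem~\ref{thm:jr-swcov} guaranteed only JR, must now enforce EJR.

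Concretely, I would iteratively process EJR violations: starting from $W' = \varnothing$, while there is an $\ell$-cohesive, $\ell$-large group $N'$ such that no voter in $N'$ approves $\ell$ members of the current $W'$, pick one such $N'$ and add to $W'$ a batch of $\ell$ candidates commonly approved by $N'$, breaking ties to maximize the marginal coverage of voters not yet covered by $W'$. Upon termination, $W'$ provides EJR by construction. The second stage fills the remaining $k - k'$ slots (possibly in exponential time, as in Theorem~\ref{thm:jr-swcov}) to maximize marginal coverage of $N \setminus N_{W'}$; since $W \supseteq W'$, the final committee $W$ inherits EJR. The coverage analysis then mirrors that of Theorem~\ref{thm:jr-swcov} verbatim with $\alpha' = \alpha$: no $\lfloor 2\sqrt{k} \rfloor$ slack is needed because we do not impose a simultaneous welfare constraint, and this delivers the \emph{exact} bound $\tfrac34$ rather than $\tfrac34 - \varepsilon$.

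The main obstacle I expect is a Method-of-Equal-Shares/Rule~X-style budget accounting certifying the two first-stage facts that drive the coverage analysis: (i) $k' \le k$, and (ii) $|N_{W'}| \ge \alpha \cdot \cov(I)$. In the JR case these were immediate because each added candidate paid for $n/k$ distinct fresh voters. Under EJR, batches of $\ell$ candidates are added for $\ell$-cohesive groups whose voters may already be partially covered, voters belong to groups of varying $\ell$, and previously added candidates may already be commonly approved by the next processed group. Making this accounting watertight — for instance, by processing at each iteration only a violation $(N',\ell)$ in which $\ell n/k$ voters of $N'$ still hold unspent per-voter budget $n/k$, so that the $\ell$ new candidates are paid for by budget that has not yet been charged — is the delicate piece that would need to be executed to convert the plan into a complete proof.
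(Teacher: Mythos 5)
Your overall strategy coincides with the paper's: take the $\nicefrac43$ lower bound from the example of \citet{LS20}, build the committee in two stages with a coverage-maximizing second stage, and run the $\bigl(\alpha+(1-\alpha)^2\bigr)\cov(I)\ge\tfrac34\cov(I)$ case analysis (your $\gamma$ is the paper's $\beta$), correctly noting that no $\lfloor 2\sqrt{k}\rfloor$ slack is needed since there is no welfare constraint. However, the piece you flag as ``the delicate piece that would need to be executed'' is exactly the non-trivial content of the theorem, and as stated your first stage does not work. The loop ``while some $\ell$-cohesive, $\ell$-large group witnesses an EJR violation, add $\ell$ commonly approved candidates'' gives neither of the two facts the coverage analysis needs: (i) that at most $k$ candidates are added in total, and (ii) that $k'$ first-stage candidates certify coverage of at least $k'\cdot\nicefrac{n}{k}$ voters. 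For instance, with $n=k$, two voters who commonly approve many candidates can first trigger a violation at $\ell=1$ (one candidate added) and then at $\ell=2$ (two more added), so three candidates are charged to a group whose proportional budget is two; with many overlapping cohesive groups processed in an adversarial order, the committee can overshoot $k$, and the per-candidate ``$\nicefrac{n}{k}$ fresh voters'' accounting breaks down with it.

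The paper closes this gap with the GreedyEJR procedure of \citet{BFKN19} (see also \citet{PPS20}): iterate $\ell$ from $k$ \emph{down} to $1$, and while an $\ell$-cohesive, $\ell$-large group exists among the \emph{remaining} voters, take a maximum-size such group, add $\ell$ candidates from its common approval set, and \emph{remove} the group's voters from further consideration. Each iteration adds at most $\ell$ candidates while removing at least $\ell\cdot\nicefrac{n}{k}$ voters, which simultaneously yields $k'\le k$ and the bound that $t$ first-stage candidates cover at least $t\cdot\nicefrac{n}{k}$ voters (with a subtlety: a batch may contain candidates already in $W$, in which case the removed voters were already covered and the bound still holds). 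EJR of the output is then argued separately: a violating group $F$ at level $\ell$ either lost a member earlier --- but that member was removed in an iteration with parameter $\ell'\ge\ell$ and hence already approves $\ell'\ge\ell$ committee members --- or survives intact to level $\ell$, where the while loop would have processed it. The descending order of $\ell$ and the removal of voters are essential to both the size/coverage accounting and the EJR proof, so your proposal, while pointed in the right direction, is missing the idea that completes it.
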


To prove Theorem~\ref{thm:ejr}, we start by describing a greedy algorithm for computing a committee that provides EJR.
This algorithm, which we will refer to as \emph{GreedyEJR}, is presented
as Algorithm~\ref{alg:greedy-ejr}. It is similar in spirit to GreedyCC, 
but its running time may be exponential in the instance size.

Briefly put, the algorithm considers the
numbers $\ell$ from $k$ down to $1$ and for each $\ell$ it greedily
considers $\ell$-cohesive, $\ell$-large groups (starting with the
largest ones); for each such group, it extends the committee with
$\ell$ candidates that are approved by each member of the group; we refer
to these candidates as \emph{first-stage} committee members. If 
at the end of this process fewer than $k$ candidates are selected, 
it fills in the committee arbitrarily (see the final \emph{if} statement in
Algorithm~\ref{alg:greedy-ejr}); we refer to these new committee
members as \emph{second-stage} candidates.

GreedyEJR was first used within the proof of Theorem~3 of \citet{BFKN19}\footnote{Unfortunately, the proof of this theorem is
  not provided in the conference version of their paper and the full
  version is not publicly available.} and recently its extension was
considered by \citet{PPS20}.
In their unpublished proof, \citet{BFKN19} argued that the committees
returned by GreedyEJR provide EJR, and \citet{PPS20} show
that---after extending the algorithm in a minor way---they satisfy an
even stronger property, which they call \emph{full justified
  representation}.  We reproduce this argument below, as it will be
helpful in understanding how the algorithm works. However, first we will
show that it always outputs a committee of size~$k$.

Let $I = (C,\calA, k)$ be an input instance for GreedyEJR. We note
that, whenever the algorithm chooses some $\ell$-large, $\ell$-cohesive set in the \emph{while} loop, it removes at least $\ell\cdot\frac{n}{k}$ voters
from $N$ and adds at most $\ell$ candidates to $W$, 
i.e., it removes at least $\frac{n}{k}$ voters for each candidate it adds. 
Thus, altogether,
the \emph{while} loop adds at most $k$ candidates to $W$.

\begin{proposition}[\citet{BFKN19,PPS20}]
\label{prop:greedy-EJR}
  Given an instance $I = (C, \calA, k)$, GreedyEJR outputs a committee
  $W$ such that $W \in \EJR(I)$.
\end{proposition}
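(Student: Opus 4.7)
The plan is to argue by contradiction. Suppose the output $W$ fails EJR, so there exist some $\ell^* \in [k]$ and an $\ell^*$-cohesive, $\ell^*$-large group $N^* \subseteq N$ such that no voter in $N^*$ approves $\ell^*$ or more candidates in $W$. Second-stage candidates are only appended on top of whatever the \emph{while} loop produces and can only enlarge $W$, so they cannot destroy the EJR property of a committee that already has it. Hence it suffices to derive a contradiction from the committee constructed by the \emph{while} loop alone.

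The first step I would establish is the following invariant: whenever the algorithm removes a voter $i$ from $N$ during the outer iteration for level $\ell'$, it is because $i$ belongs to the chosen $\ell'$-cohesive group $N'$, and the algorithm then adds to $W$ exactly $\ell'$ candidates drawn from the common approval intersection of $N'$. In particular, $i$ approves all $\ell'$ of those candidates and therefore approves at least $\ell'$ members of the final committee $W$. Applied to our assumed witness, this forces no voter of $N^*$ to have been removed during processing at any level $\ell' \ge \ell^*$, for otherwise that voter would approve at least $\ell^* \le \ell'$ members of $W$, contradicting the choice of $N^*$.

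Since levels are processed in the order $\ell = k, k-1, \dots, 1$, the previous observation means $N^*$ is still entirely contained in the voter pool at the moment the outer iteration for $\ell = \ell^*$ finishes. But $\ell^*$-cohesiveness is a property of the original ballots and thus passes to any subset of voters, so $N^*$ remains an $\ell^*$-cohesive group in the pool; and $|N^*| \ge \ell^* n/k$ by assumption, so it is also $\ell^*$-large. This contradicts the termination condition of the inner \emph{while} loop, which exits only once no $\ell^*$-cohesive, $\ell^*$-large group remains in the current pool. The main point to be careful with — arguably the only nontrivial step — is the invariant: one must verify that the $\ell'$ candidates added at a chosen iteration genuinely lie in the common intersection of the selected group, so that every removed voter really does approve all of them and thereby contributes $\ell'$ approved committee members toward any EJR claim at level $\le \ell'$.
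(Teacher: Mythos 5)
Your proof is correct and follows essentially the same route as the paper's: you assume a witnessing $\ell^*$-cohesive, $\ell^*$-large group, note that any of its voters removed at a level $\ell'\ge\ell^*$ would already approve at least $\ell'$ members of $W$ (since the $\ell'$ candidates added lie in the group's common intersection and end up in $W$), and otherwise the group survives intact past the iteration for $\ell^*$, contradicting the \emph{while}-loop's termination condition. The only difference is organizational (you phrase the paper's two cases as an invariant plus a termination argument, and you make explicit that second-stage candidates cannot break EJR), which does not change the substance.
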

\begin{proof}
  Let $I = (C, \calA, k)$ be an instance, and let $W$ be a committee
  that GreedyEJR outputs for this instance. For the sake of
  contradiction, let us assume that there is a number $\ell \in [k]$
  and an $\ell$-large, $\ell$-cohesive group of voters $F$
  witnessing that $W$ fails EJR (i.e., each voter in $F$ approves fewer than $\ell$ members of $W$).
  Let us consider the situation after GreedyEJR completed the
  \emph{while} loop for this value of $\ell$ (and all the larger
  ones). There are two possibilities:
  \begin{enumerate}
  \item At this point of the algorithm, some voter $i \in F$ was already
    removed from $N$. However, for this to happen GreedyEJR must have extended $W$
    with at least $\ell$ candidates approved by $i$, which contradicts our assumption.
  \item At this point, $N$ contains all members of $F$. Yet, this is
    impossible because $F$ is an $\ell$-large, $\ell$-cohesive group,
    so the \emph{for} loop would have to consider it and remove all
    its members from $N$.
  \end{enumerate}
  As we see, both of these possibilities lead to contradictions and,
  thus, the assumption that $F$ exists is false. This completes the
  proof.
\end{proof}

\begin{algorithm}[t]
  \DontPrintSemicolon
  \SetAlgoLined
  \KwIn{An instance $(C,\calA, k)$, where $A = (A_1, \ldots, A_n)$}
  \KwOut{A committee $W \subseteq C$ such that $|W| = k$ and $W$ provides EJR}
  \BlankLine
  $N \leftarrow \{1, \ldots, n\}$\;
  $W \leftarrow \varnothing$\;
  \For{$\ell \leftarrow k$ \KwTo  $1$}{
    \While{$N$ contains an $\ell$-cohesive, $\ell$-large group}{
        \emph{\small // choose a maximum-size $\ell$-cohesive, $\ell$-large group}\;
        $N' \in \arg\max_{M \subseteq N}\{|M| \colon |\bigcap_{i \in M}A_i| \geq \ell\}$\;
        $B \leftarrow$ a set of $\ell$ arbitrary candidates from $\bigcap_{i \in N'}A_i$\;
        $N \leftarrow N \setminus N'$\;
        $W \leftarrow W \cup B$\; \label{gejr:extend}
      }
    }
    \If{$|W| < k$}{
      extend $W$ to size $k$ arbitrarily 
    }
    \caption{\label{alg:greedy-ejr}GreedyEJR}
\end{algorithm}

We are now ready to prove Theorem~\ref{thm:ejr}.

\begin{proof}[Proof of Theorem~\ref{thm:ejr}]
  The lower bound follows from an example of \citet{LS20} (see Example~\ref{ex:coverage-EJR}), so we focus on the upper bound.
  Let us consider an instance $I = (C,\calA, k)$ with
  $\calA = (A_1, \ldots, A_n)$, and let $O$ be a size-$k$ committee
  that achieves the highest possible coverage for this instance, i.e., $\cov(O)=\cov(I)$.  We
  let $W$ be a committee that GreedyEJR would compute for instance
  $I$, assuming that in the final \emph{if} statement it chooses
  candidates so as to maximize the committee's coverage. 
  We will show that $\cov(W) \geq \frac{3}{4}\cdot \cov(O)$.

  Let $\alpha \in [0,1]$ be the fraction of first-stage candidates in
  $W$.  We claim that the first-stage candidates cover at least
  $\alpha n$ voters. 
  To see why this is the case,
  consider some iteration of the \emph{while} loop of
  GreedyEJR. Whenever an iteration is executed, GreedyEJR
  (a)~extends~$W$ with a set $B$ of $\ell$ candidates, and (b)~removes
  at least $\ell \cdot \frac{n}{k}$ voters from the set $N$ (i.e., the
  algorithm finds out that at least $\ell \cdot \frac{n}{k}$ more
  voters are covered).\footnote{Note that there is a subtlety
    here. Committee $W$ might have already contained some members
    of~$B$. In such a case, the voters removed from $N$ were already
    covered, but the algorithm did not ``realize'' this. Thus each
    execution of line~\ref{gejr:extend} means extending the committee
    with at most $\ell$ candidates and obtaining a proof that at least
    $\ell \cdot \frac{n}{k}$ more voters are covered.}  Thus, if
  GreedyEJR identifies $t$ first-stage candidates in total, then these
  candidates cover at least $t \cdot \frac{n}{k}$ voters.  Since
  $\alpha = \nicefrac{t}{k}$, we obtain our claim.

  Next let us consider the voters covered by committee $O$. We will
  refer to them as $N_O$; note that $|N_O| = \cov(I)$. Let $\beta$
  be the fraction of voters from $N_O$ that are covered by the
  first-stage candidates. As argued in the proof of Theorem~\ref{thm:jr-swcov}, 
  using a $1-\alpha$ fraction of the candidates from $O$ we can
  cover at least $1-\alpha$ fraction of the uncovered voters from
  $N_O$, i.e., altogether $(1-\alpha)(1-\beta)$ fraction of the
  voters from $N_O$. Now we consider two cases:
  \begin{enumerate}
  \item If $\beta \leq \alpha$ then we know that the second-stage
    candidates cover at least $(1-\beta)(1-\alpha) \geq (1-\alpha)^2$
    fraction of the voters from $N_O$. As the first-stage voters
    cover at least $\alpha n \geq \alpha |N_O|$ voters, it follows that altogether committee $W$ covers at least
    $(\alpha + (1-\alpha)^2)\cov(I)$ voters (recall that
    $\cov(I) = |N_O|$).
  \item If $\beta > \alpha$ then we know that the number of voters covered
    by the first-stage candidates is at least:
    \[ \beta |N_O| = \alpha |N_O| + (\beta-\alpha)|N_O|, \] whereas
    the number of voters covered by the second-stage candidates is at
    least:
    \begin{align*} 
    &(1-\beta)(1-\alpha)|N_O| \\
    &= (1-\alpha)^2|N_O| - (\beta -
      \alpha)(1-\alpha)|N_O|. 
    \end{align*} 
      If we add these two values, we obtain:
    \begin{align*} 
      &\alpha |N_O| + (1-\alpha)^2|N_O| \\
      &+ \underbrace{(\beta-\alpha)|N_O| - (\beta -
      \alpha)(1-\alpha)|N_O|}_{\geq 0}\\ &\geq (\alpha + (1-\alpha)^2)\cov(I),
    \end{align*}
    where we use the fact that $\cov(I) = |N_O|$.
  \end{enumerate}
  All in all, in either case our committee $W$ covers at least
  $(\alpha + (1-\alpha)^2)\cov(I)$ voters (although not all of them
  need to belong to $N_O$).  
  Since $\alpha + (1-\alpha)^2 \ge \nicefrac{3}{4}$ for all $\alpha\in[0,1]$, our proof is complete.
\end{proof}

\section{Optimizing Social Welfare under JR}\label{sec:opt}
In this section we focus on the complexity of maximizing the social welfare 
over the set of all committees that provide JR. 
We do not discuss maximizing coverage,
as this topic is already covered in the rich literature on the
approval-based Chamberlin--Courant rule; see, e.g., the works of
\citet{pro-ros-zoh:j:proportional-representation},
\citet{bou-lu:c:chamberlin-courant}, \citet{bet-sli-uhl:j:mon-cc}, 
\citet{sko-yu-fal-elk:j:sc-cc},
and \citet{GBSF21}.

\subsection{Hardness of Approximation}\label{sec:inapprox}
Interestingly, while there are polynomial-time algorithms for finding committees that provide JR \citep{ABC17}, and we can find a social welfare-maximizing committee by simply picking $k$ candidates with the highest number of approvals, finding a social welfare-maximizing JR committee turns out to be $\np$-hard. This was first observed by \citet{BFKN19};
we will now strengthen their hardness result to an inapproximability result.
We show that, 
for any constant $\varepsilon > 0$, it is $\np$-hard to
approximate $\sw_\JR(I)$ to within a factor of
$k^{1/2 - \varepsilon}$. This hardness result holds even when $n=2k$.

\begin{theorem} \label{thm:welfare-hardness} For any
  $\varepsilon \in (0, 1/2)$, the following problem is $\np$-hard:\footnote{When saying that this search problem is $\np$-hard, 
we mean that a polynomial-time algorithm for the problem
would enable us to solve all problems in $\np$ in polynomial time.}
  Given an instance $I = (C, \calA, k)$, find a committee
  $W \in \JR(I)$ such that
  $\sw(W) \geq \sw_\JR(I) / k^{1/2 - \varepsilon}$.
  This problem remains $\np$-hard even if $k = n/2$.
\end{theorem}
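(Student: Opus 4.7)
The plan is to give a gap-preserving polynomial-time reduction that strengthens the NP-hardness of \citet{BFKN19}. I would reduce from a gap form of a graph-density problem whose inapproximability already lies near the $k^{1/2-\varepsilon}$ regime. A canonical candidate is a gap version of \textsc{Max Clique} composed with a gap-amplifying product (in the style of the FGLSS construction or randomized graph products), which converts H{\aa}stad's $N^{1-\delta}$ hardness into a density-vs-sparsity dichotomy for induced $k$-subgraphs. Given such a graph $G=(V,E)$, I would build an election $I=(C,\mathcal{A},k)$ as follows. The candidate set $C$ contains a vertex candidate $c_v$ for each $v\in V$ and a pool of auxiliary candidates $D$ used purely to satisfy JR. The voters split into: (i)~edge voters approving pairs $\{c_u,c_v\}$ for each edge $\{u,v\}\in E$, which drive the welfare; (ii)~JR-forcing voter blocks that lock the auxiliary candidates of $D$ into any JR committee; and (iii)~padding voters used to arrange $n=2k$. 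The committee size $k$ is calibrated to $\Theta(N)$.

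In the completeness direction, a dense subgraph of size $t_Y$ gives a JR committee consisting of those $t_Y$ vertex candidates plus the forced auxiliaries, with welfare $\Theta(t_Y^{2})$ from the edge voters. For soundness, under the guarantee that $G$ has no dense $k$-subgraph, every $O(t_Y)$-subset of vertex candidates induces $O(t_Y^{3/2+\varepsilon})$ edges, giving an upper bound on the welfare of any JR committee that is smaller by the target factor of $k^{1/2-\varepsilon}$. The side constraint $k=n/2$ is realised by direct padding with singleton-approval voter blocks, each satisfied by a dedicated dummy in $D$ that is itself forced in by its own cohesive group; since this padding contributes identically to $\sw$ in both YES and NO instances, it only shifts the welfare by a $1+o(1)$ factor and does not affect the asymptotic ratio.

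The heart of the proof is the soundness density bound. A plain \textsc{Max Clique} source by itself is insufficient: by Tur\'an's theorem, a $K_{N^{\delta}+1}$-free graph on $k$ vertices may still contain $\Theta(k^{2})$ edges, which would destroy the welfare gap. Overcoming this is the main obstacle, and I see two routes: (a)~precompose the \textsc{Max Clique} instance with a suitable randomised or derandomised product construction so that in the NO case every $k$-vertex induced subgraph has the required sparse density; or (b)~reduce instead from a source problem whose inapproximability directly controls edge-density in $k$-subsets, such as a gap version of \textsc{Densest $k$-Subgraph} in a regime where its hardness is of the required strength. Once this density-control step is in place, matching the exponent to $k^{1/2-\varepsilon}$ is a routine calibration of $\delta$ against $\varepsilon$, and designing the JR auxiliary gadgets together with the padding to $n=2k$ is essentially bookkeeping.
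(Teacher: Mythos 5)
There is a genuine gap, and it sits exactly where you locate ``the heart of the proof.'' Your soundness step needs an \emph{NP-hardness} result saying that in NO instances \emph{every} $k$-vertex subset is sparse while YES instances contain a dense $k$-subgraph; this is precisely a polynomial-factor gap version of \textsc{Densest $k$-Subgraph}. No such NP-hardness is known: H{\aa}stad's result controls only the largest \emph{clique}, and (as you yourself note via Tur\'an) clique-freeness does not bound the edge density of $k$-subsets. Neither of your two escape routes closes this: randomized/derandomized graph products amplify the clique gap but again say nothing about the density of arbitrary $k$-subsets, and strong inapproximability for \textsc{Densest $k$-Subgraph} is only known under assumptions stronger than $\p\neq\np$ (ETH-type or planted-clique-type), so invoking it would not yield the NP-hardness claimed in the theorem. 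In effect, your plan reduces the theorem to a well-known open problem rather than to a known hard one. There is also a second, more elementary flaw in the welfare accounting: with edge voters approving the two endpoint candidates, the social welfare contributed by a set $W$ of vertex candidates is $\sum_{e=\{u,v\}}|\{u,v\}\cap W| = \sum_{v\in W}\deg_G(v)$, a \emph{degree sum}, not (twice) the number of induced edges. A NO instance with high-degree vertices then admits large welfare regardless of induced density, and on a regular graph every $k$-subset of vertex candidates gives identical welfare, so no gap survives. (Separately, with $n=2k$ every pair of adjacent edge voters is a cohesive group of size $\nicefrac{n}{k}=2$, so JR forces the chosen vertex candidates to hit essentially every such pair---a vertex-cover-like constraint that your intended ``dense subgraph plus auxiliaries'' committee need not satisfy.)

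For contrast, the paper sidesteps density entirely. It reduces from a gap version of \textsc{Edge Dominating Set}: vertex \emph{voters} approve their incident \emph{edge candidates}, plus $y\approx\sqrt{|V|}$ dummy voters who all approve $|V|$ dummy candidates, with $k=(|V|+y)/2$ and $n=2k$. JR forces the edge candidates in the committee to form an edge dominating set, and since each dummy candidate is worth $y$ approvals while each edge candidate is worth $2$, one gets exactly $\sw_\JR(I)=yk-(y-2)\eds(G)$; the needed gap hardness for $\eds$ follows from H{\aa}stad's \textsc{Independent Set} hardness via a padding reduction of Chleb\'{\i}k--Chleb\'{\i}kov\'a. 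The key design difference is that the welfare gap is generated by the \emph{number of expensive dummy candidates the committee can afford}, traded off against the number of cheap edge candidates that JR compels, rather than by the density of a selected subgraph---so only a covering-type hardness is required, which is available under $\p\neq\np$.
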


Since GreedyCC$^\sw$ runs in polynomial time and the optimal social welfare under JR is at most the optimal social welfare overall, Theorem~\ref{thm:overall-price-sw} implies that the inapproximability factor in Theorem~\ref{thm:welfare-hardness} is asymptotically tight.
To prove Theorem~\ref{thm:welfare-hardness}, we need some
additional background.

We say that two edges of an undirected graph are \emph{adjacent} if they share a vertex.
A set of edges $S$ forms an \emph{edge dominating set} if every
edge in the graph is adjacent to at least one edge from $S$.  We
write $\eds(G)$ to denote the size of a smallest edge dominating set of $G$.
In the {\sc Edge Dominating Set} problem we are given an undirected graph $G$, 
and the task is to compute $\eds(G)$.

In the next lemma we link the \textsc{Edge Dominating Set} problem
with finding a social welfare-maximizing JR committee.

\begin{lemma}\label{lem:red} There is a polynomial-time reduction
  that, given a graph $G = (V, E)$ and an integer $y \geq 2$ such that
  $|V| + y$ is even, produces an instance $I = (C, \calA, k)$ with
  $k = (|V|+y)/2$ and $n = 2k$ voters, where
  $\sw_\JR(I) = y k - (y - 2)\cdot\eds(G)$.
\end{lemma}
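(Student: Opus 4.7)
The plan is to construct an instance $I$ whose cohesive pairs encode exactly the edges of $G$, so that JR is equivalent to an edge-domination condition. I would take one ``vertex voter'' $i_v$ for each $v\in V$ and $y$ ``dummy voters'' $j_1,\dots,j_y$, giving $n=|V|+y=2k$. The candidate set is $\{c_e : e\in E\}\cup\{s_1,\dots,s_k\}$, together with a private padding candidate $p_v$ for each isolated vertex $v$ (only to keep $A_{i_v}$ non-empty). I would set $A_{i_v}=\{c_e : v\in e\}$ (or $\{p_v\}$ if $v$ is isolated) and $A_{j_r}=\{s_1,\dots,s_k\}$ for every $r\in[y]$.

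Since $n/k=2$, JR applies to every cohesive group of size $\geq 2$, so the next step is to enumerate such groups. I would check: (a) $\{i_u,i_v\}$ is cohesive iff $(u,v)\in E$ (they then share $c_{uv}$), and no three vertex voters can be cohesive because no edge has three endpoints; (b) every subset of dummies is cohesive via the specials; (c) no cohesive group mixes a vertex voter with a dummy, since their approval sets are disjoint. Hence $W$ satisfies JR iff (i) for every edge $(u,v)\in E$, $W$ contains some $c_{e'}$ with $e'$ incident to $u$ or $v$, and (ii) $W$ contains at least one special candidate. Condition~(i) says exactly that $\{e : c_e\in W\}$ is an edge dominating set of $G$, so any JR committee contains at least $\eds(G)$ edge candidates.

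For the welfare side, let $t=|W\cap\{c_e:e\in E\}|$ and $s=|W\cap\{s_1,\dots,s_k\}|$. An edge candidate contributes $2$ approvals, a special contributes $y$, and a padding candidate contributes only $1$; since $y\geq 2$, swapping padding for any unused special or edge candidate cannot decrease welfare, so I may assume $s+t=k$, giving $\sw(W)=2t+y(k-t)=yk-(y-2)t$. Combined with $y\geq 2$ and $t\geq\eds(G)$, this yields $\sw_\JR(I)\leq yk-(y-2)\eds(G)$. To match this bound, I would let $D^*$ be a minimum EDS of $G$ and set $W^*=\{c_e:e\in D^*\}\cup\{s_1,\dots,s_{k-\eds(G)}\}$: it has size $k$, satisfies~(i) by construction, and satisfies~(ii) because $k-\eds(G)\geq k-|V|/2 = y/2 \geq 1$, where $\eds(G)\leq|V|/2$ since any maximal matching is an EDS and uses at most $|V|/2$ edges.

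The main point that will require careful checking is the cohesive-group characterization -- in particular, ruling out mixed vertex--dummy pairs and three-way vertex cohesion, and ensuring that padding candidates cannot help satisfy JR for any edge -- since once that is established, the rest is a one-variable linear optimization in $t$ together with the classical bound on minimum maximal matchings.
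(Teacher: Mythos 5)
Your proposal is correct and follows essentially the same reduction as the paper: vertex voters plus $y$ dummy voters, edge candidates plus filler candidates approved by all dummies, with JR committees characterized exactly by their edge candidates forming an edge dominating set, and the welfare computed as $yk-(y-2)t$. The only differences are cosmetic robustness tweaks (using $k$ filler candidates instead of $|V|$, and padding candidates for isolated vertices to keep ballots non-empty), which do not change the argument.
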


\begin{proof}
  Let $G = (V,E)$ be our input graph, and let $y\ge 2$ be an integer such
  that $|V|+y$ is even. We form an instance $I = (C, \calA, k)$ as
  follows.  We let $U = \{u_1, \dots, u_y\}$ be a set of $y$
  dummy voters, and let the set of voters be $V \cup U$. 
  We let $Z = \{z_1, \dots, z_{|V|}\}$ be a set of $|V|$ dummy candidates, and
  let the set of candidates be $C = E \cup Z$.
  The approval ballots are such that:
  \begin{enumerate}
  \item each vertex voter $v \in V$ approves exactly the edges she belongs to, and
  \item  each dummy voter in $U$ approves all the candidates in $Z$.
  \end{enumerate}
  We set $k = (|V|+y)/2$. Note that there are $n = 2k$ voters.  This
  completes the description of our reduction. It is clear that it is
  computable in polynomial time.  We will now show that
  $\sw_\JR(I) = y k - (y - 2) \cdot \eds(G)$.

  First, we will show that
  $\sw_\JR(I) \geq y k - (y - 2) \cdot \eds(G)$. Let $E' \subseteq E$ be
  an edge dominating set of $G$ of size $\eds(G)$.  Note that
  $|E'| \leq |V|/2$ and, thus, $|E'| < n/2 = k$.  Let
  $W = E' \cup \{z_1, \dots, z_{k - |E'|}\}$.  Note that $W$ provides
  JR. Indeed, we have $\nicefrac{n}{k} = 2$, and each cohesive
  group of voters either contains a dummy voter (who approves
  $z_1$) or consists of two vertices connected by an edge (in which
  case, by definition of the edge dominating set, at least one of the
  vertex-voters approves a candidate in $E'$). Further, we
  have:
  \begin{align*}
    \sw(W) 
    &= 2 |E'| + y \cdot (k - |E'|) 
    = y k - (y - 2)\eds(G).
  \end{align*}

  Next, we will show that $\sw_\JR(I) \leq y k - (y - 2) \cdot
  \eds(G)$. Consider any committee $W' \in \JR(I)$ of the form
  $W' = E' \cup Z'$, where $E' \subseteq E$ and $Z' \subseteq Z$. Note
  that $E'$ must be an edge dominating set of $G$; otherwise, if
  $e = \{v, v'\}$ were not covered, then voters $v$ and $v'$ would
  violate the JR condition.  This means that $|E'| \geq \eds(G)$.  It
  follows that:
\begin{align*}
  \sw(W') &= 2|E'| + y|Z'| = 2|E'| + y\cdot(k-|E'|) \\
         &= y k - (y - 2) |E'| \leq y k - (y - 2) \cdot \eds(G),
\end{align*}
which concludes our proof.
\end{proof}

To derive Theorem~\ref{thm:welfare-hardness}, we will also need the
following hardness result for EDS.

\begin{lemma} \label{lem:hardness-eds} For every constant
  $\zeta \in (0, \nicefrac{1}{2})$, the following problem is $\np$-hard: Given a
  graph $G = (V, E)$, distinguish between the following possibilities:
\begin{enumerate}
\item (YES) $\eds(G) \leq \frac{1}{2} (|V| - |V|^{1 - \zeta})$;
\item (NO) $\eds(G) \geq \frac{1}{2} (|V| - |V|^\zeta)$.
\end{enumerate}
\end{lemma}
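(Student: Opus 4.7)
My plan is a gap-preserving reduction from the $N^{1-\varepsilon}$ inapproximability of Maximum Independent Set due to H\r astad.

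First I would reformulate the problem in a more workable form. Let $\eta(G) := |V(G)| - 2\eds(G)$ denote the ``slack'' of $G$. I claim that $\eta(G)$ equals the maximum size of an independent set $I \subseteq V(G)$ such that $G \setminus I$ admits a perfect matching. Indeed, given a minimum maximal matching $M$ of $G$, its set of unmatched vertices is independent (otherwise an unmatched edge would contradict maximality), and its complement is perfectly matched by $M$; conversely, any perfect matching of $G \setminus I$ for an independent $I$ is maximal in $G$, so its size is at least $\eds(G)$. Thus the lemma is equivalent to NP-hardness of distinguishing $\eta(G) \geq n^{1-\zeta}$ from $\eta(G) \leq n^{\zeta}$, where $n = |V(G)|$.

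Next I would invoke the standard gap version of \textsc{MaxIS} obtained from H\r astad's theorem: for every $\zeta' \in (0,\nicefrac12)$, it is NP-hard to distinguish $N$-vertex graphs $H$ with $\alpha(H) \geq N^{1-\zeta'}$ from those with $\alpha(H) \leq N^{\zeta'}$. Given such an $H$, I would build $G$ by attaching a small local ``matching gadget'' to every vertex $v \in V(H)$ (for instance a pendant path $v{-}a_v{-}b_v$), possibly together with a disjoint padding of $K_2$'s to tune $n$ exactly. The gadget is designed so that (a) every independent set $S$ of $H$ lifts to an independent set $I$ of $G$ of size $|S|$ (or $|S|$ plus a controlled additive term) whose complement in $G$ admits a perfect matching, and (b) conversely every independent set $I$ of $G$ with a perfectly matchable complement ``projects'' to an independent set of $H$ of size at least $|I|$ minus a small controlled loss. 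Since each $K_2$ forces both its endpoints to be matched (they have no external neighbours), any IS in $G$ must avoid them, and within each local gadget the interaction with matchings is tightly constrained, so $\eta(G)$ tracks $\alpha(H)$ up to additive terms linear in the gadget count.

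The main obstacle is precisely the calibration of the gadget so that condition (b) holds: the gadget should neither inflate $\eta(G)$ by providing its own extremal independent sets (which pendant constructions naively do) nor introduce spurious perfect matchings that let non-IS configurations masquerade as valid. I would handle this by choosing a gadget whose only ``free'' independent-set contribution is forced to coincide with the decision about $v$'s membership in the IS of $H$ — for example, by carefully choosing path lengths and parities so that any IS using gadget vertices is strictly dominated (in slack) by one using the $H$-vertex. Once this gadget is in place, the gap $\alpha(H) \in \{N^{1-\zeta'}, N^{\zeta'}\}$ translates to $\eta(G) \in \{n^{1-\zeta}, n^\zeta\}$ with $\zeta$ differing from $\zeta'$ by an arbitrarily small constant (absorbed by the freedom in picking $\zeta'$), which rewrites as the EDS gap in the statement.
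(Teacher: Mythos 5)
There is a genuine gap: the heart of the proof --- the actual reduction --- is never constructed. Your reformulation $\eta(G)=|V|-2\,\eds(G)$ as the maximum size of an independent set with perfectly matchable complement is fine (it implicitly uses the Gavril--Yannakakis fact that a minimum edge dominating set can be taken to be a maximal matching, which you should cite), and your ``NO-side'' observation that vertices untouched by an edge dominating set form an independent set, so $\eds(G)\ge\frac{1}{2}(|V|-\alpha(G))$, is exactly the bound the paper also uses. But everything then hinges on a per-vertex gadget that you only describe by the properties you wish it had, and you yourself flag its calibration as ``the main obstacle.'' The one concrete gadget you name, the pendant path $v\hbox{--}a_v\hbox{--}b_v$, provably fails: taking $I=\{b_v : v\in V(H)\}$ gives an independent set of size $|V(H)|$ whose complement is perfectly matched by the edges $\{v,a_v\}$, so $\eta(G)\ge |V(H)|=n/3$ regardless of $\alpha(H)$, destroying the gap (a single pendant vertex per $v$ fails for a different reason: $\eta$ then measures the largest perfectly matchable induced subgraph of $H$, not $\alpha(H)$). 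Asserting that ``careful path lengths and parities'' will fix this is not a proof, and constant-size local gadgets appear structurally ill-suited, because they hand each vertex its own slack independently of whether it joins the independent set.

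The paper's construction resolves this with a global, not local, modification whose size is calibrated to the YES threshold: starting from H\r{a}stad's gap instance $G'=(V',E')$ with parameter $\beta=\zeta/2$, it adds $n''=|V'|-\lceil |V'|^{1-\beta}\rceil$ new vertices, each adjacent to \emph{all} other vertices of $G$. In the YES case, pairing the $n''$ vertices of $V'\setminus S$ (for an independent set $S$ of size $\lceil |V'|^{1-\beta}\rceil$) with the $n''$ universal vertices yields an explicit edge dominating set of size $n''=\frac{1}{2}\bigl(|V|-\lceil|V'|^{1-\beta}\rceil\bigr)$; in the NO case, $\alpha(G)=\alpha(G')$ and the uncovered-vertices bound gives $\eds(G)\ge\frac{1}{2}(|V|-\alpha(G'))$. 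Note that this sidesteps your perfect-matchability characterization entirely on the YES side (no parity or matching-structure issues arise, since the dominating set is exhibited directly), which is precisely the difficulty your sketch leaves open. To repair your write-up you would need to replace the per-vertex gadget with a construction of this calibrated-padding type and verify both directions explicitly.
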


The proof of Lemma~\ref{lem:hardness-eds} follows by combining a
reduction of~\citet{CC06} with a super-constant hardness of
approximation for the \textsc{Independent Set} problem. 
Recall that an \emph{independent set} of a graph is a set of vertices such that any pair of them are not connected by an edge; we use $\alpha(G)$ to denote the maximum size of an independent set of the graph~$G$. We will need the following
result by \citet{H96}.

\begin{lemma}[\citet{H96}] \label{lem:Hastad}
For any $\beta \in (0, 1/2)$, the following problem is $\np$-hard:
Given a graph $G' = (V', E')$, distinguish between
\begin{itemize}
\item (YES) $\alpha(G') \geq |V'|^{1 - \beta}$;
\item (NO) $\alpha(G') \leq |V'|^\beta$.
\end{itemize}
\end{lemma}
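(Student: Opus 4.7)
This is Håstad's celebrated inapproximability theorem for maximum independent set (equivalently, maximum clique on the complement graph), which the present paper invokes as a cited black box. My plan is to sketch the strategy underlying Håstad's proof rather than reproduce the full machinery, most of which is orthogonal to the rest of this paper.

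The starting point would be the PCP theorem in its strong two-prover one-round form, applied to an \np-hard predicate such as Label Cover derived from 3SAT. Applying Raz's parallel repetition theorem $t$ times drives the soundness error down exponentially in $t$ at the cost of blowing the alphabet up to size $|\Sigma|^t$; choosing $t$ as a sufficiently large function of the target exponent $\beta$ produces a Label Cover instance whose soundness-to-completeness ratio is an arbitrarily small polynomial of the alphabet size. From such an instance one would then apply a long-code based FGLSS-style reduction to build the graph $G'$: its vertices correspond to (query, accepting local view) pairs of the verifier, and edges connect pairs of views that are mutually inconsistent, so that independent sets in $G'$ are in bijection with consistent prover strategies. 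A Fourier-analytic soundness analysis of the long code test then shows that $\alpha(G')$ is essentially proportional to the maximum acceptance probability of the protocol, and tuning parameters so that $|V'|$ becomes a chosen polynomial in the original instance size yields, for any prescribed $\beta \in (0, \nicefrac{1}{2})$, the gap between $\alpha(G') \ge |V'|^{1-\beta}$ in the YES case and $\alpha(G') \le |V'|^\beta$ in the NO case.

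The main technical obstacle is the Fourier-analytic soundness analysis of the long code test; this is precisely what enables the inapproximability exponent to approach $1$ rather than merely some constant strictly less than one, and it is the heart of Håstad's argument. Since the lemma is used only as a black box within the present paper, I would not reproduce these arguments and would simply refer the reader to \citet{H96} for the full proof, possibly combined with Zuckerman's later derandomization if one wants plain \np-hardness rather than the randomized version Håstad originally established.
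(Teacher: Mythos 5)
The paper offers no proof of this lemma at all---it is imported verbatim from \citet{H96} as a black-box citation (with Zuckerman's later derandomization implicitly folded in to obtain plain $\np$-hardness), which is exactly the move you make. Your sketch of the underlying machinery (parallel repetition, the FGLSS graph over accepting views, the long-code analysis) is a fair high-level account of the literature, so your treatment matches the paper's.
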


\begin{proof}[Proof of Lemma~\ref{lem:hardness-eds}]
Let $\beta = \zeta/2$. Following~\citet{CC06}, we construct a graph $G = (V, E)$ from the graph $G' = (V', E')$ in Lemma~\ref{lem:Hastad} as follows. 
Let $n'' = |V'| - \lceil |V'|^{1 - \beta} \rceil$, and let $V = V' \cup V''$ where $V'' = \{v''_1, \dots, v''_{n''}\}$. 
Furthermore, every vertex in $V''$ has an edge to all other vertices in $G$, whereas the graph induced on $V'$ is the same as $G'$. 
It is clear that the reduction runs in polynomial time.

(YES) Suppose that $\alpha(G') \geq |V'|^{1 - \beta}$. 
Let $S \subseteq G'$ be an independent set of size $\lceil |V'|^{1 - \beta} \rceil$ in $G'$, and let the vertices in $V' \setminus S$ be $v'_1, \dots, v'_{n''}$. 
Consider the set $E^* \subseteq E$ of edges $\{(v'_1, v''_1), \dots, (v'_{n''}, v''_{n''})\}$.
One can check that $E^*$ is an edge dominating set of $G$. 
As a result, we have 
\begin{align*}
\eds(G) 
&\leq |E^*| 
= n'' 
= |V'| - \lceil |V'|^{1 - \beta} \rceil  \\
&= \frac{1}{2}\left(|V| - \lceil |V'|^{1 - \beta} \rceil\right),
\end{align*}
which is at most $\frac{1}{2}\left(|V| - |V|^{1 - \zeta}\right)$ for any sufficiently large $|V'|$.

(NO) Suppose that $\alpha(G') \leq |V'|^{\beta}$.
Observe that $\alpha(G) = \alpha(G')$. 
Moreover, for any edge dominating set $E^*$, it is clear that the set of vertices with no adjacent edge in $E^*$ must form an independent set. 
This implies that
\begin{align*}
\eds(G) 
&\geq \frac{|V| - \alpha(G)}{2} \\
&= \frac{|V| - \alpha(G')}{2} \geq \frac{1}{2}\left(|V| - |V'|^{\beta}\right),
\end{align*}
which is at least $\frac{1}{2}\left(|V| - |V|^\zeta\right)$.
\end{proof}

With Lemmas~\ref{lem:red} and~\ref{lem:hardness-eds} in
hand, we are ready to prove Theorem~\ref{thm:welfare-hardness}.

\begin{proof}[Proof of Theorem~\ref{thm:welfare-hardness}]
  Let $\varepsilon \in (0, \nicefrac{1}{2})$ be an arbitrary
  constant, and let $\zeta = \varepsilon/2$. We consider the problem
  stated in Lemma~\ref{lem:hardness-eds} for this value of $\zeta$, with
  input graph $G = (V, E)$.
  We apply the reduction from Lemma~\ref{lem:red} with
  $y = \lceil \sqrt{|V|} \rceil$ or $y = \lceil \sqrt{|V|} \rceil + 1$
  (as needed for the parity condition) and obtain an instance
  $I = (C, \calA, k)$. This instance has $n = |V| + y$ voters and
  committee size $k = n/2$.  We will show that if we were able to
  approximate $\sw_\JR(I)$ to within a factor of
  $k^{1/2 - \varepsilon}$, then we would also be able to distinguish
  between the two cases from Lemma~\ref{lem:hardness-eds}.  Indeed, if
  the YES case holds, then we have:
  \begin{align*}
    \sw_\JR(I)
    &\textstyle = y k - (y - 2) \cdot \eds(G) \\
    &\textstyle \geq y\cdot \frac{|V| + y}{2} - (y - 2) \cdot \frac{1}{2} (|V| - |V|^{1 - \zeta}) \\
    &\textstyle \geq \frac{y |V|}{2} - y \cdot \frac{1}{2} (|V| - |V|^{1 - \zeta}) \\
    &\textstyle = \frac{1}{2} y \cdot |V|^{1 - \zeta} 
    \textstyle = \Theta(k^{3/2 - \zeta}),
  \end{align*}
  where the last equality follows from our choice of $y$.  On the
  other hand, in the NO case we have:
  \begin{align*}
    \sw_\JR(I)
    &\textstyle = y k - (y - 2) \cdot \eds(G) \\
    &\textstyle \leq y\cdot \frac{|V| + y}{2} - (y - 2) \cdot \frac{1}{2} (|V| - |V|^\zeta) \\
    &\textstyle = (|V| + y) + (y - 2) \cdot \frac{1}{2} \left(y + |V|^\zeta\right) \\
    &\textstyle \leq |V| + y + \frac{y}{2} (y + |V|^\zeta) \\
    &\textstyle = \frac{1}{2}\cdot y^2 + |V| + \frac{y}{2}\cdot |V|^\zeta +  y = \Theta(k),
  \end{align*}
  where the last equality follows because $y^2 = \Theta(|V|)$,
  $y \geq |V|^{\nicefrac{1}{4}} \geq |V|^\zeta$, and
  $k = \Theta(|V|)$.  Altogether, the gap between the two cases is
  more than $k^{1/2 - \varepsilon}$ for any sufficiently large $k$.
  If there were a polynomial-time algorithm that could find
  $W \in \JR(I)$ with
  $\sw(W) \geq \sw_\JR(I) / k^{1/2 - \varepsilon}$, then we could
  distinguish between the two cases in polynomial time.  However, by
  Lemma~\ref{lem:hardness-eds}, doing so is $\np$-hard.
\end{proof}

\subsection{The case $n=k$}
\label{sec:approx}

Our $k^{1/2-\varepsilon}$-inapproximability result 
holds for the case where
$n=2k$, and it is easy to extend it to some cases with smaller $k$.
For example, for $n = 4k$ it suffices to clone each voter. Yet, if $k$
is extremely small, e.g., if it is bounded by a constant, then, of
course, we can find a social welfare-maximizing committee in $\JR(I)$ in
polynomial time.  

Larger values of $k$ can also become
easier. In what follows, we will focus on the case $n = k$.
In this case, the JR axiom is equivalent to requiring perfect coverage:
every voter forms a group of size $\nicefrac{n}{k}$ and thus must be represented
in the committee (recall that we require that $A_i\neq\varnothing$ for all $i\in [n]$). It turns out that maximizing the social welfare
under this condition remains hard.

\begin{theorem}\label{thm:inapprox:n=k}
For some constant $\varepsilon > 0$, the following problem is $\np$-hard:
Given an instance $I = (C, \calA, k)$ with $n = k$, find $W \in \JR(I)$ with $\sw(W) \geq \sw_\JR(I) / (1 + \varepsilon)$.
\end{theorem}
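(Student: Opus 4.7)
The plan is to exploit the fact that when $n=k$, the JR condition collapses to requiring that every individual voter be covered by the committee: each voter is trivially $1$-cohesive (since we assume $A_i\neq\varnothing$) and forms a group of size $n/k=1$. I would reduce from \textsc{Minimum Vertex Cover} on $3$-regular graphs, which is well-known to be APX-hard: there exist constants $\alpha\in(\tfrac12,1)$ and $\delta>0$ such that deciding whether $\tau(G)\leq \alpha|V|$ or $\tau(G)\geq(1+\delta)\alpha|V|$ is $\np$-hard, where $\tau(G)$ denotes the size of a minimum vertex cover of $G$.

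Given a cubic graph $G=(V,E)$, I construct the instance $I=(C,\calA,k)$ as follows. Fix a constant $\lambda>3$, say $\lambda=4$. Let $C=V\cup D$, where $D$ is a set of $|V|+\lambda$ dummy candidates. Create one ``edge voter'' $\eta_e$ for each $e\in E$ approving exactly the two endpoints of $e$, plus a set $U$ of $\lambda$ ``dummy voters'', each approving exactly the set $D$. Set $k=n=|E|+\lambda=\tfrac{3}{2}|V|+\lambda$; note that $|C|=|V|+|D|=2|V|+\lambda>k$, so the committee genuinely has choice. A JR committee $W$ must cover every voter, which forces $W\cap V$ to be a vertex cover of $G$ (to cover every edge voter) and $W\cap D$ to be nonempty (trivially satisfied, since any feasible $W$ of size $k$ intersects $D$). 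The reduction is clearly polynomial-time.

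Writing $c:=|W\cap V|$ and using that edge voters approve only vertices while dummy voters approve all of $D$, one computes
\[
  \sw(W)=\sum_{v\in W\cap V}\deg_G(v)+\lambda\cdot|W\cap D|=3c+\lambda(k-c)=\lambda k-(\lambda-3)c,
\]
where the second equality uses $3$-regularity. Since $\lambda-3>0$, this is maximized precisely when $c=\tau(G)$, so $\sw_\JR(I)=\lambda k-(\lambda-3)\tau(G)$. In a YES instance $\sw_\JR(I)\geq \lambda k-(\lambda-3)\alpha|V|$, while in a NO instance $\sw_\JR(I)\leq \lambda k-(\lambda-3)(1+\delta)\alpha|V|$. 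Substituting $\lambda=4$, $\alpha=\tfrac12$, and $k=\tfrac{3}{2}|V|+4$, the YES-to-NO ratio of these bounds tends to $\tfrac{11}{11-\delta}>1$ as $|V|\to\infty$, giving a constant inapproximability factor $1+\varepsilon$ for some $\varepsilon>0$; this completes the proof.

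The main thing that must be set up correctly is the choice of $\lambda$. If $\lambda\leq 3$ then replacing a dummy in the committee by an extra vertex candidate never decreases welfare, so the optimum can completely ignore the vertex cover gap; if instead $\lambda$ is allowed to grow with $|V|$, the $\lambda k$ term dominates and the multiplicative gap collapses to $1$. Fixing $\lambda$ at a constant strictly above $3$ balances these two effects, so that the $\Theta(|V|)$ additive gap coming from APX-hardness of vertex cover is preserved as a constant multiplicative gap, since $k=\Theta(|V|)$ throughout.
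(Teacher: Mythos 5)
Your proposal is correct and mirrors the paper's own proof: the paper likewise exploits that JR with $n=k$ is exactly full coverage and reduces from an APX-hard covering problem (X3C rather than vertex cover on cubic graphs), padding with dummy candidates approved by four dummy voters so that $\sw_\JR(I) = 4k - (\text{optimal cover size})$ and the additive hardness gap becomes a constant multiplicative one. The only nit is that you substitute $\alpha=\tfrac12$ after assuming $\alpha\in(\tfrac12,1)$, but since the limiting ratio $\frac{6-\alpha}{6-(1+\delta)\alpha}>1$ for any constants $\alpha,\delta>0$, the argument stands as is.
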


We show the hardness of approximation via a reduction from {\sc Exact-3-Cover (X3C)}; this is a special case of the set cover problem where each set has size exactly $3$.
Specifically, we are given a universe $[u]$ and a collection $\calS = \{S_1, \dots, S_M\}$ of subsets of $[u]$, where each $S_i$ is of size $3$. 
The goal is to select as few subsets as possible that can cover the universe; we use $\opt_{\setcov}(u, \calS)$ to denote the optimum of a set cover instance $(u, \calS)$. 
The main property of our reduction is stated below.

\begin{lemma} \label{lem:red-k-equal-n}
There is a polynomial-time reduction that, given an {\sc X3C} instance $(u, \calS)$, produces an instance $I = (C, \calA, k)$ with $n = k$ such that $\sw_\JR(I) = 4(u + 4) - \opt_{\setcov}(u, \calS)$.
\end{lemma}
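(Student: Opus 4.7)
The plan is to build an instance where each voter's contribution to welfare is bounded by a quantity that reflects whether they are an ``element voter'' (bounded by the number of set candidates containing them) or a ``dummy voter'' (bounded by the number of generic candidates chosen), and where JR forces the set candidates to form a cover of the universe.

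\textbf{Construction.} Given an X3C instance $(u, \calS = \{S_1,\dots,S_M\})$, I introduce $u$ \emph{element voters} $v_1,\dots,v_u$ and $4$ \emph{dummy voters} $d_1,d_2,d_3,d_4$, so $n = u+4$, and set $k := u+4$. For each set $S_j \in \calS$ I create a \emph{set candidate} $c_j$ approved by exactly the voters $\{v_i : i \in S_j\}$; additionally, I create $u+4$ \emph{fillers} $f_1,\dots,f_{u+4}$, each approved by all four dummies and by no element voter. So element voter $v_i$ approves exactly the set candidates $\{c_j : i \in S_j\}$, while each dummy approves exactly the $u+4$ fillers. (We may assume every element appears in some $S_j$, otherwise $\opt_{\setcov}(u,\calS) = \infty$ and the statement is vacuous.)

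\textbf{JR reduces to full coverage.} Since $k = n$, the JR threshold is $\nicefrac{n}{k}=1$, so each individual voter who approves at least one candidate constitutes a cohesive group that must be represented. Since $A_i \neq \varnothing$ for every voter, $W \in \JR(I)$ is equivalent to $W$ covering every voter. In particular, the set candidates of $W$ must collectively cover the element voters (i.e., correspond to a set cover of $[u]$), and $W$ must contain at least one filler to cover the dummies.

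\textbf{Welfare accounting.} Write a JR committee as $W = W_{\mathrm{set}} \cup W_{\mathrm{fil}}$ with $t := |W_{\mathrm{set}}|$ set candidates and $k-t$ fillers. The element voters contribute exactly $\sum_{c_j \in W_{\mathrm{set}}} |S_j| = 3t$ to the welfare, while each dummy contributes $k-t$, for a total of
\[
   \sw(W) \;=\; 3t + 4(k-t) \;=\; 4(u+4) - t.
\]
Maximizing $\sw(W)$ over JR committees is therefore equivalent to minimizing the number of set candidates in $W$ subject to the constraint that the set candidates of $W$ form a cover of $[u]$; the minimum is exactly $\opt_{\setcov}(u,\calS)$, attained by plugging a minimum cover into $W_{\mathrm{set}}$ and filling the remaining $u+4-\opt_{\setcov}$ slots with fillers. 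This yields $\sw_\JR(I) = 4(u+4) - \opt_{\setcov}(u,\calS)$, as required.

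\textbf{Where the care is needed.} The only subtle point is that one must show the minimum over \emph{all} JR committees, not just those of a specific form, equals $\opt_{\setcov}$. This follows because any JR committee partitions into set candidates (each contributing 3 to welfare and exactly 3 to coverage of element voters) and fillers (each contributing 4 to welfare and exactly 4 to coverage of dummies); there are no other candidate types, no element voter approves a filler, and no dummy approves a set candidate, so the accounting decouples cleanly. The construction is then clearly polynomial-time.
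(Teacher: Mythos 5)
Your construction is exactly the paper's reduction (element voters plus four dummy voters, set candidates plus $u+4$ filler candidates approved only by the dummies, $k=n=u+4$), and your welfare accounting $\sw(W)=4(u+4)-t$ with JR forcing the set candidates to cover $[u]$ matches the paper's argument. The proposal is correct and takes essentially the same approach, just spelling out both directions of the optimality claim more explicitly than the paper's terse proof.
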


\begin{proof}
Let the set of voters of $I$ be $[u] \cup T$ where $T$ is a set of size $4$, and let the set of candidates $C$ be $\calS \cup Z$ where $Z$ is a set of size $u+4$.
Each voter $i \in [u]$ approves only the subsets it belongs to, whereas each voter in $T$ approves all the candidates in $Z$. 
This completes our reduction description; it clearly runs in polynomial time.

Note that each candidate in $\calS$ is approved by exactly three voters, while each candidate in $Z$ is approved by exactly four voters.
As a result, an optimal solution for $I$ is to pick an optimal set cover, and fill the remainder of the committee with candidates from $Z$.
Hence, we have
\begin{align*}
\sw_\JR(I) 
&= 3 \cdot \opt_{\setcov}(u, \calS) \\
&\qquad+ 4 (u + 4 - \opt_{\setcov}(u, \calS)) \\
&= 4(u + 4) - \opt_{\setcov}(u, \calS),
\end{align*}
as claimed.
\end{proof}

In addition to the reduction, we need the following APX-hardness of {\sc X3C} as a starting point (see, e.g., Lemma~5.4 of \citet{EFI21}).

\begin{lemma} \label{lem:x3c-hardness}
For some constant $\zeta \in (0, 1/3)$, the following problem is $\np$-hard: 
Given an {\sc X3C} instance $(u, \calS)$, distinguish between
\begin{itemize}
\item (YES) $\opt_{\setcov}(u, \calS) = u/3$;
\item (NO) $\opt_{\setcov}(u, \calS) \geq u(1/3 + \zeta)$.
\end{itemize}
\end{lemma}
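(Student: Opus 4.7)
The plan is to prove the gap-hardness by reducing from a gap version of bounded-occurrence \textsc{E3-SAT}. By combining the PCP theorem (or H{\aa}stad's tight 3-bit PCP) with the expander-replacement technique of Papadimitriou and Yannakakis, it is $\np$-hard, for some absolute constants $B \in \mathbb{N}$ and $\delta > 0$, to distinguish satisfiable \textsc{E3-SAT} instances in which each variable appears in at most $B$ clauses from those in which no assignment satisfies more than a $(1-\delta)$-fraction of the clauses. Taking this as the source problem, I will construct in polynomial time an \textsc{X3C} instance $(u, \calS)$ such that YES instances admit exact covers of size $u/3$, while NO instances require at least $u/3 + \Omega(\delta m)$ sets, where $m$ is the number of clauses. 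Because the bounded-occurrence assumption forces $u = \Theta(m)$, this additive gap converts to the desired multiplicative gap $u(1/3 + \zeta)$ for some constant $\zeta = \Omega(\delta) \in (0,1/3)$.

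Concretely, for each variable $x_i$ occurring $k_i \leq B$ times, I would introduce $2k_i$ elements arranged in a cycle, together with two families of $k_i$ disjoint 3-sets that each perfectly partition the cycle: one family corresponds to setting $x_i$ true, the other to setting $x_i$ false. The $k_i$ ``literal-ports'' on the cycle are earmarked for the clauses in which the corresponding literal occurrence appears. For each clause $C_j$, I would attach a constant-size padding gadget with three literal-selector 3-sets (one per literal, each bundling one literal-port of the appropriate variable cycle with two padding elements) plus a few garbage-collector 3-sets that can clean up the padding only after exactly one literal-selector is chosen. This is a gap-preserving adaptation of Karp's classical reduction from 3-SAT to Exact Cover by 3-Sets.

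Completeness is immediate: a satisfying assignment induces an exact 3-partition of size $u/3$ by taking the appropriate cycle-family per variable, a single satisfied-literal selector per clause, and the matching garbage-collectors. For soundness, a local exchange argument should show that any exact 3-cover must, in each variable cycle, commit fully to one of the two partitioning families, thereby inducing a Boolean assignment; every clause unsatisfied by this induced assignment leaves padding elements that the baseline $u/3$ sets cannot cover, forcing at least one extra set per unsatisfied clause. For NO instances this gives $\opt_{\setcov}(u,\calS) \geq u/3 + \delta m \geq (1/3 + \zeta)u$, as required. The main obstacle I anticipate is precisely the rigidity of the variable gadget under mixed choices: I need to design the literal-ports so that splitting a single cycle between its two families strictly loses more elements than it could possibly save at adjacent clauses, which requires careful case analysis around variables whose occurrences span several clause gadgets. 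Once the cycle gadget is rigid, the clause-gadget accounting is routine, and the constants $B$ and $\delta$ inherited from the PCP-based source hardness translate into the advertised $\zeta$.
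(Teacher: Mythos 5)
The paper does not actually prove this lemma: it is imported as a black box, citing Lemma~5.4 of \citet{EFI21}, so there is no in-paper argument to compare your attempt against. That said, your plan --- gap \textsc{E3-SAT} with bounded occurrences (PCP theorem plus Papadimitriou--Yannakakis expander replacement, which preserves perfect completeness) fed into a Karp/Garey--Johnson-style reduction to \textsc{X3C} with variable cycles, clause selectors, and garbage collectors --- is the standard route by which this kind of gap-at-perfect-completeness statement is established (it is essentially Petrank's ``gap location'' argument for 3-set cover), and your completeness direction is fine.

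As written, however, the soundness analysis has a genuine unfinished step, which you partly flag yourself. First, your rigidity claim (``any exact 3-cover must commit fully to one of the two families'') concerns \emph{exact} covers, but $\opt_{\setcov}$ is a minimum \emph{set cover}: a near-optimal cover may spend $k_i+t_i$ overlapping triples on the cycle of $x_i$, freeing ports of both polarities, so the ``induced assignment'' is not well defined and ``one extra set per unsatisfied clause'' does not follow. The correct accounting is that each excess set beyond $u/3$ can repair at most $O(B)$ clauses, which gives $\opt_{\setcov} \ge u/3 + \Omega(\delta m / B)$ rather than $u/3+\delta m$; this still yields a constant $\zeta$, but it is exactly here --- and not in ensuring $u=\Theta(m)$, which holds automatically because $\sum_i k_i = 3m$ --- that the bounded-occurrence hypothesis is indispensable, so your stated reason for needing $B$ misses where the difficulty actually sits. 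Second, the gadget arithmetic does not close: $k_i$ disjoint 3-sets cannot partition a $2k_i$-element cycle; the standard gadget uses $2k_i$ core elements plus $2k_i$ tip elements, with each family of $k_i$ triples covering the core together with one polarity of tips, and the free tips of the other polarity serving the clause and garbage gadgets. Until the mixed-cover charging argument is carried out in full (or the lemma is simply cited, as the paper does), the proof is incomplete.
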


We are now ready to prove Theorem~\ref{thm:inapprox:n=k}.
\begin{proof}[Proof of Theorem~\ref{thm:inapprox:n=k}]
Let $\zeta$ be the constant from Lemma~\ref{lem:x3c-hardness}, and let $\varepsilon = 0.1\zeta$. 
We will show that it is $\np$-hard to approximate $\sw_\JR(I)$ to within a factor of $1 + \varepsilon$ when $n = k$. 
Given any instance $(u, \calS)$ of X3C, we apply our reduction from Lemma~\ref{lem:red-k-equal-n} to produce an instance $I = (C, \calA, k)$ such that $n = k$. 
In the YES case, we have
\begin{align*}
\sw_\JR(I) 
&= 4(u + 4) - \opt_{\setcov}(u, \calS) 
\\&\geq 4u - u/3 = 11u/3,
\end{align*}
whereas in the NO case we have
\begin{align*}
\sw_\JR(I)
&= 4(u + 4) - \opt_{\setcov}(u, \calS)\\
&\leq 4(u + 4) - u(1/3 + \zeta)
= (11/3 - \zeta) u + 16.
\end{align*}
Thus, the multiplicative gap between the two cases is larger than $1 + \varepsilon$ for any sufficiently large $u$. 
If there were a polynomial-time algorithm that could find $W \in \JR(I)$ with $\sw(W) \geq \sw_\JR(I) / (1 + \varepsilon)$, then we would be able to distinguish between the two cases in polynomial time.
However, by Lemma~\ref{lem:x3c-hardness}, doing so is $\np$-hard.
\end{proof}

In spite of the hardness, the case $n=k$ admits a $2$-approximation algorithm, which means that it is easier than the case $n=2k$ (Theorem~\ref{thm:welfare-hardness}).

\begin{theorem}\label{thm:n=k}
There is a polynomial-time algorithm that, given an instance $I = (C, \calA, k)$ where $n = k$, finds a committee $W \in \JR(I)$ such that $\sw(W) \geq \sw_\JR(I) / 2$.
\end{theorem}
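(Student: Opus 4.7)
The plan is to exhibit and analyze a simple greedy algorithm. For each voter $i$, define $r_i := \arg\max_{c \in A_i}|A_c|$, the most-approved candidate that $i$ approves, with ties broken by candidate index. Form $B := \{r_i : i \in N\}$, a set of size at most $n = k$; then let $T$ consist of the $k - |B|$ candidates in $C \setminus B$ with the largest approval counts, and output $W := B \cup T$. The algorithm is polynomial-time and $|W| = k$; since $n = k$ makes JR equivalent to covering every voter, and $B \subseteq W$ with $r_i \in A_i$, we obtain $W \in \JR(I)$.

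For the approximation bound, I would fix a welfare-maximizing JR committee $W^*$ and prove $\sw(W^*) \le 2\,\sw(W)$. The structural observation is that every $c \in C \setminus W$ satisfies $|A_c| \le \min_{c' \in T}|A_{c'}|$: indeed, $c$ lies outside both $B$ and $T$, and $T$ collects the top $k - |B|$ candidates in $C \setminus B$. Writing $\sw(W^*) = \sw(W^* \cap W) + \sum_{c \in W^*\setminus W}|A_c| \le \sw(W) + \sum_{c \in W^*\setminus W}|A_c|$, it suffices to bound the trailing sum by $\sw(W)$. When $|W^*\setminus W| \le |T|$ this is immediate from the per-candidate bound: $\sum_{c \in W^*\setminus W}|A_c| \le |W^*\setminus W|\cdot \min_{c' \in T}|A_{c'}| \le \sum_{c' \in T}|A_{c'}| \le \sw(W)$, giving the factor~$2$.

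The main obstacle is the complementary regime $|W^* \cap W| < |B|$, where $W^*$ uses many covering candidates different from our $r_i$'s and the naive per-candidate bound fails. Here I would use a charging argument: for each $c \in B \setminus W^*$ there is a voter $i_c$ with $r_{i_c} = c$ (distinct across $c$), and $W^*$ must contain an alternative $\sigma(i_c) \in W^* \cap A_{i_c}$ with $|A_c| \ge |A_{\sigma(i_c)}|$ by the maximality property defining $r_{i_c}$. Summing these inequalities, while handling multiplicities (several $i_c$ could share the same $\sigma(i_c)$) via a careful double-counting, should allow the \emph{excess} welfare of $W^*\setminus W$ to be charged either against $\sw(T) \le \sw(W)$ or against the welfare $\sw(B) \le \sw(W)$ already contributed by $B$, closing the factor-$2$ gap. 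Orchestrating these two charging streams so that no welfare is double-counted is the delicate technical step.
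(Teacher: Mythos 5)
Your construction of $W=B\cup T$ is not a $2$-approximation, so the ``delicate technical step'' you defer is not a fixable bookkeeping issue---no charging argument can close it. Concretely, let $n=k$, let $Q$ be a set of $q=\Theta(\sqrt{k})$ voters and let $P$ be the remaining $k-q$ voters, partitioned into pairs. The candidates are: one \emph{pair candidate} per pair in $P$, approved by exactly that pair (score $2$); one \emph{bridge candidate} $b_i$ per voter $i\in P$, approved by $i$ and by two fixed voters of $Q$ (score $3$); and a pool of at least $(k+q)/2$ candidates $y_1,y_2,\dots$, each approved by exactly the voters in $Q$ (score $q$). Every $i\in P$ has $r_i=b_i$ (since $3>2$), and every voter in $Q$ has $r_i=y_1$, so $B=\{b_i\}_{i\in P}\cup\{y_1\}$ has size $k-q+1$ and $T$ consists of $q-1$ pool candidates. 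Hence $\sw(W)=3(k-q)+q^2=\Theta(k)$. On the other hand, the committee consisting of the $(k-q)/2$ pair candidates together with $(k+q)/2$ pool candidates has size $k$, covers every voter (which is exactly JR when $n=k$), and has welfare $(k-q)+\tfrac{(k+q)q}{2}=\Theta(k^{3/2})$. So $\sw_\JR(I)/\sw(W)=\Theta(\sqrt{k})$. The root cause is the primitive you start from: fixing one representative per voter independently, namely the highest-scoring candidate that voter approves, can spend almost all $k$ slots covering voters one at a time when they could be covered in pairs, and the freed slots are worth far more than the per-candidate score comparison $|A_{\sigma(i)}|\le|A_{r_i}|$ can recoup. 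Your own analysis already signals this: the easy regime is $|W^*\setminus W|\le|T|$, and in the example above $T$ is tiny precisely because $B$ is bloated.

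The paper's algorithm is built the other way around: it is a coverage-greedy that repeatedly adds the candidate covering the most still-uncovered voters, breaking ties toward larger approval score. The analysis splits the run into a phase $W^0$ where each pick covers at least two new voters, a phase $W^1$ where each pick covers exactly one, and a filling phase $W^2$. Voters first reached in phase $1$ pairwise share no approved candidate, so the optimum must spend distinct candidates on them, each of score at most that of the greedy's pick (this is where the score tie-breaking is used); and since each phase-$0$ pick covers at least two voters, $|W^0|\le|W^2|$, so the remaining at most $2|W^2|$ optimal candidates are controlled, via an averaging argument over the top-scored candidates in $C\setminus W^1$, by $2\,\sw(W^0\cup W^2)$. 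That coupling---between how cheaply voters can be covered and how many slots remain for high-score fillers---is exactly what your per-voter choice of $r_i$ discards, and it is what the counterexample exploits.
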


\begin{proof}
For every candidate $j \in C$, let $V_j \subseteq N$ denote the set of voters that approve $j$, i.e., $V_j = \{i \in N : j \in A_i\}$. 
Furthermore, we assume that there is an ordering $\sigma$ of the candidates which we can use to break ties.

Our algorithm works as follows:
\begin{enumerate}
\item Let $W = \varnothing$ and $V^{\uncov} = N$.
\item While $|W| < k$:
\begin{enumerate}
\item Let $j$ be the candidate that maximizes $|V_j \cap V^{\uncov}|$, where ties are broken by maximizing $|V_j|$ and then by~$\sigma$. \label{step:greedy-select}
\item $W \leftarrow W \cup \{j\}$ and $V^{\uncov} \leftarrow V^{\uncov} \setminus V_j$.
\end{enumerate}
\item Output $W$.
\end{enumerate}
It is clear that the output $W$ belongs to $\JR(I)$ and that the algorithm runs in polynomial time. 
The rest of the proof is devoted to showing the approximation guarantee.

To analyze the approximation ratio, let us divide the algorithm into three phases: (0) when the candidate $j$ found in Step~\ref{step:greedy-select} satisfies $|V_j \cap V^{\uncov}| \geq 2$; (1) when $j$ found in Step~\ref{step:greedy-select} satisfies $|V_j \cap V^{\uncov}| = 1$; and (2) when $j$ found in Step~\ref{step:greedy-select} satisfies $V_j \cap V^{\uncov} = \varnothing$. 
Let us write $W = W^0 \cup W^1 \cup W^2$, where $W^\ell$ denotes the set of candidates added in phase~$\ell$. 
Furthermore, let $V^1$ denote the set $V^{\uncov}$ at the beginning of phase~1 and, for every $i \in V^1$, let $c^i$ denote the candidate selected in Step~\ref{step:greedy-select} to cover $i$.

Next, consider an optimal solution $W^{\opt}$. 
For every voter $i \in V^1$, let $c^{i, \opt}$ denote a candidate in $W^{\opt}$ that $i$ approves (when there are multiple such candidates, we select the one with maximum $|V_c|$ and then break ties based on $\sigma$); 
at least one such candidate is guaranteed to exist since $W^{\opt} \in \JR(I)$. 
Notice that the candidates $c^{i, \opt}$ are distinct for all $i \in V^1$ since, by definition of $V^1$, these voters do not share any approved candidate. 
Observe also that 
\begin{align} \label{eq:opt-not-one}
W^{\opt} \setminus \{c^{i, \opt}\}_{i \in V^1} \subseteq (C \setminus W^1)
\end{align}
because, if $c^{i}$ is present in $W^{\opt}$ for some $i \in V^1$, we must have $c^{i, \opt} = c^i$ due to our tie-breaking rules. 

From the description of the algorithm (specifically, the tie-breaker according to $|V_j|$ in phase~2), when the candidates in $C \setminus W_1$ are sorted in nonincreasing order of $|V_j|$, the set $W^0 \cup W^2$ must contain $|W^2|$ top candidates. 
Among the elements in the multiset $\{|V_j|\}_{j\in C\setminus W^1}$, the average of the top $|W^2|$ elements is
no greater than
\begin{align} \label{eq:top-elements}
 \frac{\sw(W^0\cup W^2)}{|W^2|}.
\end{align}
On the other hand, from~\eqref{eq:opt-not-one}, the set $W^{\opt} \setminus \{c^{i, \opt}\}_{i \in V^1}$ is a subset of $C\setminus W^1$, and the average of $|V_j|$ for the candidates $j$ in this set is
\begin{align} \label{eq:average}
\frac{\sw(W^{\opt} \setminus \{c^{i, \opt}\}_{i \in V^1})}{|W^{\opt} \setminus \{c^{i, \opt}\}_{i \in V^1}|}.
\end{align}
Combining \eqref{eq:top-elements}, \eqref{eq:average}, and the fact that $|W^2| \le |W|-|W^1| = |W^{\opt} \setminus \{c^{i, \opt}\}_{i \in V^1}|$, we get
\begin{align} \label{eq:cross-multiply}
&|W^2| \cdot \sw(W^{\opt} \setminus \{c^{i, \opt}\}_{i \in V^1}) \nonumber \\
&\leq |W^{\opt} \setminus \{c^{i, \opt}\}_{i \in V^1}| \cdot \sw(W^0 \cup W^2).
\end{align}

Now, since each candidate in $W^0$ reduces the size of $V^{\uncov}$ by at least $2$, we have that $|W_2| \geq |W_0|$. 
Furthermore, observe that $|W^{\opt} \setminus \{c^{i, \opt}\}_{i \in V^1}| = n - |W^1| = |W_0| + |W_2| \leq 2|W_2|$. 
Plugging this into \eqref{eq:cross-multiply}, we have
\begin{align*}
\sw(W^{\opt} \setminus \{c^{i, \opt}\}_{i \in V^1}) \leq 2 \cdot \sw(W^0 \cup W^2).
\end{align*}
Recall that for every $i \in V^1$, $c^i$ is a candidate that maximizes $|V_c|$ among those in $A_i$. 
This implies that $\sw(c^{i, \opt}) \leq \sw(c^i)$. 
Summing this up over all $i \in V^1$, we get
\begin{align*}
\sw(\{c^{i, \opt}\}_{i \in V^1}) \leq \sw(W^1).
\end{align*}
Adding the above two inequalities, we arrive at
\begin{align*}
\sw(W^{\opt}) \leq 2 \cdot \sw(W^0 \cup W^2) + \sw(W^1) \leq 2 \cdot \sw(W).
\end{align*}
It follows that $\sw(W) \geq \sw_\JR(I)/2$, as desired.
\end{proof}

\subsection{GreedyCC Versus Optimal JR Social Welfare}\label{subsec:greedy-bad}
Many of our algorithms are based on GreedyCC. 
We will now show that, using this approach, we cannot break the $\Theta(\sqrt{k})$
barrier: GreedyCC may output a committee whose social welfare is a factor of
$\Theta(\sqrt{k})$ smaller than that of an optimal JR committee.  

\begin{proposition}\label{prop:greedy-opt-gap}
There exists an instance $I=(C, \calA, k)$ that admits a JR committee with social welfare $\Theta(k\sqrt{k})$, but on which GreedyCC may select a committee with social welfare $\Theta(k)$,
irrespective of which selection rule is used in the second phase.
\end{proposition}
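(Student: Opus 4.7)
The plan is to construct an explicit instance on which a particular adversarial tie-breaking forces GreedyCC's first stage to commit to a set of low-approval candidates that exhausts every candidate meeting the JR marginal-coverage threshold, so that regardless of the second-stage rule the final social welfare stays at $\Theta(k)$, while a completely different size-$k$ committee satisfies JR and achieves social welfare $\Theta(k\sqrt{k})$. I will take $k = t^2$ with $t = \sqrt{k}$, and $n = k$ voters $v_1,\dots,v_k$, so that the JR threshold $n/k$ equals $1$ and JR just says every voter must be covered.

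The candidate set will consist of a \emph{seed} candidate $s$ approved exactly by the ``hub'' set $\{v_1,\dots,v_t\}$; for each $j \in \{t+1,\dots,k\}$, a \emph{heavy} candidate $c_j$ approved by $\{v_2,\dots,v_t,v_j\}$; and for every voter $v_i$ a singleton $p_i$ approved only by $v_i$. Hence $s$ and each $c_j$ have approval $t$, while each $p_i$ has approval $1$. To witness the $\Theta(k\sqrt{k})$ lower bound, I will exhibit the committee $W^\star = \{s\} \cup \{c_{t+1},\dots,c_k\} \cup \{p_1,\dots,p_{t-1}\}$, check that it has size $1 + (k-t) + (t-1) = k$, observe that every voter is approved by a member (so JR holds), and compute its welfare as $t + (k-t)t + (t-1) = kt - t^2 + 2t - 1 = \Theta(k\sqrt{k})$, with the bulk coming from each of the $t-1$ hub voters $v_2,\dots,v_t$ approving all $k-t+1$ heavy members of $W^\star$.

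The core step is a specific phase-1 trace. Initially $s$ and every $c_j$ tie at marginal $t$, and I let GreedyCC break the tie by picking $s$, which covers $\{v_1,\dots,v_t\}$. Afterwards each $c_j$ has marginal only $1$ (its unique fresh voter is $v_j$), and the singletons $p_{t+1},\dots,p_k$ also have marginal $1$; I break this tie in favor of the singletons, so GreedyCC successively adds $p_{t+1},p_{t+2},\dots,p_k$ until every voter is covered and phase 1 terminates. The first-stage set is $W' = \{s\} \cup \{p_{t+1},\dots,p_k\}$ of size $k - t + 1$, with welfare exactly $t + (k - t) = k$. The finishing step handles phase 2 uniformly in the second-stage rule: phase 2 adds only $t - 1$ further candidates, every candidate outside $W'$ has approval at most $t$, and so phase 2 contributes at most $(t-1)t = k-t$ to the welfare, giving total welfare at most $2k - t = \Theta(k)$. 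Comparing to $W^\star$ then yields the required ratio of $\Theta(\sqrt{k})$. The main thing to verify carefully is that the tie-breakings I use are legitimate steps of GreedyCC (which is immediate from the definition, as the algorithm is free to pick any maximizer of marginal coverage) and that $W^\star$ really satisfies JR, which here reduces to the routine check that every voter is approved by at least one of its members.
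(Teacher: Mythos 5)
Your construction is correct and proves the proposition, but via a genuinely different instance than the paper's. You work in the regime $n=k$, where the JR threshold $\nicefrac{n}{k}=1$ makes JR equivalent to covering every voter: a seed candidate and the heavy candidates share a hub of $\sqrt{k}$ voters, the adversarial run picks the seed first and then (exploiting the tie at marginal coverage $1$) fills phase~1 with singletons, and phase~2 is killed uniformly by the observation that every candidate has approval at most $\sqrt{k}$, so $t-1$ further candidates add at most $k-\sqrt{k}$ welfare; meanwhile the committee of seed, heavy candidates and a few singletons provides JR with welfare $\Theta(k\sqrt{k})$. All steps check out: the tie-breaking you use is legitimate since the proposition's ``may'' (like the paper's own proof, which invokes ``appropriate tie-breaking'') permits adversarial choices among marginal-coverage maximizers, and the size and welfare computations are right. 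The paper instead works with $n=2k$ (threshold $2$) and a graph gadget --- a clique $L$, an independent set $R$, a perfect matching between them, and $\nicefrac{n}{4}$ extra candidates approved by the same $\sqrt{n}$ voters of $R$ --- where the trap consists of matching edges of marginal coverage $2$; the closing argument is the same as yours (each candidate contributes at most $\sqrt{n}$, so the leftover $\sqrt{n}-1$ phase-2 slots cannot help). Your version is somewhat more elementary to verify because JR reduces to full coverage; the paper's version has the side benefit of exhibiting the gap in the $n=2k$ regime that matches its inapproximability result (Theorem~\ref{thm:welfare-hardness}). One contextual remark: your bad trace breaks the marginal-coverage tie in favor of singletons over approval-$\sqrt{k}$ candidates, which is exactly the tie-breaking that the refined greedy of Theorem~\ref{thm:n=k} forbids --- consistent with that theorem, and harmless here, but worth being aware that your example does not apply to GreedyCC variants with approval-score tie-breaking in the first stage, whereas it fully covers the proposition as stated.
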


\begin{proof}
  Let $t$ be some positive integer. We will construct an instance with $n = 16t^2$ voters.
  We start by forming a graph $G = (V,E)$ whose vertices are partitioned into two groups,
  $L$ and $R$, where $|L| = |R| = \nicefrac{n}{2}$. The vertices in
  $L$ form a clique, whereas the vertices in $R$ are not connected to
  each other. There are also $\nicefrac{n}{2}$ edges that form a
  perfect matching between $L$ and $R$.

  In our instance $I$, each vertex in $V$ is a voter, each
  edge $e=\{u, v\}$ in $E$ is a candidate (approved by $u$ and $v$), 
  and there is a set $S$ of $\nicefrac{n}{4}$
  additional candidates, all of whom are approved by the same (arbitrary)
  $\sqrt{n}$ voters from $R$. We set the committee size to be
  $k = \nicefrac{n}{2}$, so that a JR committee needs to represent
  all cohesive groups of size $2$.

  First, we note that there is a JR committee with social welfare
  $\Theta(n \sqrt{n})$. Indeed, it suffices to take the $\nicefrac{n}{4}$
  candidates in $S$ (who already ensure social welfare of
  $\Theta(n \sqrt{n})$), and $\nicefrac{n}{4}$ edge candidates that
  form a perfect matching among the vertices in $L$. To see why this
  committee provides JR, note that we have two types of cohesive groups of size at least $2$:
  (i) members of $R$ that approve the candidates in $S$ and (ii) pairs of voters 
  that correspond to an edge; for every group of the latter type, 
  at least one of its members is in $L$.

  Second, we note that with appropriate tie-breaking, GreedyCC selects
  a committee with social welfare of $\Theta(n)$. In the first step,
  GreedyCC must select one of the candidates in $S$ (which removes
  $\sqrt{n}$ vertices in $R$ from consideration). In the next
  $\nicefrac{n}{2} - \sqrt{n}$ steps, GreedyCC may select all the
  edges between the vertices in $L$ and the remaining vertices in $R$. Each such step
  removes one vertex from $R$ and one from $L$, so we are left with no
  vertices in $R$ and $\sqrt{n}$ vertices in $L$. At this point, the social welfare
  provided by the selected candidates is $\Theta(n)$.
  Irrespective of how GreedyCC selects the remaining $\sqrt{n}-1$ candidates, 
  the social welfare will not exceed $\Theta(n)$, because each candidate can contribute at most $\sqrt{n}$ to the social welfare.
\end{proof}

\subsection{Structured Preferences}
\label{sec:structured}
Another approach to circumvent intractability is to focus on instances where voters have structured preferences. In this section, we consider one such domain, namely, the 1D-VCR domain, recently introduced by \citet{GBSF21}. The name of the domain, 1D-VCR, stands for
1-dimensional voter/candidate range model.

\begin{definition}
  \label{def:1D-VCR}
  Given an instance $I=(C, \calA, k)$, with voter set $N$, we say that the voters have
  \emph{1D-VCR preferences} if there exists a collection
  $(x_a,r_a)_{a \in C \cup N}$ of points $x_a \in \R$ and nonnegative
  real values $r_a \in \R$ such that for each voter $i \in N$ and
  candidate $c \in C$, it holds that $i$ approves $c$ if and only if
  $|x_c -x_i| \le r_c +r_i$.
\end{definition}
Given an agent $a \in C \cup N$, we refer to $x_a$ and $r_a$ as 
the {\em position} and {\em radius} of $a$, respectively. We
assume that the positions and radii of the agents are specified in the input instance. This assumption carries no computational cost, since we can decide in polynomial time whether a given instance $I$ is a 1D-VCR instance and compute the respective positions and radii; 
this follows from the work of \citet{muller97} and \citet{rafiey}.

The 1D-VCR domain is a generalization of both the candidate interval (CI) and voter interval (VI) domains of \citet{EL15} (which
translate the classic definitions of
single-peaked~\citep{bla:b:polsci:committees-elections} and
single-crossing elections~\citep{mir:j:single-crossing,rob:j:tax} to
the approval setting):
the CI domain (resp., the VI domain) corresponds to all candidates 
(resp., voters) having zero radii.

Given a 1D-VCR instance $I=(C, \calA, k)$ and an agent $a \in C \cup N$, we
denote by $s(a)=x_a-r_a$ and $t(a)=x_a+r_a$ the leftmost and the
rightmost point of $a$'s range of influence, respectively.  We call
$[s(a),t(a)]$ the \emph{interval} of $a$.  Note that
voter $i$ approves candidate $c$ if and only if 
$i$'s interval $[s(i), t(i)]$ and
$c$'s interval $[s(c), t(c)]$
have a non-empty intersection, i.e.,
\begin{align}\label{eq2:1D}
s(c)  \le t(i) ~\mbox{and}~ s(i) \le t(c).
\end{align}

The main result of this section is that the problem of maximizing the
social welfare under the JR constraint is tractable for 1D-VCR
instances.
\begin{theorem}\label{thm:1D}
  There is a polynomial-time algorithm that, given a 1D-VCR instance
  $I=(C, \calA, k)$, computes a committee that maximizes the social
  welfare under JR.
\end{theorem}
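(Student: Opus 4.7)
The plan is to design a polynomial-time dynamic programming algorithm that exploits the interval structure of 1D-VCR preferences. The first step is a local reformulation of JR: the committee $W$ provides JR if and only if for every candidate $c\in C$, fewer than $n/k$ voters in $N_c:=\{i\in N:c\in A_i\}$ are left uncovered by $W$. This equivalence holds because every cohesive group of size at least $n/k$ is contained in some $N_c$, and conversely, if some $N_c$ has at least $n/k$ uncovered voters then those voters form a JR-violating cohesive group. This converts a constraint quantifying over exponentially many cohesive groups into one constraint per candidate.

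Next, I would sort both candidates and voters by their left endpoints $s(\cdot)$ and design a sweep-based DP that processes agents from left to right. The key property supplied by 1D-VCR is that, by \eqref{eq2:1D}, approval is exactly interval intersection, so once the sweep passes $t(i)$ the set of candidates that can still cover $i$ is restricted to those whose interval already started, and similarly once the sweep passes $t(c)$ the set $N_c$ is fully determined. Hence the JR constraint attached to each candidate $c$ is ``resolved'' at a specific event along the sweep.

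The DP state would record the current sweep position, the number of committee members chosen so far, a summary of the rightmost extent of coverage provided by already-selected committee members (which, thanks to the 1D structure, determines exactly which currently active voters are already covered), and, for each candidate whose interval is still ``open'', a count of how many voters in its $N_c$ remain uncovered. The transitions are straightforward: at each event we decide whether to include the current candidate in $W$, update the coverage frontier, and adjust the pending counts; at the event where $c$'s interval closes we check that its pending count is below $n/k$ and discard the constraint. The final cell of the table gives $\sw_\JR(I)$, and backtracking returns the optimal committee.

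The main obstacle is proving that the state space is polynomial in $n,m,k$. This requires a structural argument showing that at any sweep position, both the set of active voters and the set of active candidates admit a linearly ordered (interval-graph-like) representation, so the tuple of pending JR counts collapses to a compact description of polynomially bounded size, rather than an exponential enumeration of subsets of uncovered voters. Establishing this compactness, using properties of 1D-VCR such as those identified by \citet{GBSF21}, is the crux of the proof; once it is in hand, the DP has polynomially many cells with polynomial-time transitions, yielding the claimed polynomial-time algorithm.
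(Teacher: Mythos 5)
There is a genuine gap, and it sits exactly where you place it yourself: the claim that the DP state space is polynomial is asserted but never established, and it is the entire difficulty of the problem. Your state includes, for every candidate whose interval is still open, a count of how many of its approvers are already irrevocably uncovered; there can be $\Theta(m)$ simultaneously open candidates and each count ranges over $\{0,\dots,n\}$, so as written the table has exponentially many cells. Saying that ``a structural argument shows the tuple of pending counts collapses to a compact description'' is not a proof --- it is a restatement of the theorem's difficulty, since the whole point is that 1D-VCR (unlike CI/VI) allows nested and overlapping candidate intervals, and it is not at all obvious which summary of the uncovered closed voters suffices. (Your preliminary reformulation of JR --- $W$ provides JR iff every candidate $c$ has fewer than $\nicefrac{n}{k}$ uncovered approvers --- is correct, and the single ``coverage frontier'' does determine coverage of still-open voters; but the per-candidate counts are the crux, and they are left unresolved.) There is also a smaller imprecision: the JR constraint attached to $c$ is \emph{not} resolved when $c$'s interval closes, because voters approving $c$ may still have open intervals and can be covered by candidates selected later; the constraint only resolves once the last voter in $N_c$ closes, so even the event structure of the sweep needs repair.

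For contrast, the paper obtains the needed collapse by structural preprocessing rather than by tracking counts. First, it restricts the JR-enforcing part of the committee to \emph{representatives} --- candidates whose intervals are maximal under inclusion --- showing (Lemmas~\ref{lem:JR-rep} and~\ref{lem:restricted}) that an optimal JR committee can be taken to consist of a JR committee of representatives plus arbitrary top-approval candidates filling the remaining seats. On the representative instance the intervals are non-nested, hence simultaneously ordered by left and right endpoints, and any voter approving $c_j$ and $c_h$ approves every candidate in between (Lemma~\ref{lem:intermediate}). This makes a dynamic program whose state is only (index of the last chosen candidate, number chosen so far) sufficient: JR between consecutive chosen candidates is certified by a local ``permissible'' test, so no per-candidate pending counts are needed. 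If you want to salvage your sweep-based approach, you would have to prove a collapse lemma of comparable strength (e.g., that after restricting to maximal intervals the counts are determined by a single threshold in the ordered structure); without such a lemma the proposal does not yield a polynomial-time algorithm.
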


We first prove Theorem~\ref{thm:1D} for the case where 
the intervals of the candidates have a non-nested
structure (Lemma~\ref{lem:representative}). 
Below, we denote by $\JR_{\le k}(I)$ the set of committees
of size at most $k$ that satisfy JR (with parameter $k$) for an instance
$I=(C, \calA, k)$.  Note that for this lemma, we do not make the usual
assumption that $|C| \ge k$.

\begin{lemma}\label{lem:representative}
  Let $I=(C, \calA, k)$ be a 1D-VCR instance such that for every pair
  of candidates $c,c' \in C$ with $t(c)\le t(c')$ it holds that $s(c) \le s(c')$.
  Then, there is a polynomial-time algorithm that,
  given $k^* \in [\min \{k,|C|\}]$, computes a committee
  $W \subseteq C$ that maximizes the social welfare under the
  constraints that $W \in \JR_{\leq k}(I)$ and $|W|=k^*$
  (or reports that no such committee exists).
\end{lemma}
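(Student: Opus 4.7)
My plan is a polynomial-time dynamic program over the candidates sorted by position. By the non-nestedness hypothesis we may order the candidates as $c_1,\dots,c_m$ so that both $s(c_j)$ and $t(c_j)$ are non-decreasing in $j$. Using the intersection criterion~\eqref{eq2:1D}, for each voter $i$ the set of approved candidates is the intersection of the prefix $\{j : s(c_j) \le t(i)\}$ with the suffix $\{j : s(i) \le t(c_j)\}$, and hence is a contiguous range $[l_i, r_i]$ of candidate indices.

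Next I would invoke the standard equivalent formulation of JR: a committee $W$ satisfies JR iff for every candidate $c \in C$, fewer than $n/k$ voters approving $c$ are left uncovered by $W$ (i.e., have $W \cap A_i = \varnothing$). In the interval setting, fix an ordered committee $W = \{c_{j_1} < \dots < c_{j_{k^*}}\}$ and add sentinels $j_0 := 0$ and $j_{k^*+1} := m+1$; then a voter $i$ is uncovered by $W$ precisely when $[l_i,r_i]$ lies strictly inside some gap $(j_\ell, j_{\ell+1})$, i.e., $j_\ell < l_i$ and $r_i < j_{\ell+1}$. Thus JR becomes a purely local condition on consecutive (sentinel-extended) committee positions $a < b$, namely $g(a,b) < n/k$, where
\[
 g(a,b) \;:=\; \max_{a < j < b}\,\bigl|\{\,i \in N : a < l_i \le j \le r_i < b\,\}\bigr|.
\]

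The DP is then immediate. Precompute $g(a,b)$ for all pairs $0 \le a < b \le m+1$ in polynomial time. Let $f(i,t)$ denote the maximum social welfare of a size-$t$ committee $W \subseteq \{c_1,\dots,c_i\}$ with $c_i \in W$ such that all gaps $(j_0,j_1),\dots,(j_{t-1},i)$ have $g$-value $< n/k$; set $f(i,t) := -\infty$ if no feasible such $W$ exists. The base case is $f(i,1) = |N_{c_i}|$ whenever $g(0,i) < n/k$, and the recurrence is $f(i,t) = |N_{c_i}| + \max\{f(i',t-1) : 1 \le i' < i,\ g(i',i) < n/k\}$. The final answer is $\max\{f(i,k^*) : g(i,m+1) < n/k\}$, with infeasibility reported when this maximum is $-\infty$; an optimal committee is recovered by standard backtracking.

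The main conceptual obstacle is the translation of JR into the gap-admissibility condition: it relies on the characterization of JR via per-candidate uncovered-voter counts, combined with the observation that in our sorted instance the voters left uncovered by a committee are exactly those whose approval interval is nested strictly inside a single gap (which in turn uses contiguity of $[l_i,r_i]$). Once this structural reduction is in place, correctness of the DP is immediate from the additivity $\sw(W) = \sum_{c_j \in W} |N_{c_j}|$ and the gap-local nature of the constraint, with boundary effects handled uniformly via the sentinels $j_0$ and $j_{k^*+1}$.
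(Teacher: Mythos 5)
Your proposal is correct and follows essentially the same route as the paper's proof: sort the candidates using non-nestedness so that each voter's approval set is a contiguous block of candidate indices (the content of the paper's Lemma~\ref{lem:intermediate}), then run a left-to-right dynamic program over the index of the last committee member, where feasibility between consecutive members is a purely local condition---your gap condition $g(a,b)<\nicefrac{n}{k}$ plays exactly the role of the paper's ``permissible'' pairs, and your sentinel $m+1$ replaces the paper's separate handling of the case $s=k^*$. The only cosmetic difference is that you encode JR via the standard uncovered-approver characterization with a uniformly precomputed table, which lets you dispense with the paper's special cases; the underlying decomposition and correctness argument are the same.
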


\begin{proof}
Without loss of generality, we assume that the candidates $C=\{c_1,c_2,\ldots,c_{m}\}$ are sorted in such a way that $t(c_j) < t(c_h)$ implies $j<h$. 
By the condition of the lemma, we have $s(c_j) \le s(c_h)$ and $t(c_j) \le t(c_h)$ for all $j<h$, so Lemma~A.2 of \citet{EFI21} implies the following:

\begin{lemma}
\label{lem:intermediate}
Suppose that $j<\ell<h$. Then any voter who approves both $c_j$ and $c_h$ also approves $c_\ell$.
\end{lemma}

For each $j \in [m]$, we write $C_j=\{c_1,c_2,\ldots,c_j\}$; for each subset $W \subseteq C_{j}$, we say that $W$ \emph{satisfies JR up to $j$} if $W$ represents every group of voters $N'$ such that $|N'|\ge \nicefrac{n}{k}$ and there exists a candidate in $C_{j}$ that is approved by all voters in $N'$. 

\vspace{0.3cm}
\noindent
\emph{Description of the algorithm:} 
Intuitively, our algorithm proceeds from left to right. It keeps track of the last agent $c_j$ to be included in a desired committee of size $s$. If $c_j$ alone can 
already provide JR up to $j$, then the algorithm simply computes the maximum social welfare attained by choosing a subset of $C_j$ that includes $c_j$. Otherwise, the algorithm looks for a candidate $c_\ell$ to the left of $c_j$'s such that there is a committee of size $s-1$ in $C_j$ that together with $c_j$ provides JR up to $j$ and achieves optimal social welfare.

Formally, for each $j \in [m]$ and $s \in [\min \{k^*,j\}]$, we define $T[j, s]$ as follows: 
\begin{itemize}
    \item if $s < k^*$ then
    $T[j, s]$ is the maximum social welfare achievable by choosing a subset of $C_j$ of size $s$ that contains $c_j$ and provides JR up to $j$; it is $-\infty$ if no such subset exists; 
    \item if $s=k^*$ then $T[j, s]$ is the maximum social welfare achievable by choosing a subset of $C_j$ of size $s$ that contains $c_j$ and provides JR; it is $-\infty$ if no such subset exists.
\end{itemize}
Consider a committee $W \subseteq C$ of size $k^*$; it follows that $W$ maximizes the social welfare among committees of size $k^*$ that provide JR if and only if $\sw(W)=\max_{j \in [m]} T[j,k^*]$. Given the values $T[1, k^*], \dots, T[m, k^*]$, such a committee can be found by standard dynamic programming techniques (i.e., backtracking).

We will describe how to compute the quantities $T[j, s]$. 
For $j \in [m]$ and $s=1$, we set $T[j, s]= \sw(\{c_j\})$ if $s<k^*$ and $\{c_j\}$ provides JR up to $j$, or if $s=k^*$ and $\{c_j\}$ provides JR; otherwise we set $T[j, s]= - \infty$.

Next, suppose that $j \geq 2$ and $2 \le s < k^*$. We say that $\ell < j$ is \emph{permissible} if there is no index $h$ with the following properties: (i) $\ell <h <j$, and (ii) there is a group of at least $\nicefrac{n}{k}$ voters such that every voter in this group approves $c_{h}$ but none of them approves $c_{\ell}$ or $c_j$. We will compute the value of $T[j, s]$ as follows: 
\begin{itemize}
    \item If $\{c_j\}$ provides JR up to $j$, we set $T[j, s]$ to be the maximum social welfare achievable by choosing a subset of $C_j$ of size $s$ that includes $c_j$.
    \item Otherwise, we set $T[j, s] = 
    \max \{ T[\ell, s-1] + \sw(\{c_j\}) \mid ~\ell < j~\mbox{is permissible} \}$. 
\end{itemize}
Finally, if $j \geq 2$ and $s = k^*$, we set the value of $T[j, s]$ as follows:
\begin{itemize}
    \item If $\{c_j\}$ provides JR, we set $T[j, s]$ to be the maximum social welfare achievable by choosing a subset of $C_j$ of size $s$ that includes $c_j$. 
    \item If $\{c_j\}$ does not provide JR, but there is no group of voters $N'$, $|N'|\ge \nicefrac{n}{k}$, such that no voter in $N'$ approves $c_j$, while there exists a candidate $c_h$ with $h>j$ approved by every voter in $N'$, 
    we set $T[j, s] = 
    \max \{ T[\ell, s-1] + \sw(\{c_j\}) \mid ~\ell <j~\mbox{is permissible} \}$. 
    \item Otherwise we set $T[j, s]=-\infty$. 
\end{itemize}

For each entry $T[j,s]$, we also keep track of the committee that yields social welfare equal to the value of that entry.

\vspace{0.3cm}
\noindent
\emph{Correctness:} We will prove by induction that the algorithm correctly computes the desired value.

The claim is trivial for $j=1$.
Now, consider some $j\ge 2$, and assume that the claim holds up to $j-1$.
The claim clearly holds when $s < k^*$ and $\{c_j\}$ satisfies JR up to $j$, as well as when $s = k^*$ and $\{c_j\}$ satisfies JR. Thus, assume otherwise. 

\smallskip

\noindent \textbf{Case 1:} $s<k^*$. Let $W \subseteq C_j$ be a committee of size $s$ that achieves the highest social welfare under the constraints that $c_j \in W$ and $W$ satisfies JR up to $j$. 

We will first show that $T[j,s] \ge \sw(W)$. Let $c_{\ell}$ be the candidate in $W$ with maximum index $\ell < j$. To see that $\ell$ is permissible, suppose towards a contradiction that $B$ is a cohesive group of voters such that every member of $B$ approves a candidate $c_{h}$ with $\ell < h < j$, but none of them approves $c_{\ell}$ or $c_j$. Since $B$ is represented by $W$, there is a voter $i$ in $B$ who approves some candidate $c_t \in W$ with $p < \ell$. 
By Lemma~\ref{lem:intermediate}, $i$ also approves $c_{\ell}$, a contradiction. 

Now, we will show that $W'=W \cap C_{\ell}$ satisfies JR up to $\ell$. Suppose towards a contradiction that there exists a cohesive group $B'$ of voters who jointly approve a candidate $c_{h}$ with $1 \le h \le \ell$ but is not represented by $W'$. Recall that $s(c_{h}) \le s(c_{\ell})$. 
Since $B'$ is represented by $W$, there is a voter $i \in B'$ who approves candidate $c_{j}$. However, this voter $i$ then approves $c_{\ell}$ by Lemma~\ref{lem:intermediate}, a contradiction. Thus, we have 
\begin{align*}
T[j,s] &\ge T[\ell,s-1] + \sw(\{c_j\})\\
       &\ge \sw(W')+ \sw(\{c_j\}) = \sw (W),
\end{align*}
where $T[\ell,s-1] \ge \sw(W')$ holds by the induction hypothesis.

Conversely, to show that $T[j,s] \le \sw(W)$, let $c_{\ell}$ be a candidate such that $\ell$ is permissible and $T[\ell,s-1] \neq - \infty$. Then, there is a committee $W' \subseteq C_{\ell}$ of size $s-1$ that satisfies JR up to ${\ell}$, which by the definition of permissibility implies that $W' \cup \{c_j\}$ is a committee of size $s$ that satisfies JR up to $j$. Thus, $\sw(W' \cup \{c_j\}) \le \sw (W)$, which shows that $T[j,s] \le \sw(W)$. 

\smallskip

\noindent \textbf{Case 2:} $s=k^*$. Let $W \subseteq C_j$ be a committee of size $s$ that achieves the highest social welfare under the constraints that $c_j \in W$ and $W$ satisfies JR. 
To see that $T[j,s] \ge \sw(W)$, let $c_{\ell}$ be the candidate in $W$ with maximum index $\ell < j$. Similarly to Case~1, one can show that $\ell$ is permissible and $W'=W \cap C_{\ell}$ satisfies JR up to $\ell$. Now, it remains to show that there is no cohesive group of voters such that every voter of the group approves a candidate $c_{h}$ with $h>j$ but none of them approves $c_j$. Suppose towards a contradiction that there exists such a group $N'$. Then, since $N'$ is represented by $W$, there is a voter $i$ who approves some candidate $c_p \in W$ with $p< j$. But then Lemma~\ref{lem:intermediate} implies that $i$ approves $c_j$. Thus, we obtain a contradiction, and therefore we have 
\begin{align*}
T[j,s] &\ge T[\ell,s-1] + \sw(\{c_j\})\\
       &\ge \sw(W')+ \sw(\{c_j\}) = \sw (W),
\end{align*}
where $T[\ell,s-1] \ge \sw(W')$ holds by the induction hypothesis. 

Conversely, to show that $T[j,s] \le \sw(W)$, let $c_{\ell}$ be a candidate such that $\ell$ is permissible and $T[\ell,s-1] \neq - \infty$, and suppose that no cohesive group of voters jointly approve $c_h$ with $h>j$ while unanimously disapproving $c_j$. Then, there is a committee $W' \subseteq C_{\ell}$ of size $s-1$ that satisfies JR up to ${\ell}$. 
Similarly to Case~1, it can be verified that $W' \cup \{c_j\}$ is a committee of size $s$ that satisfies JR. Thus, $\sw(W' \cup \{c_j\}) \le \sw (W)$. This shows that $T[j,s] \le \sw(W)$.

\vspace{0.3cm}
\noindent
\emph{Running time:} The number of entries in the table $T$ is polynomial, and each entry can be computed in polynomial time.
In particular, checking whether each singleton committee $\{c_j\}$ satisfies JR or JR up to $j$ can be done in polynomial time by enumerating the family of maximum sets of voters who approve a common candidate. Similarly, checking whether each $\ell < j$ is permissible can be done in polynomial time.
\end{proof}

Our algorithm for the general case consists of two phases: First, we compute a committee (possibly of size smaller 
than $k$) that provides JR for the original instance
and only contains candidates whose intervals are maximal. 
Second, we supplement this committee with highly-approved candidates. 

To formalize this idea, we introduce the notion of \emph{representatives}. 
Given a 1D-VCR instance $I=(C, \calA, k)$, we say that candidate $c^*$ 
is a \emph{representative} of candidate $c$ if the 
interval of $c^*$ strictly includes that of $c$ and is a maximal interval with this property. We 
denote the set of all representatives by $C^*$; we have 
$C^*=\{\, c \in C \mid \nexists c' \in C: [s(c),t(c)] \subsetneq [s(c'),t(c')] \,\}$. 
We denote the instance restricted to the representatives by 
$I^*=(C^*, \calA|_{C^*}, k)$. 
By Lemma~A.1 of \citet{EFI21}, if voter 
$i$ approves $c$, then $i$ also approves its representative $c^*$. 
The following lemma states that a committee that consists of representatives satisfies JR if and only if it satisfies JR for the restricted instance.

\begin{lemma}\label{lem:JR-rep}
Let $I=(C, \calA, k)$ be a 1D-VCR instance, and let $I^*=(C^*, \calA|_{C^*}, k)$ be the instance 
restricted to the representatives. Then for each $W \subseteq C^*$ 
it holds that $W \in \JR_{\le k}(I^*)$ if and only if $W \in \JR_{\le k}(I)$.  
\end{lemma}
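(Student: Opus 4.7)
The plan is to prove the two directions separately, both relying on the fact (Lemma~A.1 of \citet{EFI21}) that any voter who approves a candidate $c$ also approves every representative of~$c$. Note that $I$ and $I^*$ share the same voter set $N$, so the threshold $\nicefrac{n}{k}$ is the same in both instances, and for any candidate $c^* \in C^*$ the set of voters in $N$ who approve $c^*$ is the same in $I$ and $I^*$.

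For the ``if'' direction, suppose $W \in \JR_{\le k}(I^*)$, and let $N' \subseteq N$ be a cohesive group in $I$ with $|N'| \geq \nicefrac{n}{k}$, witnessed by some $c \in C$ that is jointly approved. If $c \in C^*$, then $c$ itself witnesses cohesiveness of $N'$ in $I^*$; otherwise, $c$ has a representative $c^* \in C^*$ whose interval strictly contains that of $c$, and by the cited Lemma~A.1 every voter in $N'$ also approves $c^*$. In either case $N'$ is cohesive in $I^*$, so $W$ represents $N'$ in $I^*$, i.e., some $i \in N'$ approves a candidate $w \in W \subseteq C^*$; since approvals to $C^*$ coincide in $I$ and $I^*$, the same voter $i$ approves $w$ in $I$, establishing representation in $I$.

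For the ``only if'' direction, suppose $W \in \JR_{\le k}(I)$, and let $N' \subseteq N$ be a cohesive group in $I^*$ with $|N'| \geq \nicefrac{n}{k}$, witnessed by some $c^* \in C^* \subseteq C$. Since the approvals of any candidate in $C^*$ are the same in $I$ and $I^*$, the same $c^*$ witnesses cohesiveness of $N'$ in $I$. Then $W$ represents $N'$ in $I$: some $i \in N'$ approves a candidate $w \in W$, and as $W \subseteq C^*$ this approval also holds in $I^*$, so $W$ represents $N'$ in $I^*$.

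There is no real obstacle beyond carefully invoking the cited Lemma~A.1 of \citet{EFI21} in the ``if'' direction to replace the witnessing candidate $c$ by its representative $c^*$; the ``only if'' direction is essentially bookkeeping since $C^* \subseteq C$ and the approval relation on $C^*$ is preserved when passing between $I$ and $I^*$.
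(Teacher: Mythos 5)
Your proof is correct and follows essentially the same route as the paper's: one direction uses that cohesiveness in $I^*$ implies cohesiveness in $I$, and the other invokes Lemma~A.1 of \citet{EFI21} to replace the witnessing candidate by a representative in $C^*$. You are merely a bit more explicit than the paper in separating the case where the witness already lies in $C^*$ and in noting that approvals of candidates in $C^*$ coincide in $I$ and $I^*$.
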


\begin{proof}
If a group of voters is cohesive in $I^*$, then it is also cohesive in $I$.
Therefore, $W\in\JR_{\le k}(I)$ implies $W\in\JR_{\le k}(I^*)$. 
Conversely, suppose that $W \in \JR_{\le k}(I^*)$. 
Lemma~A.1 of \citet{EFI21} implies that every group of voters of size at least $\nicefrac{n}{k}$ who jointly approve a candidate $c \in C$ 
also approve any of $c$'s representatives, and is therefore represented by $W$. 
Hence, $W \in \JR_{\le k}(I)$. 
\end{proof}

Lemma~\ref{lem:JR-rep} allows us to decompose our instance into its restrictions 
to $C^*$ and $C \setminus C^*$.

\begin{lemma}\label{lem:restricted}
Let $I=(C, \calA, k)$ be a 1D-VCR instance, and let $I^*=(C^*, \calA|_{C^*}, k)$ be 
the instance restricted to the representatives. Then, we have $\sw_{\JR}(I)=\alpha$, where 
\begin{align*}
\alpha = \max \{\, &\sw(W^*)+\sw(W') \mid W^* \in \JR_{\le k}(I^*),\\
&|W^*|+|W'|=k, W' \subseteq C \setminus C^* \,\}.
\end{align*}
\end{lemma}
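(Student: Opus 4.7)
The plan is to establish the two inequalities $\sw_\JR(I)\ge \alpha$ and $\sw_\JR(I)\le\alpha$ separately. The first will follow essentially directly from Lemma~\ref{lem:JR-rep}: for any pair $(W^*,W')$ realizing $\alpha$, we have $W^*\in\JR_{\le k}(I)$, so every cohesive group of size at least $n/k$ in $I$ is already represented by $W^*$; since $W^*\subseteq C^*$ and $W'\subseteq C\setminus C^*$ are disjoint and $|W^*|+|W'|=k$, the union $W^*\cup W'$ is a size-$k$ JR committee for $I$ with social welfare $\sw(W^*)+\sw(W')\le \sw_\JR(I)$.

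For the harder direction $\sw_\JR(I)\le \alpha$, I would start from a social welfare-maximal JR committee $W$ for $I$ and iteratively transform it so that $W\cap C^*$ alone provides JR for $I^*$, while preserving $|W|=k$ and not decreasing $\sw(W)$. Concretely, as long as $W\cap C^*\notin\JR_{\le k}(I^*)$, there exists a cohesive group $N'$ in $I^*$ of size at least $n/k$ that is not represented by $W\cap C^*$; because $N'$ is also cohesive in $I$ and $W$ provides JR for $I$, some $c\in W\setminus C^*$ must represent $N'$. I would then swap $c$ for one of its representatives $c^*\in C^*$.

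The main obstacle is to verify that this swap is well-defined, namely that $c^*\notin W$, so that the committee size remains $k$. This is where Lemma~A.1 of \citet{EFI21} becomes crucial: every voter approving $c$ also approves $c^*$. If $c^*$ were already in $W$, then $N'$ would be represented by $c^*\in W\cap C^*$, contradicting the choice of $N'$. The same property shows that the swap preserves JR for $I$ and does not decrease social welfare, since $c^*$ is approved by a superset of the voters approving $c$. Each swap strictly enlarges $|W\cap C^*|$, so the process terminates in at most $k$ steps. At termination, Lemma~\ref{lem:JR-rep} gives $W^*:=W\cap C^*\in\JR_{\le k}(I^*)$, and setting $W':=W\setminus C^*$ yields a decomposition certifying $\alpha\ge \sw(W^*)+\sw(W')\ge \sw_\JR(I)$.
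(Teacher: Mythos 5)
Your proposal is correct and takes essentially the same approach as the paper: the easy direction is identical, and for the hard direction the paper relies on the same key exchange---replacing a candidate in $W\setminus C^*$ by a representative in $C^*$, which by Lemma~A.1 of \citet{EFI21} is approved by a superset of voters, hence preserves JR and weakly increases welfare. The only difference is packaging: the paper performs this exchange via a one-shot extremal argument (choosing a welfare-optimal JR committee maximizing $|W\cap C^*|$ and deriving a contradiction), whereas you run an explicit iterative swap loop with a termination measure; both are valid.
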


\begin{proof}
We first argue that $\sw_{\JR}(I) \le \alpha$. 
Let $\mathcal{W}= \{W\subseteq C: |W|=k, \sw(W)=\sw_\JR(I)\}$, and pick a committee $W\in\mathcal W$ such that $|W\cap C^*|\ge |W'\cap C^*|$
for all $W'\in\mathcal W$.
Let $W^* =W \cap C^*$. Observe first that for each $c \in C \setminus C^*$, 
if $c \in W$, then some representative $c^*$ of $c$ also belongs to $W$. 
Indeed, if this is not the case. then replacing $c$ with one of its representatives $c^*$
preserves justified representation, weakly increases the social welfare,
and we have $|W\cap C^*|<((W\setminus\{c\})\cup\{c^*\})\cap C^*$, 
a contradiction with our choice of $W$.
Thus, every voter $i$ who approves a candidate $c \in W$ has an approved candidate $c^*$ in $W^*$. 
It follows that $W^*$ satisfies JR for $I$, and by Lemma~\ref{lem:JR-rep} also for $I^*$.
Hence, $\sw_{\JR}(I)$ can be written as follows: 
\[
\sw_{\JR}(I) = \sw(W) = \sw(W^*) + \sw(W \setminus W^*),
\]
where $W^* \in \JR_{\le k}(I^*)$ and $W \setminus W^* \subseteq C \setminus C^*$. 
This implies that $\sw_{\JR}(I) \le \alpha$. 

Conversely, to see that $\sw_{\JR}(I) \ge \alpha$, consider $W^* \in \JR_{\le k}(I^*)$ 
and $W' \subseteq C \setminus C^*$ such that $|W^*|+|W'|=k$ and $\sw(W^*)+\sw(W')=\alpha$. 
Let $W=W^* \cup W'$. Then, $W$ satisfies the JR requirement of the original instance $I$ by 
Lemma~\ref{lem:JR-rep}. Thus, $\sw_{\JR}(I) \ge \sw(W)$.
This implies that $\sw_{\JR}(I) \ge \alpha$. 
\end{proof}

We can now use Lemma~\ref{lem:restricted} in order 
to prove Theorem~\ref{thm:1D} by iterating over the size of $W^*$. 

\begin{proof}[Proof of Theorem \ref{thm:1D}]
Let $I=(C, \calA, k)$ be a 1D-VCR instance and let $I^*=(C^*, \calA|_{C^*}, k)$ 
be the instance restricted to the representatives. By Lemma~\ref{lem:restricted}, we can proceed as follows. We try all values of $k^*$ in $\{0,1,\dots,\min\{|C^*|,k\}\}$.
For each value of $k^*$, 
since the intervals of the candidates in $C^*$ are 
not properly contained in each other,  
we can use Lemma~\ref{lem:representative} 
to compute a committee $W^*$
that maximizes the social welfare subject to the requirements that 
$W^* \in \JR_{\le k}(I^*)$ and $|W^*|=k^*$ 
(or determine that no such committee exists).
We can then find a committee $W' \subseteq C \setminus C^*$ 
of size $k-k^*$ maximizing the social welfare among such committees, 
by choosing $k-k^*$ candidates with the highest number of approvals. 
For each value of $k^*$ for which $W^*$ exists, we evaluate
$\sw(W^*\cup W')$, and return the best of these committees.
\end{proof}

 \begin{figure*}
 \centering
      \begin{subfigure}{0.33\textwidth} \centering
     \includegraphics[scale =0.195]{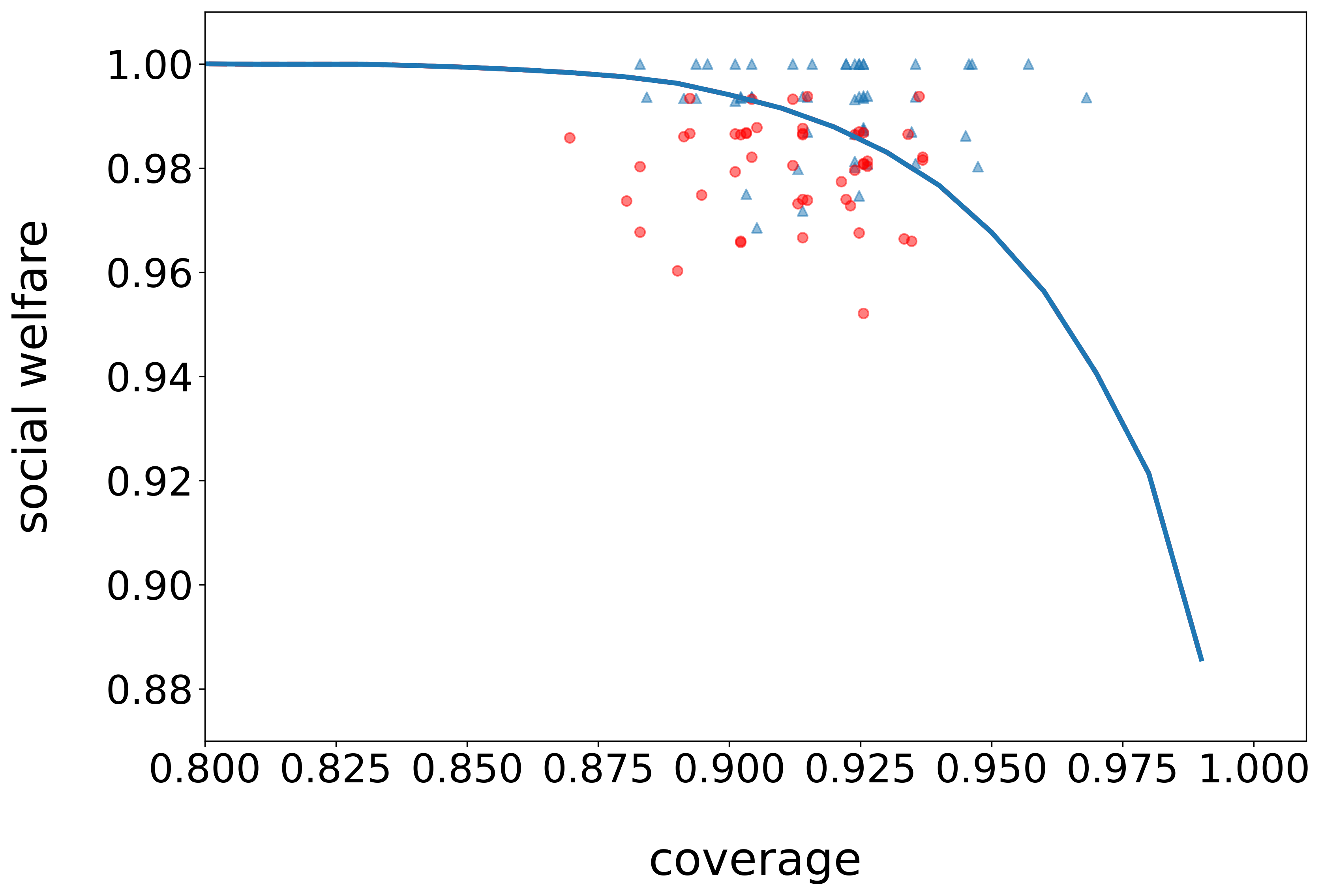}\caption{$0.1$-IC Model}\label{fig:topIC}
     \end{subfigure}
     \begin{subfigure}{0.33\textwidth}\centering
     \includegraphics[scale = 0.195]{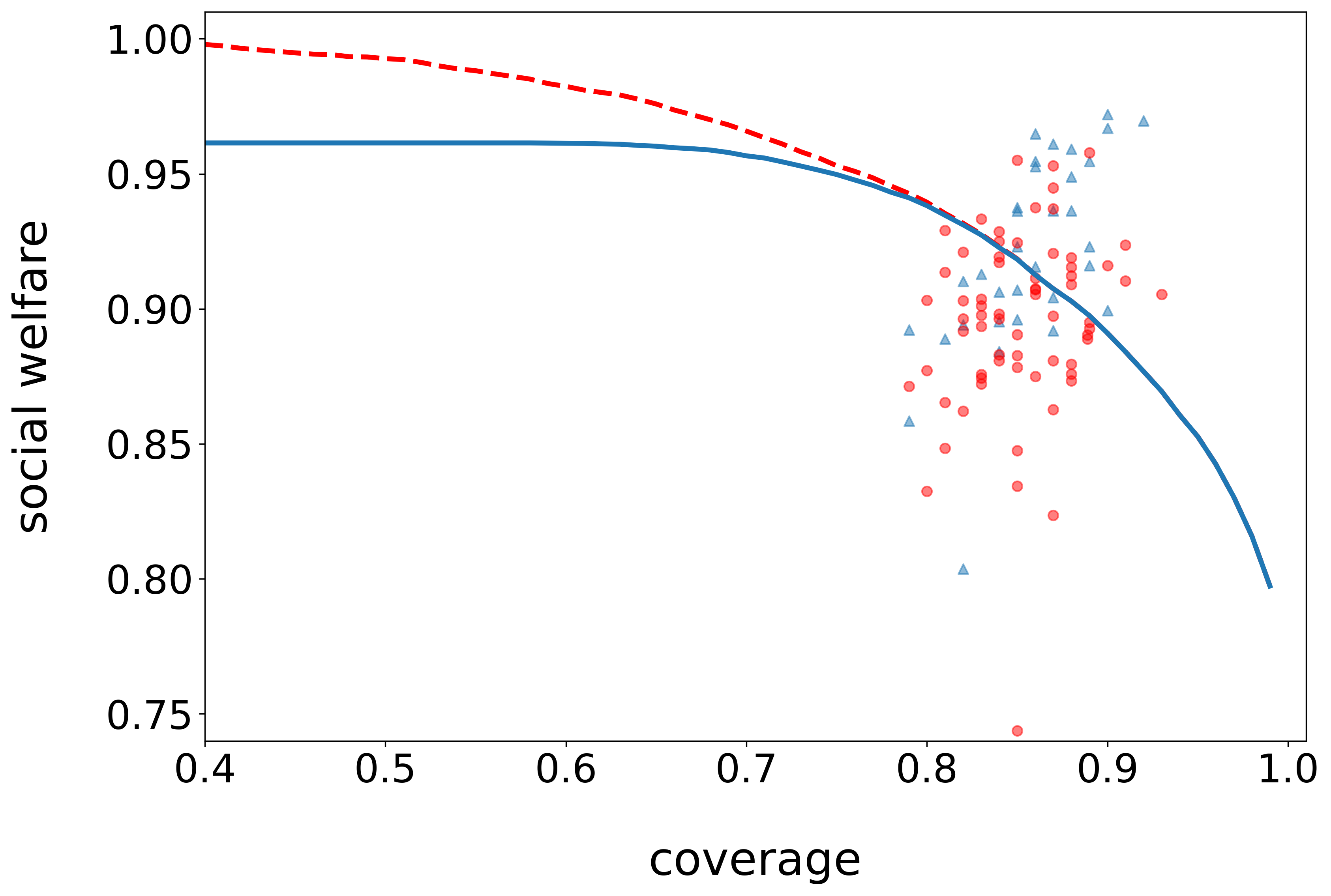}\caption{$0.054$-1D Euclidean Model}\label{fig:top1D}
     \end{subfigure}
     \begin{subfigure}{0.33\textwidth}\centering
     \includegraphics[scale=0.195]{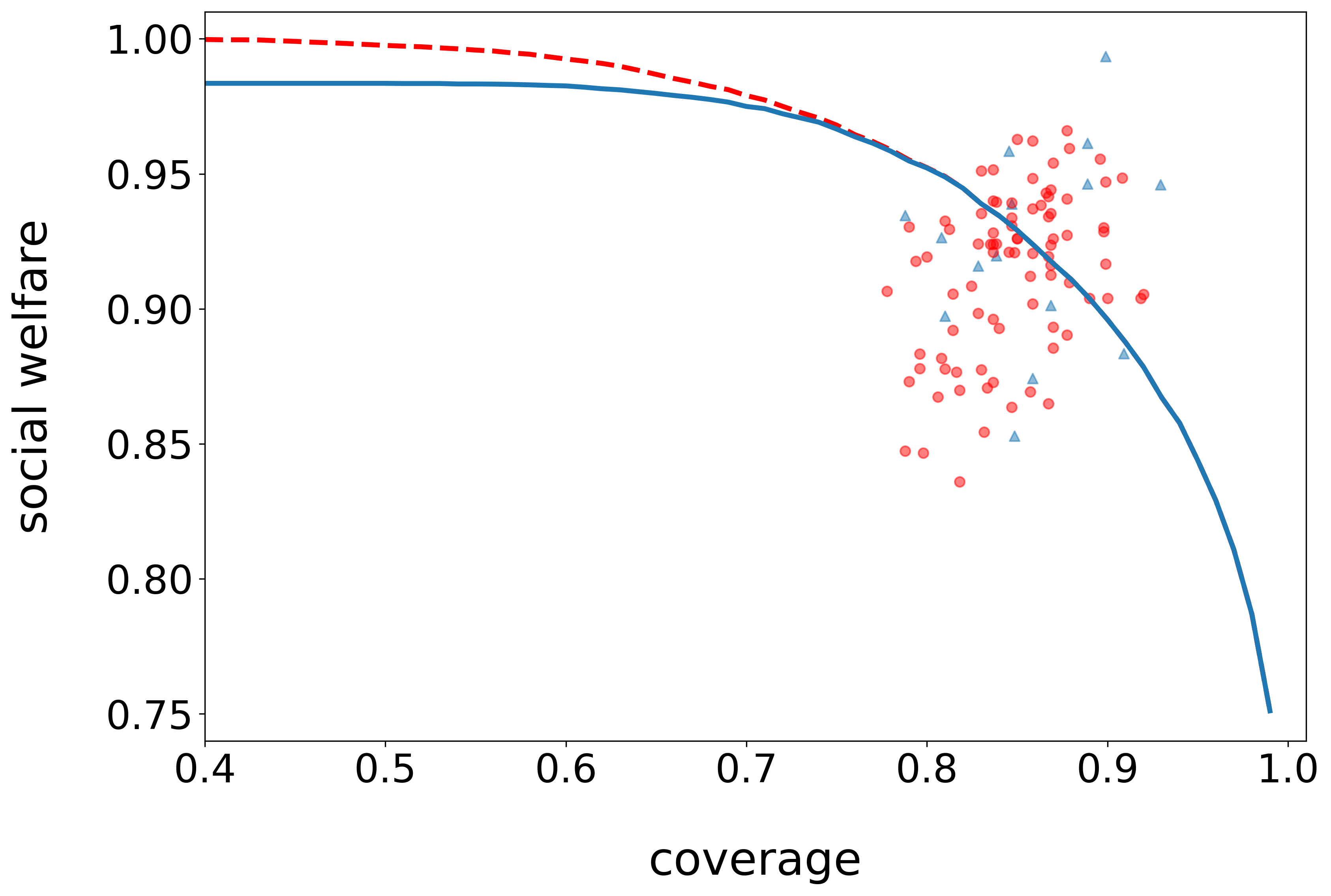}\caption{$0.195$-2D Euclidean Model}\label{fig:top2D}
     \end{subfigure}
     
    \begin{subfigure}{0.33\textwidth} \centering
     \includegraphics[scale =0.195]{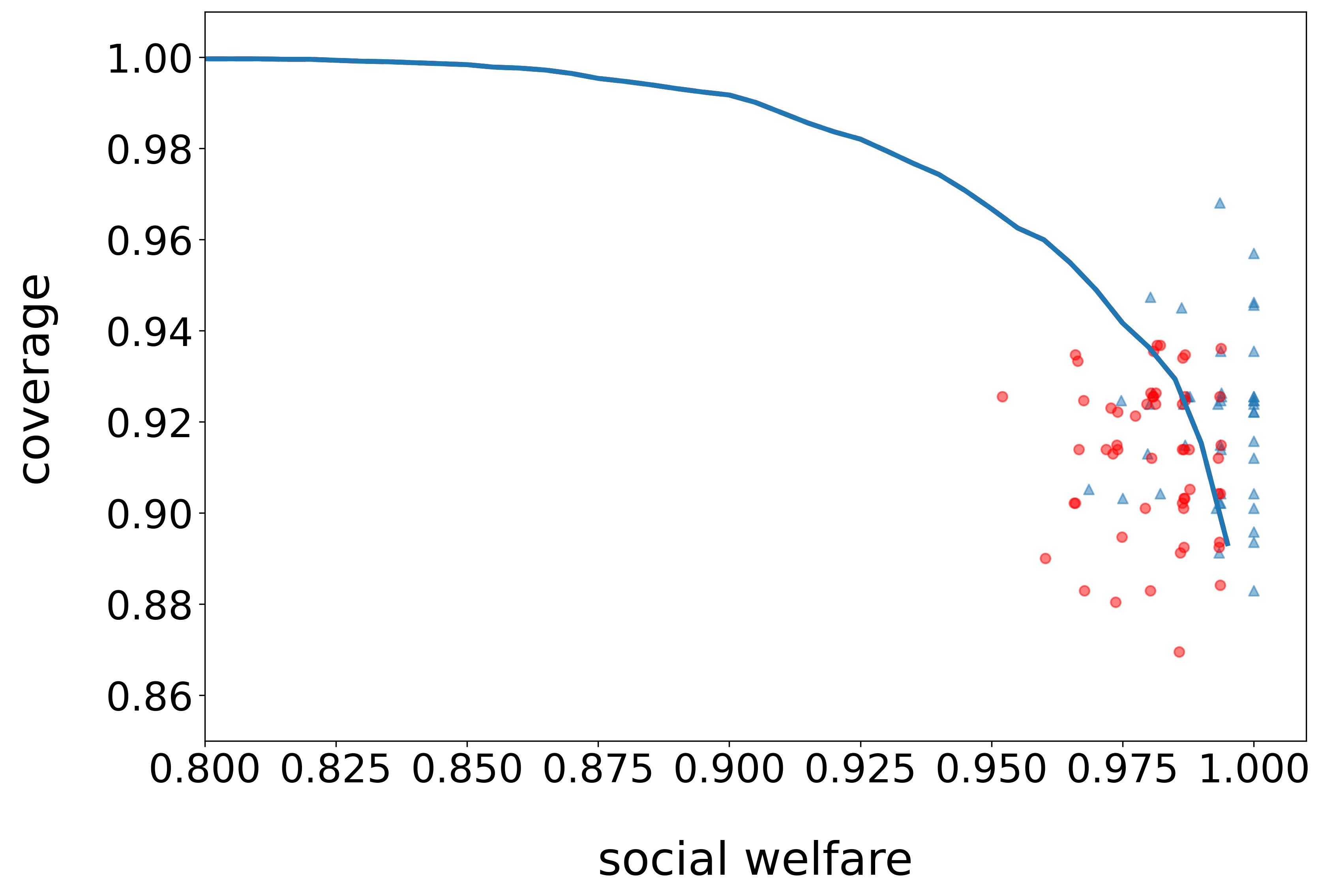}\caption{$0.1$-IC Model}\label{fig:bottomIC}
     \end{subfigure}
     \begin{subfigure}{0.33\textwidth}\centering
     \includegraphics[scale = 0.195]{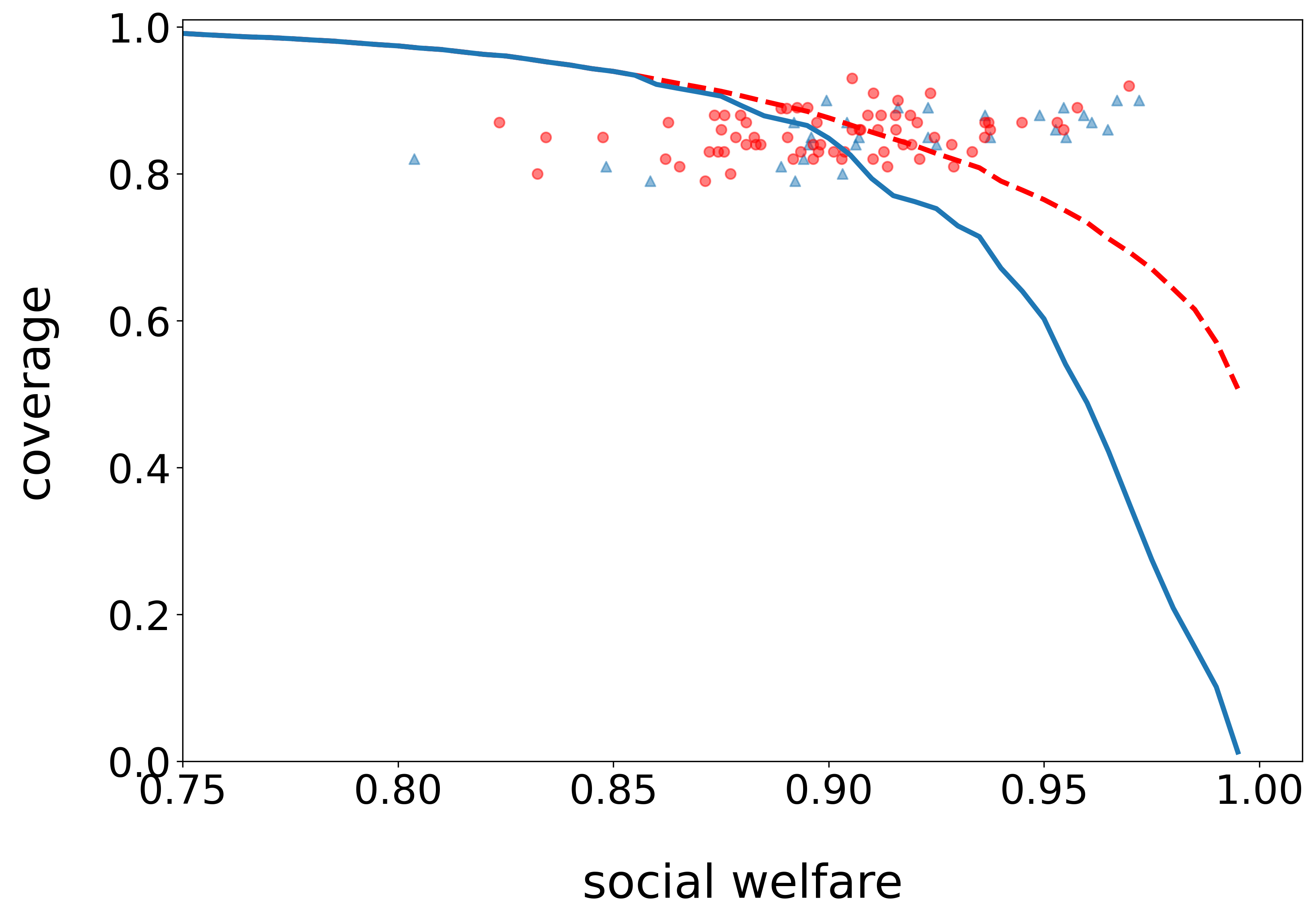}\caption{$0.054$-1D Euclidean Model}\label{fig:bottom1D}
     \end{subfigure}
     \begin{subfigure}{0.33\textwidth}\centering
     \includegraphics[scale=0.195]{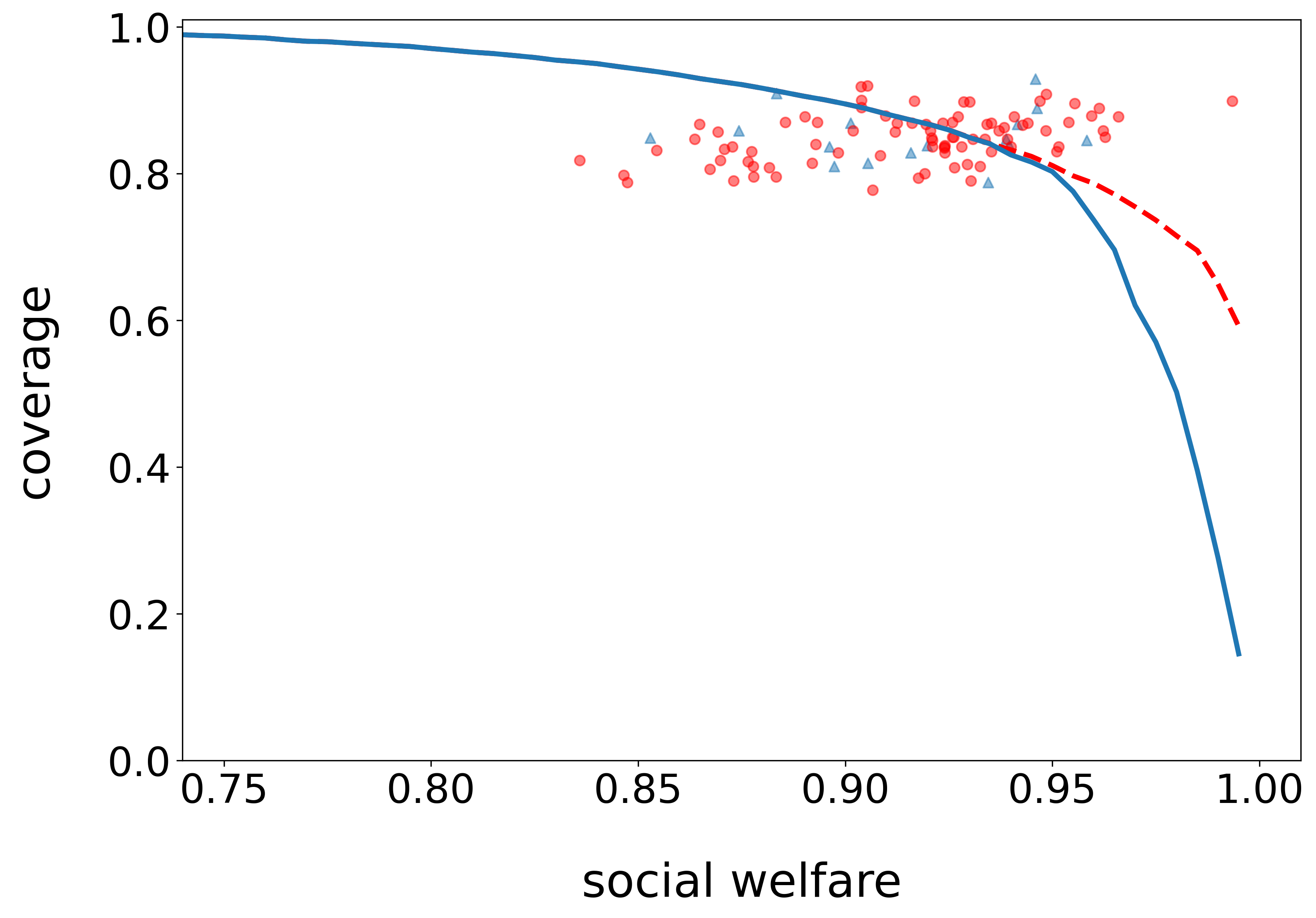}\caption{$0.195$-2D Euclidean Model}\label{fig:bottom2D}
     \end{subfigure}
     \caption{In the top row (resp., bottom row) we indicate the function $f$ (resp., $g$) by a blue solid line and the benchmark line (where we drop the JR requirement) by a red dashed line.  
     Dots and triangles indicate the committees computed by the greedy algorithm: 
     a committee is indicated by a blue triangle if it lies on the Pareto curve, and by a red dot otherwise.} \label{fig:pareto_curves1}
\end{figure*}

\section{Experimental Evaluation} \label{sec:experiment}

We analyze the trade-off between maximizing social welfare and maximizing coverage while requiring JR in randomly generated instances. 
We illustrate this trade-off in the form of Pareto curves, as depicted in Figure~\ref{fig:pareto_curves1}. In addition, we explore how well a variant of GreedyCC performs in selecting committees that are close to these Pareto curves. 

\paragraph{Setup} We consider three different models for generating random elections. All take as input the number of voters $n$, the number of candidates $m$, and one additional parameter. 
\begin{itemize}
\item The {\em impartial culture (IC)} model is parameterized by $p \in [0,1]$. Each candidate is approved by every voter independently with probability $p$. 
\item The {\em $n$-dimensional Euclidean} model takes as input a parameter $r \in [0,\sqrt{n}]$, also referred to as the \emph{radius}. Every voter $i$ and every candidate $c$ is associated with points $x_i, x_c \in [0,1]^n$, respectively, which are drawn uniformly at random. Candidate $c$ is approved by voter $i$ iff $d(x_i,x_c) \leq r$, where $d(\cdot,\cdot)$ is the Euclidean distance. We utilize the 1-dimensional (1D) and 2-dimensional (2D) Euclidean model in our experiments. 
\end{itemize} 

All three models are frequently used in the literature
(see, e.g., \citealp{BFKN19,GBSF21}).
We focus on elections with parameters $n=m=100$ and $k=10$. To make the results for the three models comparable, we chose the parameters $p$ and $r$ so that on average each voter approves $\nicefrac{n}{k}=10$ candidates. \citet{BFKN19} observe that, for the above models, JR is especially demanding for this ratio. This leads to parameters $p=0.1$ (for IC), $r=0.054$ (for 1D) and $r = 0.195$ (for 2D). 

Part of our implementation builds upon code contributed by Andrzej Kaczmarczyk \citep{BFKN19}. Our code can be found at https://github.com/Project-PRAGMA/PriceOfJR-AAAI-2022.

\paragraph{Trade-off between social welfare and coverage under JR} The Pareto curves depicted in the top row of Figure~\ref{fig:pareto_curves1} are created as follows. For each of the three models we sample $100$ instances. For a given instance $I$, we iterate over $\alpha \in [0,1]$ (in $0.01$ increments) and compute a size-$k$ committee $W_{I,\alpha}$ so as to maximize the social welfare subject to the constraint that at least $\alpha \cdot \cov(I)$ coverage is achieved and JR is satisfied. 
To compute $W_{I, \alpha}$, we use an IP formulation. 
Then, for each $\alpha$ we define $f(\alpha)$ as the average of $\sw(W_{I,\alpha})/\sw(I)$ over all generated  instances of the model and illustrate $f$ by a blue line in \Cref{fig:topIC,fig:top1D,fig:top2D}. For comparison, we repeat this process without requiring committees to provide JR, and depict the resulting curve by a red dashed line. 

The bottom row of Figure~\ref{fig:pareto_curves1} is created by exchanging the roles of the two objectives. That is, for $\alpha \in [0,1]$ (with $0.005$ increments) the committee $W_{I,\alpha}$ is chosen so as to maximize coverage subject to the constraints that the social welfare is at least $\alpha \cdot \sw(I)$ and JR is satisfied. There is one important difference to the previous experiments, namely, that for large values of $\alpha$, such a committee need not exist since high social welfare and JR can be incompatible. In these cases, we set $W_{I,\alpha}=\varnothing$. Then, we compute $g(\alpha)$ as the average of $\cov(W_{I,\alpha})/\cov(I)$ over $100$ instances, and illustrate it by a blue line in the bottom row of  Figure~\ref{fig:pareto_curves1}. Again, for comparison we drop the JR constraint and repeat the computation; the result is depicted by a red dashed line. Observe that the two figures in each column are not symmetric.

\paragraph{Greedy Heuristic}
In order to find committees that are close to the Pareto curves, is it necessary to solve an IP? To answer this question, we implement a greedy heuristic and analyze the distance from its selected committees to the Pareto curves. This heuristic  proceeds as follows. 

First, the algorithm computes $\sw(I)$ and a greedy estimate of $\cov(I)$, which is done by computing a committee $\overline{W}$ returned by GreedyCC. Then, the algorithm starts with an empty committee $W=\varnothing$, and in each iteration decides between one of two steps: In a \emph{coverage step}, the algorithm selects the next candidate $c$ so as to maximize $\cov(W \cup \{c\})$; among all such candidates, it chooses one with maximum approval score. In a \emph{social welfare step}, the algorithm selects a candidate $c$ with maximum approval score; among all such candidates $c$ it chooses one that maximizes $\cov(W \cup \{c\})$. 
As long as JR is not satisfied, the greedy algorithm performs coverage steps. After that, it compares $\cov(W)/\cov(\overline{W})$ with $\sw(W)/\sw(I)$. If the former is smaller, it performs a coverage step; otherwise it performs a social welfare step. 

After termination, we compute the exact approximation ratio of $W$, i.e., the point $(\cov(W)/\cov(I),\sw(W)/\sw(I))$ and indicate it by a blue triangle or a red dot in \Cref{fig:topIC,fig:top1D,fig:top2D}. More precisely, we depict the point by a red dot if it is not on the Pareto curve of the instance $I$. That is, we check whether there exists a committee $W'$ achieving coverage at least $\cov(W)$ and social welfare strictly greater than $\sw(W)$. In this case we also note the distance $(\sw(W') - \sw(W))/\sw(I)$ for the optimal $W'$. Otherwise, the point lies on the Pareto curve and is indicated by a blue triangle. For the bottom row of Figure~\ref{fig:pareto_curves1} we perform the analogous procedure.

\paragraph{Maximizing social welfare with respect to $\alpha$-coverage}

Consider the top row of Figure~\ref{fig:pareto_curves1}. For the impartial culture model with approval probability $p=0.1$ (\Cref{fig:topIC}), the restriction to JR committees has no effect as the two Pareto curves coincide. Also, the trade-off between social welfare and coverage does not appear to be very strong. Even when demanding optimal coverage, we can obtain (on average) $88\%$ of optimal social welfare. Regarding the performance of the greedy algorithm, $52\%$ of the committees are positioned on the Pareto curve of their instance. Among the remaining points, the average distance towards their respective Pareto curve (along the $\sw$-axis) is $0.012$. \footnote{We observe a grid-like pattern of the greedy points along both axes. See \Cref{app:experiments} for an explanation of this effect.}

The trade-off between the two objectives becomes more apparent for the Euclidean models (\Cref{fig:top1D,fig:top2D}). Even without any constraints regarding the coverage of the committee, the constraint that the committee has to provide JR induces a social welfare loss of roughly $4\%$ (for the 1D model) and $1.5\%$ (for the 2D model). Although the absolute loss, as compared to the best achievable social welfare, becomes larger for higher values of $\alpha$, namely $21.5\%$ for the 1D model and $25\%$ for the 2D model, the distance from the benchmark line diminishes. 
Regarding the performance of the greedy heuristic, $33\%$ (for 1D) and $15\%$ (for 2D) of the committees are positioned on the Pareto curve of their instance. The average distance of the remaining points from their respective Pareto curve is $0.022$ (for 1D) and $0.024$ (for 2D).

\paragraph{Maximizing coverage with respect to $\alpha$-social welfare} \Cref{fig:bottomIC} shows that, again, for the IC model, the JR constraint has no impact on maximizing coverage under the constraint that an $\alpha$-fraction of social welfare is achieved. Regarding the performance of the greedy heuristic, $45\%$ of the committees are positioned on the Pareto curve of their instance. For the remaining points, the distance towards the Pareto curve (along the $\cov$-axis) is $0.026$ on average. 
For the 1D model (\Cref{fig:bottom1D}), requiring JR has no impact on the maximum coverage achievable up to a social welfare fraction of $85\%$. Starting from there, more and more instances do not admit a committee that provides JR while attaining sufficient social welfare.
For the 2D model (\Cref{fig:bottom2D}) the situation is similar, though JR becomes restrictive only for $\alpha \geq 0.94$. Considering the greedy heuristic, $29\%$ (for 1D) and $16\%$ (for 2D) of the committees are positioned on the Pareto curves of their instance. For the remaining points, the average distance from their Pareto curve (along the $\cov$-axis) is $0.036$ and $0.038$ for the 1D and 2D model, respectively.

\paragraph{Conclusions of experiments} In contrast to our theoretical results, the experiments show that in practice, providing JR is often compatible with high social welfare. Moreover, for all three models, if we require at least $80\%$ coverage, demanding JR on top does not lead to additional social welfare loss. 
Our greedy heuristic performs well in finding committees that are close to the Pareto curve: For all models, more than $20\%$ of the committees are optimal and the remaining ones are close to the Pareto curve on average. 

\Cref{app:experiments} contains a comparison to experiments from the literature and details on the computational infrastructure.

\section{Conclusions and Future Work}
\label{sec:conclusion}

In this paper, we have provided an extensive analysis of the impact of the justified representation (JR) axiom on both the (utilitarian) social welfare and the coverage, both from a worst-case perspective and from an algorithmic perspective, and complemented our theoretical results with an empirical analysis. 
While we have shown that the price of JR is rather high in the worst case, our experimental results demonstrate that it is often low in practice---this implies that in many cases JR can be achieved in conjunction with traditional desiderata.
In addition, although maximizing the social welfare under JR is hard even to approximate, an exact solution can be found efficiently when the preferences admit a one-dimensional representation.

For future work, it would be interesting to perform a similar analysis for the EJR axiom, as well as other proportionality axioms such as PJR \citep{FEL17} and the core \citep{ABC17}; Theorem~\ref{thm:ejr} is the first step in this direction.
Furthermore, strengthening \Cref{thm:overall-price-sw} so that it matches the $\nicefrac{\sqrt{k}}{2}$ lower bound for all committee sizes $k$ would be useful as well.

\section*{Acknowledgments}

This project has received funding from the European 
    Research Council (ERC) under the European Union’s Horizon 2020 
    research and innovation programme (grant agreement No 101002854), 
    from the Deutsche Forschungsgemeinschaft under grant
BR 4744/2-1, from JST PRESTO under grant number JPMJPR20C1, and from an NUS Start-up Grant.
We would like to thank the anonymous reviewers for their valuable comments.
\vspace{0.5mm}
\begin{center}
    \noindent \includegraphics[width=2.5cm]{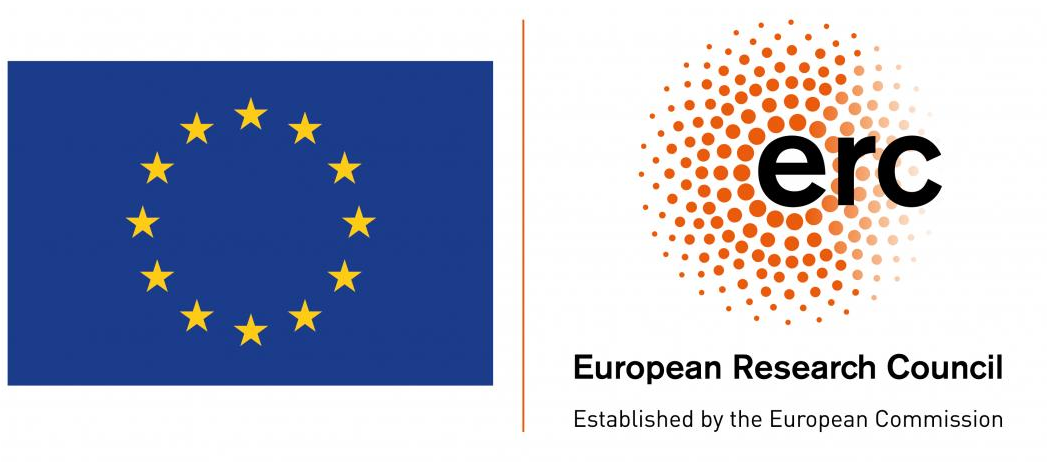}
\end{center}

\bibliography{main}

\appendix
\section{Omitted Examples for Section~\ref{sec:price}}\label{app:price}

\begin{example}[\citet{LS20}]\label{ex:av}
  For each $k>1$ such that $k=s^2$ for some $s\in\mathbb{N}$, consider
  an instance $I=(C, \calA, k)$ where $\calA=(A_1, \dots, A_k)$,
  $C=\{a_1, \dots, a_k, b_1, \dots, b_{k-s}\}$,
  $A_1=\dots=A_s=\{a_1, \dots, a_k\}$ and $A_{s+i}=\{b_i\}$ for
  $i=1, \dots,k-s$.  Clearly, the committee $W = \{a_1, \dots, a_k\}$
  maximizes the utilitarian social welfare. However, the justified
  representation axiom requires that each of the voters
  $s+1, \dots, k$ has one of her approved candidates in the committee,
  so a committee that provides JR has to contain candidates
  $b_1, \dots, b_{k-s}$, and hence at most $s$ candidates from $W$.
  The utilitarian social welfare of every such committee is at most
  $k+s(s-1)=2k-\sqrt{k}$, whereas the utilitarian social welfare of
  $W$ is $sk=k\sqrt{k}$.
\end{example}

\begin{example}[\citet{LS20}]
\label{ex:coverage-EJR}
  For each even $k$, we construct an election where every EJR committee achieves at most $\nicefrac{3}{4}+O(\nicefrac{1}{k})$ 
  fraction of the optimal coverage.
  Let $k = 2t$ be the committee size and consider an election with
  candidate set $\{a_1, \ldots, a_{\nicefrac{k}{2}}\} \cup \{b_1, \ldots b_k\}$ and a collection of $n = 2k$ voters such that:
  \begin{enumerate}
      \item each of the first $k$ voters approves all of the candidates $\{a_1, \ldots, a_{\nicefrac{k}{2}}\}$, and
      \item each of the following $k$ voters approves a distinct candidate among $\{b_1, \ldots, b_k\}$.
  \end{enumerate}
  Note that $\nicefrac{n}{k} = 2$. Further,
  the first $k$ voters form a $\nicefrac{k}{2}$-cohesive group of
  size $\frac{k}{2} \cdot \frac{n}{k} = k$. As a consequence, every
  EJR committee must include all the candidates from the set
  $\{a_1, \ldots, a_{\nicefrac{k}{2}}\}$. As each of the remaining candidates covers a distinct voter, altogether every EJR committee covers exactly $k+\nicefrac{k}{2} = 1.5k$ voters. 
  On the other hand, by choosing one of the candidates from $\{a_1, \ldots, a_{\nicefrac{k}{2}}\}$ and all but one candidate from  $\{b_1,\ldots, b_k\}$, we cover $2k-1$ voters. Thus the coverage price of EJR in this election is $\nicefrac{1.5k}{2k-1} \approx \nicefrac{3}{4}$.  
  
\end{example}

\section{Addendum for Theorem~\ref{thm:overall-price-sw}}
\label{app:explicit-bound}

Theorem~\ref{thm:overall-price-sw} shows the existence of a committee size $k'$ such that  
  $\calP_\sw(k)\le \frac{1}{\beta}(1+\sqrt{k})$ for all $k\ge k'$, but does not specify an explicit value of $k'$.
Here, we show that the theorem holds whenever
$k \ge \frac{25}{\left(1 - \frac{\beta^2}{4}\right)^2}$; note that the latter quantity is at least $25$.
To see this, observe that we need two conditions for $k$ in the proof of the theorem.
The first condition, at the end of the second paragraph, is that $k \ge 2x\sqrt{k}+1$.
This holds because when $k \ge 25$, we have
\[
k \ge 4\sqrt{k} + 1 \ge 2x\sqrt{k} + 1.
\]
The second condition, denoted by inequality~\eqref{eq:sw:jr} in the proof, is that 
\[
-1 - \alpha x^2 + x +\beta + \frac{x}{\sqrt{k}}   \leq \sqrt{k}\big(\alpha x^2 - \alpha \beta x  + 1 \big).
\]
This is true because
\begin{align*}
\sqrt{k}\big(\alpha x^2 - \alpha \beta x  + 1 \big)
&\ge \sqrt{k}\left(1 - \frac{\beta^2}{4}\right) \\
&\ge 5 \\
&\ge - 1 - \alpha x^2 + 6 \\
&\ge - 1 - \alpha x^2 + x + \beta + \frac{x}{\sqrt{k}}.
\end{align*}
Here, the first inequality follows from our derivation near the end of the proof, the second inequality from $k \ge \frac{25}{\left(1 - \frac{\beta^2}{4}\right)^2}$, and the last inequality from the fact that $x$, $\beta$, and $\frac{x}{\sqrt{k}}$ are all at most $2$.

\section{Addendum for \Cref{sec:experiment}} \label{app:experiments}

\paragraph{Comparison to other experiments in the literature}
\citet{LS20} study the \emph{utilitarian guarantee} for the Chamberlin--Courant (CC) voting rule, i.e., the rule that aims to maximize coverage, as well as the \emph{representation guarantee} of Approval Voting (AV), i.e., the rule that aims to maximize social welfare, averaged over randomly generated elections. 
Though these experiments are similar to ours, we point out three main differences: 
\begin{enumerate}[(i)]
\item Both of their guarantees are worst-case bounds. Specifically, for a given instance, the utilitarian guarantee of CC is the ratio of the smallest social welfare of a CC committee and the maximum possible social welfare, and similarly for the representation guarantee of AV. This is in contrast to our experiments, where, for a fixed fraction of optimal coverage, we compute the highest social welfare that can be achieved.
\item In our experiments, we focus on committees providing JR. This is in contrast to the representation guarantee for AV, as AV does not provide JR. 
\item As \citet{LS20} focus on evaluating the utilitarian and representation guarantees of several voting rules (also beyond CC and AV), they only consider the extreme scenarios, i.e., how much social welfare (coverage) one can guarantee if one requires maximum coverage (social welfare). In contrast, we are interested in the entire Pareto curve.
\end{enumerate}

One of the experiments carried out by \citet{BFKN19} is closely related to ours. For a fixed election, they compute all combinations of social welfare and coverage achievable by committees in this election, and also indicate which of these combinations 
are achievable by a committee providing JR. In contrast, we aggregate our results over multiple elections, thus making them more robust. 

\citet{KKE+19a} carry out experiments which are conceptually similar to ours though the setting differs. More precisely, they consider committee selection under ordinal preferences, and are interested in the trade-off between the objective functions of several committee scoring rules. To this end, their paper illustrates Pareto curves for any combination of two objective functions, resulting in graphs that are similar to the ones in Figure~\ref{fig:pareto_curves1}.

\paragraph{Explanation of grid-like pattern in \Cref{fig:topIC}}

For the $\sw$-axis, we explain the effect as follows (a similar explanation works for the $\cov$-axis). Recall that the social welfare coordinate of each point is calculated as $\sw(W)/\sw(I)$, where $W$ is the output of the greedy heuristic. For each instance $I$, $\sw(W)$ can be written as $\sw(I) - \Delta_W$, where $\Delta_W \in \mathbb{N}$ is the loss of the greedy heuristic. Our first observation is that for the $0.1$-IC model, the value $\sw(I)$ does not change much across different instances. As a result $(\sw(I) - \Delta_W)/\sw(I)$ is very close to $(\sw(I') - \Delta_{W'})/(\sw(I'))$ for another instance $I'$ whenever $\Delta_W = \Delta_{W'}$. Otherwise, the two values are separated by a gap. Hence, each of the layers depicts instances with equal $\Delta_W$. This effect is strongest when $\Delta_W = 0$ and diminishes as $\Delta_W$ becomes larger.

\paragraph{Computational Infrastructure} The experiments were performed on a system with 3.4 GHz Quad-Core Intel Core i5 CPU, 8GB RAM,
and macOS 11.4 operating system. 
Python 3.8.8 was used for implementation and the libraries matplotlib 3.3.4, numpy 1.20.1, and pandas 1.2.4 were used. For solving integer programs we used gurobi 9.1.2.

\end{document}